\documentclass[11pt,a4paper,reqno]{amsart}
\usepackage{amsmath,amssymb,amsthm,amsfonts}
\usepackage[export]{adjustbox}
\usepackage{amsbsy}
\usepackage[utf8]{inputenc}
\usepackage[normalem]{ulem}
\usepackage{xspace}
\usepackage{cancel}
\usepackage[greek,english]{babel}
\usepackage{url}
\usepackage{wrapfig}
\usepackage{enumitem}
\usepackage{multicol}
\usepackage{moreenum}
\usepackage{xcolor}
\usepackage{longtable}
\usepackage{array}

\usepackage[pagewise]{lineno} 
\newcounter{dummy}
\makeatletter
\newcommand\myitem[1][]{\item[#1]\refstepcounter{dummy}\def\@currentlabel{#1}}
\makeatother
\usepackage{subcaption}
\usepackage{comment}
\usepackage{tikz}
\usetikzlibrary{backgrounds}
\usepackage[bottom=2.5cm,top=2.5cm, left=2.5cm, right=2.5cm]{geometry}

\usepackage{multirow}
\usepackage{array,multirow}

\definecolor{LinkColor}{rgb}{0,0,0} 
\usepackage[colorlinks=true,linkcolor=LinkColor,citecolor=LinkColor,urlcolor= LinkColor, naturalnames, hyperindex, pdfstartview=FitH, bookmarksnumbered, plainpages]{hyperref}
\usepackage{cleveref}

\makeatletter
\newcommand{\slunlhd}{%
	\mathrel{\mathpalette\sl@unlhd\relax}%
}

\newcommand{\sl@unlhd}[2]{%
	\sbox\z@{$#1\lhd$}%
	\sbox\tw@{$#1\leqslant$}%
	\dimen@=\ht\tw@
	\advance\dimen@-\ht\z@
	\ifx#1\displaystyle
	\advance\dimen@ .2pt
	\else
	\ifx#1\textstyle
	\advance\dimen@ .2pt
	\fi
	\fi
	\ooalign{\raisebox{\dimen@}{$\m@th#1\lhd$}\cr$\m@th#1\leqslant$\cr}%
}
\makeatother

\newtheorem{innercustomthm}{Theorem}[section]

\crefname{innercustomthm}{Theorem}{Theorems}

\newtheorem{theorem}{Theorem}[section]

\newtheorem{lemma}[theorem]{Lemma}

\newtheorem{proposition}[theorem]{Proposition}

\theoremstyle{definition}
\newtheorem{definition}[theorem]{Definition}
\newtheorem{remark}[theorem]{Remark}

\author[S. Chahal]{Seema Chahal}
\address{(Seema Chahal) Department of Mathematics, Indian Institute of Technology Roorkee, Roorkee (Uttarakhand)-247667, India.}
\email{\href{mailto:seema_r@ma.iitr.ac.in}{seema\_r@ma.iitr.ac.in}}
\author[S. Antil]{Seema Antil}
\address{(Seema Antil) Department of Mathematics, 
	Indian Institute of Technology Ropar, Ropar (Punjab)-140001, India.}
\email{\href{mailto:seema.22maz0010@iitrpr.ac.in}{seema.22maz0010@iitrpr.ac.in}}
\author[S. Maheshwary]{Sugandha Maheshwary}
\address{(Sugandha Maheshwary) Indian Institute of Technology Roorkee, Roorkee (Uttarakhand)-247667, India.}
\email{\href{mailto:msugandha@ma.iitr.ac.in}{msugandha@ma.iitr.ac.in}}
\author[M. Khan]{Manju Khan}
\address{(Manju Khan) Department of Mathematics, 
	Indian Institute of Technology Ropar, Ropar (Punjab)-140001, India.}
\email{\href{mailto:manju@iitrpr.ac.in}{manju@iitrpr.ac.in}}

\thanks{The second-named author gratefully acknowledges the financial support provided by the Ministry of Eduction, Government of India. The second author also acknowledges the partial support from the FIST program of the Department of Science and Technology, Government of India, Reference No. SR/FST/MS-I/2018/22(C). The third-named author gratefully acknowledges the support by Science  \& Engineering Research Board (SERB),  DST (Department of Science and Technology), India (SRG/2023/000180), The fourth-named author acknowledges the financial support of ANRF (File no. MTR/2022/000616).}

\keywords{finite chain ring, skew polynomials, skew constacyclic codes, skew negacyclic codes.}

\subjclass[2010]{16S36, 94B05, 94B15, 94B60}
\date{}

\title{Skew Constacyclic Codes Of Length $np^s$ over $ \frac{\mathbb{F}_{p^m}[u]}{\langle u^k \rangle}$ }
\begin{document}
	\maketitle
	\begin{abstract}
		Let $\mathbb{F}_{p^m}$ be the field containing $p^m$ elements where $p$ is an odd prime and $m \in \mathbb{N}$.
		In this article, we propose a unified approach to the study of skew constacyclic codes of length $np^s$ over the ring $R_k = \mathbb{F}_{p^m}[u]/\langle u^k \rangle,$ where $n, s, k \in \mathbb{N}$  and $\gcd(n, p)=1$. 
		Consider the skew polynomial ring $R_k[x;\Theta]$, where
		$\Theta$ is an automorphism of $R_k$ such that $xa = \Theta(a)x$ for all $a \in R_k$. 	Let $f(x)$ be a central irreducible divisor of $x^{np^s} - \lambda$ of degree $l$ and multiplicity $j$ in  $R_k[x;\Theta]$, where $\lambda $ is an invertible element in $R_k$.	In this article, we study skew constacyclic codes of length \(np^s\) over \(R_k\), which reduces to the study of skew polycyclic codes of length $jl$  associated with a polynomial \(f(x)^j\).
		 Using the fact that skew polycyclic codes associated with a polynomial \(f(x)^j\)  can be described by the left ideal structure of the quotient ring $R_k[x;\Theta]/\langle f(x)^{j}\rangle$, we investigate this class of codes for specific choices of  $\Theta$. In particular, if $\lambda$ is an invertible element of $\mathbb{F}_{p^m}$, we classify all left ideals and establish an isomorphism between skew cyclic and skew constacyclic codes,  under suitable conditions.
		  Furthermore,  we provide a comprehensive analysis of skew constacyclic codes of length $3p^s$ over $R_k$. Finally, we examine skew cyclic and skew negacyclic codes of length $6p^s$ over $R_k$ using the factorization of  $x^{6p^s} - 1$ and $x^{6p^s} + 1$, respectively; with a complete case-by-case analysis.  Examples demonstrating codes with optimal parameters are also included.
	\end{abstract}

		\section{Introduction}
	The class of cyclic codes plays a central role in the theory of error-correcting codes. As a generalization of cyclic codes, constacyclic codes have been extensively studied. In recent years, there has been growing interest in constacyclic codes over various rings.  Though, for specific lengths and over particular chain  rings, some classification of constacyclic codes have been undertaken, a complete classification of constacyclic codes with respect to arbitrary lengths and chain rings remains challenging in general.
	 	In recent years, much work has been done for constacyclic codes over finite chain rings and also it has been proven that constacyclic codes over finite chain ring has many practical applications. In this direction a prominently used ring $R_k = \mathbb{F}_{p^m}[u]/\langle u^k \rangle
	= \mathbb{F}_{p^m} + u\mathbb{F}_{p^m} + \cdots + u^{k-1}\mathbb{F}_{p^m}$.
	The structure of constacyclic codes over $\mathbb{F}_{p^m}[u]/\langle u^k \rangle$ has been extensively studied in the literature. Particularly, for  $k=2$, significant attention has been devoted to cyclic and constacyclic codes of length $n$ over the ring $\mathbb{F}_{p^m} + u\mathbb{F}_{p^m}$ for various primes $p$ and positive integer $n$ (see, e.g., \cite{Din10, Din13, Din14, CDL16, DDS17}). In this direction
 Zhao et al. \cite{ZTG18}  and Cao et al. \cite{CCDF18} studied constacyclic codes of length $np^s$ over $\mathbb{F}_{p^m} + u\mathbb{F}_{p^m}$.	More recently, Seema et al. \cite{AKK25} investigated cyclic codes of length $n$ over finite chain rings. 
		Beyond the commutative setting, Boucher et al. \cite{BGU07} initiated the study of skew cyclic codes over the non-commutative skew polynomial ring $\mathbb{F}_{p^m}[x;\theta]$, where $\theta$ is an automorphism of the finite field $\mathbb{F}_{p^m}$. Subsequently, Jitman et al. \cite{JLU12} investigated the algebraic structure and fundamental properties of skew constacyclic codes of length $n$ over finite chain rings. 
		Later, Bagheri et al. \cite{BHRS22} classified all skew cyclic codes of length $p^s$ over $\mathbb{F}_{p^m} + u\mathbb{F}_{p^m}.$  More recently, Hesari et al. \cite{HS23} analyzed the algebraic structure of skew constacyclic codes of lengths $p^s$ and $2p^s$ over $\mathbb{F}_{p^m} + u\mathbb{F}_{p^m}$.  In \cite{SRD25}, the authors determined the algebraic structure of skew negacyclic codes of length $4p^s$ and characterized their self-dual codes. In \cite{HS25}, the authors studied skew constacyclic codes of length $p^s$ over $\mathbb{F}_{p^m} + u\mathbb{F}_{p^m} + u^2\mathbb{F}_{p^m}$ with $u^3=0$. Recently,  Bajalan et al. \cite{BMO26} studied skew polycyclic codes over finite chain rings.
		Motivated by these works, in this paper we develop a uniform framework for the study of skew constacyclic codes over $\mathbb{F}_{p^m}[u]/\langle u^k \rangle$. 
	For notational convenience, denote $ \mathbb{F}_{p^m}[u]/\langle u^k \rangle$  by $R_k$ and  $R_k[x;\Theta]$ is  the skew polynomial ring, with an automorphism $\Theta$ of $R_k$  satsifying $xa = \Theta(a)x$ for all $a \in R_k$. For a central irreducible divisor $f(x)$ of degree $l$ and multiplicity $j$ of $x^{np^s}-\lambda$ in $R_k[x;\Theta]$, where $\lambda$ is a unit in $R_k$, $p$ is an odd prime, and $\gcd(n,p)=1, j,l,n \in \mathbb{N}$. We study skew constacyclic codes of length $np^s$ over $R_k$, reducing the problem to the study of skew polycyclic codes of length $jl$ associated with a   polynomial $f(x)^j$. 
	We investigate skew  $(f^j,\Theta)$-polycyclic codes, which correspond to left ideals of the quotient ring
	$R_k[x;\Theta]/\langle f(x)^{j}\rangle, ~ j\in\mathbb{N}.$
		 In particular, for $\lambda=\alpha \in \mathbb{F}_{p^m}^{*}$, we characterize all types of left ideals of $R_k[x;\Theta]/\langle f(x)^{j} \rangle, ~ j \in \mathbb{N}$ and establish an isomorphism between skew $\Theta$-cyclic codes and skew $(\lambda,\Theta)$-constacyclic codes under suitable conditions on $\lambda$ and $\Theta$.	
	As special cases, when $\Theta$ is the identity automorphism, we recover many known results on cyclic and negacyclic codes of length $np^s$ over $R_k$. Moreover, for $\lambda=\alpha^3$ with $\alpha\in\mathbb{F}_{p^m}^{*}$, we show that every skew $(\lambda,\Theta)$-constacyclic code of length $3p^s$ is isomorphic to a skew $\Theta$-cyclic code. We also study skew $(\lambda,\Theta)$-constacyclic codes over $\mathbb{F}_{p^m}$ when $\lambda$ is not a cube in $\mathbb{F}_{p^m}^{*}$, and provide a detailed analysis of skew constacyclic codes of length $3p^s$ over $R_k$, where $k>1$. We also  study skew $\Theta$-cyclic and skew $\Theta$-negacyclic codes of length $6p^s$ over $R_k$ using the factorizations of $x^{6p^s}-1$ and $x^{6p^s}+1$, respectively. Furthermore, we provide several examples of MDS codes to illustrate the obtained results.
		This paper is organized as follows. In Section~2, we present basic preliminary definitions, notation, and some straightforward results related to our work. Section~3 is devoted to the study of $(f^j, \Theta)$-polycyclic codes of length $jl$ over $R_k$, $j,l \in \mathbb{N}$. In particular, for $\lambda=\alpha \in \mathbb{F}_{p^m}^{*}$, we characterize all types of left ideals of $R_k[x;\Theta]/\langle f(x)^{j} \rangle$ and establish an isomorphism between skew $\Theta$-cyclic codes and skew $(\lambda,\Theta)$-constacyclic codes under suitable conditions on $\lambda$ and $\Theta$.	
		A comprehensive analysis of skew constacyclic codes of length $3p^s$ over $R_k$, is presented in Section~4.
	 In Section~5, we study skew $\Theta$-cyclic and skew $\Theta$-negacyclic codes of length $6p^s$ over $R_k$ using the factorizations of $x^{6p^s}-1$ and $x^{6p^s}+1$, respectively. Finally, several examples with optimal parameters are also provided to illustrate the applicability of the results.

	\section{Notation and Preliminaries}
As already mentioned,  $\mathbb{F}_{p^m}$, where $p$ is a prime and $m \in \mathbb{N}$, denotes the   finite field of order $p^m$ and for  $k \in \mathbb{N},$   ring $R_k$ is the finite chain ring:
\[ R_k = \frac{\mathbb{F}_{p^m}[u]}{\langle u^k \rangle} = \mathbb{F}_{p^m} + u\mathbb{F}_{p^m} + \dots + u^{k-1}\mathbb{F}_{p^m}, \]
where $u^k = 0$. The set of invertible elements of $\mathbb{F}_{p^m}$ and $R_k$ are denoted, respectively by $\mathbb{F}_{p^m}^*$ and $R_k^*$.

\begin{definition}
	Let $\Theta$ be an automorphism of $R_k$. The \emph{skew polynomial ring} $R_k[x;\Theta]$ consists of polynomials
	\[
	a_0 + a_1 x + \cdots + a_n x^n, \quad a_i \in R_k,\; n \in \mathbb{N} \cup \{0\},
	\]
	with the usual addition and multiplication determined by the rule
	\[
	x a = \Theta(a) x \quad \text{for all } a \in R_k,
	\]
	extended by associativity and distributivity. The set $R_k[x;\Theta]$ with above operations forms a ring called the skew polynomial ring over $R_k$ and every element in $R_k[x;\Theta]$  is called skew polynomial.
\end{definition}
We say that $f(x)$ is a right divisor (left divisor) of $g(x)$ in $R_k[x;\Theta]$, and we write $f(x) \mid_r g(x) \quad ( f(x) \mid_l g(x))$ if there exists a skew polynomial $h(x) \in R_k[x;\Theta]$ such that $g(x) = h(x)f(x) \quad ( g(x) = f(x)h(x)).$ 
\begin{definition}
	Suppose \(f(x), g(x)\) are skew polynomials in $R_k[x;\Theta]$. The greatest common right divisor of \(f(x)\) and \(g(x)\) is the monic polynomial $d_r(x) \in R_k[x;\Theta]$ such that
	$d_r(x)\mid_r f(x),$  $d_r(x)\mid_r g(x),$ and for any $d_r'(x)\in R_k[x;\Theta]$ satisfying
	$d_r'(x)\mid_r f(x)$ and $d_r'(x)\mid_r g(x),$	we have $d_r'(x)\mid_r d_r(x).$ We denote \(d_r(x)\) by $\gcd_r(f(x),g(x)).$
\end{definition}

The following lemma enables division in $R_k[x;\Theta]$.
\begin{lemma}\cite{McD74}Let $f(x), g(x) \in R_k[x;\Theta]$  be such that  $f (x)$ having invertible leading cofficeient. Then there exist $q(x), r(x) \in R_k[x;\Theta]$ with $g(x)=q(x)f(x)+r(x), $ where $r(x)=0$ or $deg(r(x))<deg(f(x)). $ 
\end{lemma}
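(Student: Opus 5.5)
We prove the right division algorithm by strong induction on $\deg g(x)$, following the classical long-division argument adapted to the twisting rule $xa=\Theta(a)x$. Write $f(x)=\sum_{i=0}^{t} b_i x^i$ with $b_t\in R_k^*$, and let $g(x)=\sum_{i=0}^{d} a_i x^i$. If $g(x)=0$ or $d<t$, take $q(x)=0$ and $r(x)=g(x)$.

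Now suppose $d\ge t$ and the statement holds for all polynomials of degree less than $d$. The only thing that needs care is the leading term of a product $c\,x^{d-t}f(x)$ with $c\in R_k$. Iterating the commutation rule gives $x^{e}b=\Theta^{e}(b)x^{e}$ for every $e\ge 0$ and $b\in R_k$. Hence
\[
c\,x^{d-t}f(x)=\sum_{i=0}^{t} c\,\Theta^{d-t}(b_i)\,x^{d-t+i},
\]
and its leading coefficient is $c\,\Theta^{d-t}(b_t)$.

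Since $\Theta$ is a ring automorphism, it maps units to units, so $\Theta^{d-t}(b_t)\in R_k^*$. We may therefore choose
\[
c=a_d\,\bigl(\Theta^{d-t}(b_t)\bigr)^{-1}.
\]
With this choice, $g_1(x)=g(x)-c\,x^{d-t}f(x)$ has zero coefficient at $x^d$, so either $g_1=0$ or $\deg g_1<d$. This is the one step that uses the hypothesis that the leading coefficient of $f$ is invertible and that $\Theta$ is bijective. Invertibility of the leading coefficient matters here because $R_k$ has zero divisors, such as $u$: without it the top term need not cancel.

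By the induction hypothesis, $g_1(x)=q_1(x)f(x)+r(x)$ with $r=0$ or $\deg r<t$. Setting $q(x)=c\,x^{d-t}+q_1(x)$ gives $g(x)=q(x)f(x)+r(x)$, as required. Uniqueness is not claimed, so nothing more is needed. The argument is routine, and the only potential pitfall is the bookkeeping of $\Theta$-twists when reading off the leading coefficient.
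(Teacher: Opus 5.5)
Your proof is correct: it is the standard long-division argument, and you handle the twist correctly, reading off the leading coefficient of $c\,x^{d-t}f(x)$ as $c\,\Theta^{d-t}(b_t)$. The paper gives no proof of its own and simply cites McDonald, where the argument is essentially this one. One small point: you only need $\Theta$ to send units to units, which any unital ring endomorphism does, so bijectivity of $\Theta$ is not actually used.
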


\begin{proposition}\cite{JLU12}\label{Central poly}
	Let $n$ be a positive integer and let $\lambda$ be a unit in $R_k$. Then the following statements are equivalent:
	\begin{enumerate}
		\item $x^{n} - \lambda$ is central in $R_k[x;\Theta]$.
		\item $\langle x^{n} - \lambda \rangle$ is a two-sided ideal of $R_k[x;\Theta]$.
		\item $n$ is a multiple of the order of $\Theta$ and $\lambda$ is fixed by $\Theta$, that is, $\Theta(\lambda)=\lambda$.
	\end{enumerate}
\end{proposition}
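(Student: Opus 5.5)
The plan is to prove the cycle of implications (1) $\Rightarrow$ (2) $\Rightarrow$ (3) $\Rightarrow$ (1). Throughout, write $g(x)=x^n-\lambda$.

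\textbf{(1) $\Rightarrow$ (2).} If $g(x)$ is central, then $R_k[x;\Theta]\,g(x)=g(x)\,R_k[x;\Theta]$. Hence the left ideal generated by $g(x)$ equals the right ideal it generates, so $\langle g(x)\rangle$ is two-sided. This step is immediate.

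\textbf{(2) $\Rightarrow$ (3).} Assume $\langle g(x)\rangle$ is two-sided and let $a\in R_k$ be arbitrary. Then $g(x)a$ lies in the left ideal $R_k[x;\Theta]g(x)$. Expanding,
\[
g(x)a=\Theta^n(a)x^n-\lambda a=\Theta^n(a)\,g(x)+\bigl(\Theta^n(a)\lambda-\lambda a\bigr).
\]
The bracketed term has degree $0<n$ and $g(x)$ is monic. By uniqueness of the remainder in the division lemma \cite{McD74} (which holds because the leading coefficient is a unit), this remainder must vanish. So $\Theta^n(a)\lambda=\lambda a$, and since $R_k$ is commutative and $\lambda$ is a unit, $\Theta^n(a)=a$ for all $a$. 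Thus $\Theta^n=\mathrm{id}$, so the order of $\Theta$ divides $n$.

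Similarly, $x\,g(x)$ lies in the ideal, and
\[
x\,g(x)=x^{n+1}-\Theta(\lambda)x=x\,g(x)+(\lambda-\Theta(\lambda))x.
\]
Here I would instead use that $g(x)x=x^{n+1}-\lambda x$ is left-divisible by $g(x)$, writing $g(x)x=q(x)g(x)$. Comparing degrees forces $q(x)=x+c$ for some $c\in R_k$, so
\[
q(x)g(x)=x^{n+1}+c\,x^n-\Theta(\lambda)x-c\lambda.
\]
Matching coefficients with $x^{n+1}-\lambda x$ gives $c=0$ and $\Theta(\lambda)=\lambda$ (when $n\ge 1$). This yields (3).

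The main care point is uniqueness of quotient and remainder. This follows because $g(x)$ is monic: if $q(x)g(x)=r(x)$ with $\deg r<n$, then $q=0$, since the leading coefficient of $q(x)g(x)$ is the leading coefficient of $q(x)$ times a unit.

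\textbf{(3) $\Rightarrow$ (1).} It suffices to check that $g(x)$ commutes with the generators $a\in R_k$ and $x$.

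For $a\in R_k$,
\[
g(x)a=\Theta^n(a)x^n-\lambda a=a x^n-a\lambda=a\,g(x),
\]
using $\Theta^n=\mathrm{id}$ and commutativity of $R_k$.

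For $x$,
\[
x\,g(x)=x^{n+1}-\Theta(\lambda)x=x^{n+1}-\lambda x=g(x)x.
\]

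By linearity and associativity, $g(x)$ then commutes with every skew polynomial, so it is central.
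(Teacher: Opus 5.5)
Your proof is correct. The paper does not prove this statement; it quotes it from \cite{JLU12}. So there is no in-paper argument to match, and yours is a complete, self-contained verification.

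The nearest tool in the paper is Proposition~\ref{Central poly generalization}, quoted from \cite{BMO26}. Specialized to $a_0=\lambda$ and $a_i=0$ for $i\ge 1$, it gives (1)$\Leftrightarrow$(3) directly. Its condition (2) reduces to $\lambda r=r\lambda$, which is automatic in the commutative ring $R_k$. Your (3)$\Rightarrow$(1) is the generator check behind that criterion. Your (2)$\Rightarrow$(3) adds something that criterion does not cover: it uses uniqueness of the remainder upon division by the monic $x^n-\lambda$ to handle the ideal-theoretic condition. You also implicitly, and correctly, read (2) as saying the left ideal $R_k[x;\Theta](x^n-\lambda)$ is two-sided. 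Under the other reading (the two-sided ideal generated by $x^n-\lambda$), (2) would hold trivially.

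Three small cleanups:
\begin{enumerate}
\item Delete the abandoned display for $x\,g(x)$. As written, its second equality $x\,g(x)=x\,g(x)+(\lambda-\Theta(\lambda))x$ is circular; you presumably meant $g(x)x$ on the right.
\item With that fix, the corrected identity alone would finish only for $n\ge 2$. For $n=1$, the element $(\lambda-\Theta(\lambda))x$ has degree equal to $n$, so it is not forced to vanish by degree. Your switch to $g(x)x=q(x)g(x)$ with $q(x)=x+c$ is therefore the right move.
\item In the coefficient comparison there, for $n=1$ the terms $c\,x^n$ and $-\Theta(\lambda)x$ merge. So $c=0$ should be taken from the constant term $-c\lambda=0$ together with $\lambda\in R_k^*$, which works uniformly for every $n\ge 1$. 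Only then does the $x$-coefficient give $\Theta(\lambda)=\lambda$.
\end{enumerate}
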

\begin{proposition}\cite{BMO26}\label{Central poly generalization}
	A monic polynomial
	$f(x)=x^n-\sum_{i=0}^{n-1} a_i x^i \in R_k[x;\Theta]$
	is central if and only if
	\begin{enumerate}
		\item \( \Theta(a_i)=a_i \) for all \(0 \le i \le n-1\);
		\item \( a_i r-\Theta^i(r)a_i=0 \) for all \(r \in R_k\) and \(0 \le i \le n-1\);
		\item \( \Theta^n=\mathrm{Id}_{R_k}, \) where \(\mathrm{Id}_{R_k}\)
		denotes the identity automorphism of \(R_k\).
	\end{enumerate}
\end{proposition}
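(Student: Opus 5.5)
Write $f(x)=x^n-\sum_{i=0}^{n-1}a_ix^i$. The plan is to compare both sides of $f(x)g(x)=g(x)f(x)$ coefficientwise. By distributivity, $f$ is central if and only if it commutes with the two generating kinds of elements of $R_k[x;\Theta]$: the scalars $r\in R_k$ and the indeterminate $x$. If $f$ commutes with all $r$ and with $x$, it commutes with every monomial $rx^t$, and hence with every skew polynomial. So centrality reduces to the two conditions $fr=rf$ for all $r\in R_k$ and $fx=xf$, and I will show these are equivalent to (1)--(3).

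\textbf{Commuting with $x$.} Since $x^nx=x\,x^n$, we get
\[
xf-fx=-\sum_{i}\bigl(xa_ix^i-a_ix^{i+1}\bigr)=-\sum_i\bigl(\Theta(a_i)-a_i\bigr)x^{i+1}.
\]
Comparing coefficients, this vanishes exactly when $\Theta(a_i)=a_i$ for every $i$, which is condition (1).

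\textbf{Commuting with a scalar $r$.} Using $x^ir=\Theta^i(r)x^i$,
\[
fr-rf=\bigl(\Theta^n(r)-r\bigr)x^n-\sum_i\bigl(a_i\Theta^i(r)-ra_i\bigr)x^i.
\]
The leading coefficient vanishes for all $r$ exactly when $\Theta^n=\mathrm{Id}_{R_k}$, which is condition (3). The lower coefficients give $a_i\Theta^i(r)=ra_i$ for all $r$. Since $R_k$ is commutative, this reads $a_i\bigl(\Theta^i(r)-r\bigr)=0$. Condition (2), stated as $a_ir-\Theta^i(r)a_i=0$, is the same statement: by commutativity it is $a_i\bigl(r-\Theta^i(r)\bigr)=0$. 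So the scalar condition is equivalent to (2) together with (3).

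\textbf{Conclusion and main obstacle.} Combining the two computations, $f$ is central if and only if (1), (2) and (3) all hold, which proves both directions. The step that needs the most care is the bookkeeping in the scalar case. One must keep track of the side on which the automorphism acts: writing $rf$ versus $fr$ moves $\Theta^i$ onto $r$ on the right-hand terms only. One must also check that the ordering convention in (2) agrees with the computed form, and commutativity of $R_k$ is what makes this harmless. A minor further point is justifying the reduction to generators: commuting with every $rx^t$ follows from commuting with $r$ and with $x$ by associativity, and the general case then follows by additivity.
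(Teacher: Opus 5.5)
Your argument is correct and complete. Centrality reduces to commuting with every $r\in R_k$ and with $x$, and comparing coefficients in the left $R_k$-basis $\{x^i\}$ gives (1) from $xf=fx$, and (2) together with (3) from $fr=rf$. Commutativity of $R_k$ is exactly what lets your computed condition $a_i\Theta^i(r)=ra_i$ be rewritten as the stated form $a_ir=\Theta^i(r)a_i$.

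The paper gives no proof of its own here (the result is imported from \cite{BMO26}), and your direct verification is the standard argument for it.
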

A code $\mathcal{C}$ of length $n$ over $R_k$ is a non-empty subset of $R_k^n$ and the  ring $R_k$ is referred to as the alphabet of $\mathcal{C}$ and the elements of $\mathcal{C}$ are called codeword. For any codeword $c \in \mathcal{C}$, the Hamming weight $w(c)$ of $c$ is defined as the number of non-zero components of $c$.  In addition, if $\mathcal{C}$  is  an $R_k$-submodule of $R_k^n$, then $\mathcal{C}$ is called a linear code. 
The Hamming distance, denoted by  $d$ of the linear code $\mathcal{C}$ is defined by $d=\min\{w(c):0\neq c\in \mathcal{C}\}.$ 
 The Euclidean dual of a linear code $\mathcal{C} \subseteq R_k^n$ is the linear code defined by 
$$\mathcal{C}^\perp=\{x \in R_k^n : x.y=0, ~\text{for all} ~y \in \mathcal{C}\},$$ where $x.y$ denotes the standard Euclidean inner product. If $\mathcal{C}=\mathcal{C}^\perp,$ then $\mathcal{C}$ is called a self-dual code.\\
For $\lambda \in R_k^*$, a linear code $\mathcal{C}$ over $R_k$ 
is called $\lambda$-constacyclic,   if $ (\lambda a_{n-1}, a_0, a_1, \ldots, a_{n-2})\in \mathcal{C}$ whenever $(a_0, a_1, \ldots, a_{n-1})\in \mathcal{C} $. 

In other words, a linear code over $R_k$ is   $\lambda$-constacyclic, if it is an ideal of $\frac{R_k[x]}{\langle  x^n- \lambda\rangle}$, where $\lambda \in R_k^{*}$.
Further, for a given  automorphism $\Theta$ of $R_k$ and  $\lambda \in R_k^*$, a linear code $\mathcal{C}$  is called  skew $(\lambda,\Theta)$-constacyclic, if $\mathcal{C}$ is closed under the $\Theta$-constacyclic shift $\rho_{\Theta} : R_k^n \rightarrow R_k^n$ defined by 
$\rho_{\Theta}((a_0,a_1,\ldots,a_{n-1}))=(\Theta(\lambda(a_{n-1})), \Theta(a_0), \Theta(a_1),\ldots, \Theta(a_{n-2}))$. Equivalently, a linear code over \(R_k\) is called a  skew $(\lambda, \Theta)$-constacyclic code if it is an left ideal of
$\frac{R_k[x, \Theta]}{\langle x^n-\lambda \rangle}.$
If $\mathcal{C}$ is skew $( \lambda, \Theta)$-constacyclic code over $R_k$, then $\mathcal{C}^{\perp} $ is a skew $( \lambda^{-1}, \Theta)$-constacyclic code over $R_k$.
Note that skew $( \lambda, \Theta)$-constacyclic code $\mathcal{C}$ is called skew $\Theta$-cyclic, if $\lambda=1$ and skew $\Theta$-negacyclic, if $\lambda=-1.$
 It is easy to observe that skew  $( \lambda, \Theta)$-constacyclic codes are $\lambda$-constacyclic, if $\Theta$ is identity automorphism of $R_k$ and in this case if $\lambda=1$ then $\mathcal{C}$
is a cyclic code.  Likewise, if $\lambda=-1,$ then $\mathcal{C}$ is a negacyclic code. 
More generally, skew polycyclic codes were introduced in the literature  \cite{BMO26}. Let
$
f(x)=x^n-\sum_{i=0}^{n-1} a_i x^i \in R_k[x;\Theta]
$
be a monic polynomial of degree \(n\). A linear code
\(\mathcal{C}\subseteq R_k^n\) is called a skew \((f, \Theta)\)-polycyclic code if for every
$
(c_0,c_1,\ldots,c_{n-1})\in \mathcal{C},
$
we have
$
\bigl(
\Theta(c_{n-1})a_0,\,
\Theta(c_0)+\Theta(c_{n-1})a_1,\,
\ldots,\,
\Theta(c_{n-2})+\Theta(c_{n-1})a_{n-1}
\bigr)\in \mathcal{C}.
 $ If $f(x)$ is central, then a linear code $\mathcal{C}$  over $R_k$ is skew $(f, \Theta)$-polycyclic code  if and only if it is an left ideal of $\frac{R_k[x;\Theta]}{\langle f(x) \rangle}.$
Observe that skew  $(\lambda, \Theta)$-constacyclic codes of length $np^s$ are precisely skew \((f, \Theta)\)-polycyclic codes corresponding to
$f(x)=x^{np^s}-\lambda,
~ \lambda\in R_k^{*}.$

Now consider $\mathcal{R}^{np^s}_{k, \lambda}=\frac{R_k[x, \Theta]}{\langle x^{np^s}-\lambda \rangle} ,$  where $\lambda \in R_k^*$.
	\begin{remark}\label{remark 1} If $f(x)=\sum_{i=0}^{r}b_ix^i  \in R_k[x,\Theta]$ and $b_r \neq 0, ~b_i \in R_k$, then the reciprocal of $f(x)$ is the polynomial 
			$f^*(x)=\sum_{i=0}^{r}\Theta(b_{r-i})x^i.$
				Let $\mathcal{I}$ be a left ideal of $\mathcal{R}_{k, \lambda}^{np^s}.$ Then, 
			\[
			\mathcal{A(I)}=\{f(x) \in \mathcal{R}_{k, \lambda}^{np^s} \mid g(x)f(x)=0, ~\forall~g(x)\in \mathcal{I}\}
			\]
			is a right ideal of $\mathcal{R}_{k, \lambda}^{np^s}$ and is called 
			the right annihilator of $\mathcal{I}.$
	 It is easy to verify that if $\mathcal{C}$ is a skew $(\lambda, \Theta)$-constacyclic code associated with the left ideal $\mathcal{I}$ of  $\mathcal{R}^{np^s}_{k, \lambda}$, then the  left ideal of  $\mathcal{R}^{np^s}_{k, \lambda}$ associated with $\mathcal{C}^\perp$ is $\mathcal{A(I)^*}$, where $\mathcal{A(I)^*}$ = $\{a^*(x)~|~ a(x) \in \mathcal{A}(\mathcal{I})\}.$
	\end{remark}

		Let $\Theta$ be an automorphism of $R_k$ and $\theta = \Theta|_{\mathbb{F}_{p^m}}$ be its restriction to the base field. 
	
The map $ \pi_{k}$,
	\[ \pi_{k}:R_{k}\rightarrow R_{k-1} \]
	\[ \sum_{j=0}^{k-1}u^{j}a_{j}\mapsto\sum_{j=0}^{k-2}u^{j}a_{j}, \] where $a_j \in F_{p^m}$, is a surjective ring homomorphism, which may be canonically extended to  as,
	\[ \pi_{k}: \mathcal{R}^{np^s}_{k, \lambda}\rightarrow \mathcal{R}^{np^s}_{k-1, \lambda} \]
	\[ \sum_{j=0}^{k-1}u^{j}g_{j}(x)\mapsto\sum_{j=0}^{k-2}u^{j}g_{j}(x), \]
	Likewise,  we have the  surjective ring homomorphism map $\mu_{k}$,
	\[ \mu_{k}: \mathcal{R}^{np^s}_{k, \lambda}\rightarrow  \mathcal{R}^{np^s}_{1, \alpha}=\frac{\mathbb{F}_{p^m}[x;  \theta]}{\langle x^{np^s}-\alpha \rangle}, \] 
	\[ \sum_{j=0}^{k-1}u^{j}g_{j}(x)\mapsto g_{0}(x),\] where $\alpha=\mu_k(\lambda).$
	
	\begin{definition}
		If $\mathcal{C}_1$ and $\mathcal{C}_2$ are left ideals of a ring  $\mathcal{R}^{np^s}_{k, \lambda}$, then their left ideal quotient is 
		\[ (\mathcal{C}_1 : \mathcal{C}_2)=\{ g(x)\in \mathcal{R}^{np^s}_{k, \lambda} : g(x)\mathcal{C}_2 \subseteq \mathcal{C}_1 \} \]
		which is also an left ideal. In particular, if $\mathcal{C}_2=\mathcal{R}^{np^s}_{k, \lambda}(u),$ we can write \[(\mathcal{C}_1 : u) =\{ g(x)\in \mathcal{R}^{np^s}_{k, \lambda} : u g(x)\in \mathcal{C}_1 \}\]
		\end{definition}
	
		\begin{definition}
		Let $\mathcal{C}$ be a skew $(\lambda,\Theta)$-constacyclic code of length $np^s$ over $R_k$.	For $i=1,2, \ldots, k$, the \emph{$i$-th torsion code} of $\mathcal{C}$ is the code over
		$\mathbb{F}_{p^m}$ defined by
		\[
		\operatorname{Tor}_{i}(\mathcal{C})
		= \mu_k\!\left((\mathcal{C} :_{\mathcal{R}^{np^s}_{k,\lambda}} u^{i-1})\right)
		= \mu_k\!\left(\{\, g(x)\in \mathcal{R}^{np^s}_{k,\lambda} \mid u^{i-1} g(x)\in \mathcal{C} \,\}\right),
		\]
		where $\mu_k$ is as defined above, which is 
  a left ideal of $\mathcal{R}^{np^s}_{1,\alpha}$.
	\end{definition}
Let $\Theta \in \mathrm{Aut}(R_k)$  such that the order of $\Theta$ divides $np^s$ and assume that $\lambda \in R_k^{*}$ such that $\Theta(\lambda)=\lambda$.   Hence by  $\Cref{Central poly generalization}$ we have $x^{np^s}-\lambda$ is central in $R_k[x;\Theta]$.
By using
\cite[Chapter II; Chapter XX, Exercise 7]{McD74}
, there exist irreducible  polynomials $f_1(x),  f_2(x),\ldots, f_r(x)$ in $R_k[x; \Theta]$ such that
 $
x^{np^s}-\lambda = f_1(x)f_2(x) \cdots f_r(x).$
Assume that each $f_j(x)$ is also central polynomial in $R_k[x;\Theta]$. Then we have, $
x^{np^s}-\lambda = f_1(x)^{k_1} f_2(x)^{k_2} \cdots f_t(x)^{k_t},$ where each $f_j(x)$ is irreducible and pairwise coprime.
For each $1 \le j \le t$, define $
F_j(x)=\frac{x^{np^s}-\lambda}{f_j(x)^{k_j}}.
$
Since $\gcd_r(f_j(x)^{k_j},F_j(x))=1$, there exist $v_j(x), w_j(x) \in R_k[x;\Theta]$ such that $$
v_j(x)F_j(x) + w_j(x)f_j(x)^{k_j} = 1.$$

Define
\begin{equation}
	\varepsilon_j(x) = v_j(x)F_j(x)\mod{x^{np^s}-\lambda}.
\end{equation}
Then we have the following results:

\begin{lemma}
	Let
	$
	\mathcal{R}^{np^s}_{k,\lambda}=\frac{R_k[x;\Theta]}{\langle x^{np^s}-\lambda\rangle}
	\quad \text{and} \quad 
	\mathcal{R}^{np^s}_{k,f_j, k_j}=\frac{R_k[x;\Theta]}{\langle f_j(x)^{k_j}\rangle}.
	$
	Then the following statements hold:
	\begin{enumerate}
		\item $\varepsilon_1(x) + \cdots + \varepsilon_t(x) = 1$, $\varepsilon_j(x)^2 = \varepsilon_j(x)$ and 
		$\varepsilon_j(x)\varepsilon_\ell(x) = 0$ in $\mathcal{R}^{np^s}_{k,\lambda}$ for all $1 \le j \neq l \le t$.
		
		\item $\mathcal{R}^{np^s}_{k,\lambda} = \mathcal{A}_1 \oplus \cdots \oplus \mathcal{A}_t$, where 
		$\mathcal{A}_j = \mathcal{R}^{np^s}_{k,\lambda}\varepsilon_j(x)$ with $\varepsilon_j(x)$ as its multiplicative identity, and 
		$\mathcal{A}_j \mathcal{A}_\ell = \{0\}$ for all $1 \le j \neq l \le t$.
		
		\item For each $1 \le j \le t$ and any $a(x) \in \mathcal{R}^{np^s}_{k,f_j, k_j}$, define
		\[
		\varphi_j : \mathcal{R}^{np^s}_{k,f_j,k_j} \to \mathcal{A}_j, \quad 
		a(x) \mapsto \varepsilon_j(x)a(x) \mod{x^{np^s}-\lambda}.
		\]
		Then $\varphi_j$ is a ring isomorphism from $\mathcal{R}^{np^s}_{k,f_j, k_j}$ onto $\mathcal{A}_j$.
		
		\item For any $a_j(x) \in \mathcal{R}^{np^s}_{k,f_j, k_j}$, $j = 1, \ldots, t$, define
		\[
		\varphi : \mathcal{R}^{np^s}_{k,f_1, k_1} \times \cdots \times \mathcal{R}^{np^s}_{k,f_t, k_t} \to \mathcal{R}^{np^s}_{k,\lambda}
		\]
		by
		\[
		\varphi(a_1(x), \ldots, a_t(x)) = \sum_{j=1}^t \varepsilon_j(x)a_j(x) \pmod{x^{np^s}-\lambda}.
		\]
		Then $\varphi$ is a ring isomorphism from 
		$\mathcal{R}^{np^s}_{k,f_1, k_1} \times \cdots \times \mathcal{R}^{np^s}_{k,f_t, k_t}$ onto $\mathcal{R}^{np^s}_{k,\lambda}$.
	\end{enumerate}
\end{lemma}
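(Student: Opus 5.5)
The plan follows the usual Chinese Remainder Theorem argument, adapted to the skew setting. Centrality of $x^{np^s}-\lambda$ and of each $f_j(x)$ (hence of each $f_j(x)^{k_j}$ and each $F_j(x)$) is what makes it go through.

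\emph{Step 1: the $\varepsilon_j$ form a complete set of orthogonal idempotents.} For $j\neq \ell$, the product $F_j(x)F_\ell(x)$ contains every factor $f_i(x)^{k_i}$. Since all factors are central, they commute, so $x^{np^s}-\lambda$ divides $F_jF_\ell$ on both sides. Centrality of $F_\ell$ then gives
\[
\varepsilon_j\varepsilon_\ell=v_jF_jv_\ell F_\ell=v_jv_\ell F_jF_\ell\equiv 0 .
\]
Next, $1-\sum_j\varepsilon_j$ is divisible by each $f_j^{k_j}$. Indeed, $1-\varepsilon_j=w_jf_j^{k_j}$, and $\varepsilon_\ell$ is a multiple of $F_\ell$, which contains $f_j^{k_j}$ as a central factor. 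The $f_j^{k_j}$ are pairwise coprime and central, so their product $x^{np^s}-\lambda$ divides $1-\sum_j\varepsilon_j$. To justify this, I would induct using Bézout relations $af_1^{k_1}+bF_1=1$ and centrality. Hence $\sum_j\varepsilon_j=1$ in $\mathcal{R}^{np^s}_{k,\lambda}$. Multiplying this identity by $\varepsilon_j$ and using orthogonality gives $\varepsilon_j^2=\varepsilon_j$.

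\emph{Step 2: centrality of the idempotents and the decomposition in (2).} I would show that each $\varepsilon_j$ is central in $\mathcal{R}^{np^s}_{k,\lambda}$. I expect this to be the main obstacle, because $v_j$ need not be central. The argument: for any $r$, the element $r\varepsilon_j-\varepsilon_j r$ is divisible by $F_j$, since $F_j$ is central and $\varepsilon_j=v_jF_j$. Also $r\varepsilon_j-\varepsilon_jr=-r(1-\varepsilon_j)+(1-\varepsilon_j)r$ is divisible by $f_j^{k_j}$. A multiple of both coprime central polynomials is a multiple of their product $x^{np^s}-\lambda$, so $r\varepsilon_j=\varepsilon_jr$ in the quotient. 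Given this, $\mathcal{A}_j=\mathcal{R}\varepsilon_j$ is a two-sided ideal with identity $\varepsilon_j$, and $\mathcal{A}_j\mathcal{A}_\ell=\mathcal{R}\varepsilon_j\varepsilon_\ell\mathcal{R}=0$. Writing $a=\sum_j a\varepsilon_j$ shows the sum is everything. Directness follows by multiplying a relation $\sum_j a_j\varepsilon_j=0$ by $\varepsilon_\ell$.

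\emph{Steps 3 and 4: the isomorphisms in (3) and (4).} First, $\varphi_j$ is well defined: if $a\in\langle f_j^{k_j}\rangle$, then $\varepsilon_ja$ is divisible by $F_jf_j^{k_j}=x^{np^s}-\lambda$, using centrality to move factors. It is additive. It is multiplicative because
\[
\varepsilon_ja\,\varepsilon_jb=\varepsilon_j^2ab=\varepsilon_jab,
\]
using the centrality from Step 2, and it sends $1$ to $\varepsilon_j$. For injectivity, suppose $\varepsilon_ja\equiv0$. Then
\[
a=(\varepsilon_j+w_jf_j^{k_j})a=\varepsilon_ja+w_jf_j^{k_j}a,
\]
and $\varepsilon_ja$ is a multiple of $x^{np^s}-\lambda$, hence of $f_j^{k_j}$. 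So $a\in\langle f_j^{k_j}\rangle$. For surjectivity, any $c\,\varepsilon_j$ equals $\varphi_j(c \bmod f_j^{k_j})$, since $\varepsilon_j(c+hf_j^{k_j})\equiv\varepsilon_jc$. Finally, $\varphi=\sum_j\varphi_j$ composed with the internal direct sum of (2) is an isomorphism of rings, because the components multiply independently by $\mathcal{A}_j\mathcal{A}_\ell=0$.
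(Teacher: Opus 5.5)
Your proposal is correct and is the standard Chinese Remainder computation that the paper appeals to. The paper omits the details, citing the commutative case in \cite{CCDF18} together with the centrality of $f_j(x)^{k_j}$ and $x^{np^s}-\lambda$. Your explicit check that each $\varepsilon_j(x)$ is central in $\mathcal{R}^{np^s}_{k,\lambda}$ is exactly what the skew setting needs and what the paper leaves implicit: you show $r\varepsilon_j-\varepsilon_j r$ is divisible by both $F_j$ and $f_j^{k_j}$, then apply the B\'ezout relation.
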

Since $f_j(x)^{k_j}$ and $x^{np^s}-\lambda$ are central, the result follows by standard computations analogous to those in \cite{CCDF18}; hence, we omit the details.

\begin{theorem}\label{decomposition of consta code}
	Let $C \subseteq \mathcal{R}^{np^s}_{k,\lambda}$. Then the following are equivalent:
	\begin{enumerate}
		\item $\mathcal{C}$ is a skew $(\lambda,\Theta)$-constacyclic code of length $np^s$ over $R_k$, i.e., a left ideal of $\mathcal{R}^{np^s}_{k,\lambda}$.
		\item There exist unique left ideal  $\mathcal{C}_j$ of $\mathcal{R}^{np^s}_{k,f_j, k_j}$ such that
		\[
		\mathcal{C} = \bigoplus_{j=1}^t \varepsilon_j(x) \mathcal{C}_j.
		\]

\item Under the canonical isomorphism
\[
\mathcal{R}^{np^s}_{k,\lambda}
\cong
\bigoplus_{j=1}^t \mathcal{R}^{np^s}_{k,f_j, k_j},
\]
the code \(\mathcal{C}\) corresponds to
\[
\bigoplus_{j=1}^t \mathcal{C}_j,
\]
where each \(\mathcal{C}_j\) is a skew $( f_j^{k_j},\Theta )$-polycyclic code of length \(l_jk_j\), and \(l_j=\deg(f_j(x))\).
	\end{enumerate}
\end{theorem}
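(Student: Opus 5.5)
We derive the theorem from the preceding lemma: the central idempotents $\varepsilon_j(x)$ and the isomorphism $\varphi$ reduce the statement to the standard fact that left ideals of a finite direct product of rings with identity are exactly products of left ideals of the factors. The one point needing care is that the $\varepsilon_j(x)$ are \emph{central} in the noncommutative ring $\mathcal{R}^{np^s}_{k,\lambda}$. Each $F_j(x)$ is a product of central polynomials, so it is central. The element $v_j(x)$ need not be central a priori, however.

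We argue centrality as follows. For any $r(x)$, the element $r\varepsilon_j-\varepsilon_j r$ lies in the two-sided ideal generated by $F_j$, since $\varepsilon_j=1-w_jf_j^{k_j}$ modulo $x^{np^s}-\lambda$. By the same identity, $\varepsilon_j$ acts as the identity modulo $f_j^{k_j}$ and as $0$ modulo $F_j$. The quotient $\mathcal{R}^{np^s}_{k,\lambda}$ embeds into $R_k[x;\Theta]/\langle f_j^{k_j}\rangle\times R_k[x;\Theta]/\langle F_j\rangle$ by a CRT argument using $v_jF_j+w_jf_j^{k_j}=1$ and centrality of both moduli. Under this embedding $\varepsilon_j$ maps to $(1,0)$, which is central, so $\varepsilon_j$ is central. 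With this in hand, item (1) of the lemma gives a complete set of orthogonal central idempotents.

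(1)$\Rightarrow$(2): let $\mathcal{C}$ be a left ideal. Since $\sum_j\varepsilon_j=1$, each $c\in\mathcal{C}$ satisfies $c=\sum_j\varepsilon_j c$, and $\varepsilon_j c\in\mathcal{C}$. Hence $\mathcal{C}=\bigoplus_j\varepsilon_j\mathcal{C}$, and the sum is direct by orthogonality. Set $\mathcal{C}_j=\varphi_j^{-1}(\varepsilon_j\mathcal{C})$. This is a left ideal of $\mathcal{R}^{np^s}_{k,f_j,k_j}$, because $\varepsilon_j\mathcal{C}=\mathcal{C}\cap\mathcal{A}_j$ is a left ideal of $\mathcal{A}_j$ and $\varphi_j$ is a ring isomorphism. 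Then $\varepsilon_j\mathcal{C}_j$, read via $\varphi_j$, equals $\varepsilon_j\mathcal{C}$.

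Uniqueness: if $\mathcal{C}=\bigoplus\varepsilon_j\mathcal{C}_j'$, multiplying by $\varepsilon_i$ and using orthogonality gives $\varepsilon_i\mathcal{C}=\varepsilon_i\mathcal{C}_i'$. Injectivity of $\varphi_i$ then forces $\mathcal{C}_i'=\mathcal{C}_i$.

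(2)$\Rightarrow$(1): each $\varphi_j(\mathcal{C}_j)$ is a left ideal of $\mathcal{A}_j$. By item (2) of the lemma, $\mathcal{A}_\ell\mathcal{A}_j=0$ for $\ell\neq j$. Write $r=\sum_\ell r\varepsilon_\ell$; then $r\cdot\varphi_j(\mathcal{C}_j)=(r\varepsilon_j)\varphi_j(\mathcal{C}_j)\subseteq\varphi_j(\mathcal{C}_j)$. A finite sum of left ideals is a left ideal, so $\mathcal{C}$ is a left ideal.

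(2)$\Leftrightarrow$(3): apply $\varphi^{-1}$ from item (4) of the lemma. It sends $\sum_j\varepsilon_jc_j$ to $(c_1,\dots,c_t)$, so $\mathcal{C}$ corresponds to $\bigoplus\mathcal{C}_j$. Since $f_j^{k_j}$ is central of degree $l_jk_j$, a left ideal of $R_k[x;\Theta]/\langle f_j^{k_j}\rangle$ is, by the remark preceding the theorem, exactly a skew $(f_j^{k_j},\Theta)$-polycyclic code of length $l_jk_j$. This is done via the coefficient identification, which uses the division lemma since $f_j^{k_j}$ is monic.

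The main obstacle is the centrality of $\varepsilon_j$. If the CRT embedding argument proves awkward, the fallback is to note that the ideal decomposition in item (2) of the lemma already yields $\mathcal{A}_j\mathcal{A}_\ell=0$. Then $r\varepsilon_j=\sum_\ell\varepsilon_\ell r\varepsilon_j$, and the same expansion for $\varepsilon_j r$ gives $r\varepsilon_j=\varepsilon_j r\varepsilon_j=\varepsilon_j r$.
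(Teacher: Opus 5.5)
Your proposal is correct and takes the same route the paper intends: the paper gives no proof of this theorem and treats it as a consequence of the preceding idempotent/CRT lemma (deferred to the analogous computations of Cao et al.), which is exactly your reduction via the orthogonal idempotents $\varepsilon_j(x)$ and the isomorphisms $\varphi_j,\varphi$. Your explicit check that the $\varepsilon_j(x)$ are central, via the CRT embedding or the fallback from $\mathcal{A}_j\mathcal{A}_\ell=\{0\}$, fills in a detail the paper only attributes to the centrality of $f_j(x)^{k_j}$ and $x^{np^s}-\lambda$; one small slip is that $r\varepsilon_j-\varepsilon_j r\in\langle F_j\rangle$ comes from $\varepsilon_j=v_jF_j$, whereas $\varepsilon_j=1-w_jf_j^{k_j}$ instead places it in $\langle f_j^{k_j}\rangle$ (and together the two give centrality, since $\langle F_j\rangle\cap\langle f_j^{k_j}\rangle=\langle x^{np^s}-\lambda\rangle$).
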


Hence, from \Cref{decomposition of consta code} in order to study skew $(\lambda,\Theta)$-constacyclic code of length $np^s$ over $R_k$, it is sufficient to study 
left ideal $\mathcal{I}$ of $\mathcal{R}_{k,f_j, k_j}^{np^{s}}$. 
 	Let
	$
	\mathcal{R}^{np^s}_{k, f,j}
	=\frac{R_k[x, \Theta]}{\langle (f(x))^{j}\rangle},
	$
	where $f(x)^{j}$ is an irreducible central polynomial dividing $x^{np^s}-\lambda$ such that degree of $f(x)=l$.
	So naturally $\mu_k$ are also working for 	$
	\mathcal{R}^{np^s}_{k, f, j},~j\in \mathbb{N}$.

	\begin{proposition}\label{ideal over F_{p^m}}
		Let
		$
		\mathcal{R}^{np^s}_{1, f, j}
		=\frac{\mathbb{F}_{p^m}[x;\theta]}{\langle f(x)^{j}\rangle},~j \in \mathbb{N},
		$
		where $f(x)$ is an irreducible central polynomial dividing $x^{np^s}-\alpha,~ \alpha \in \mathbb{F}^*.$ 
		Then $	\mathcal{R}^{np^s}_{1,f,j}$ is a principal left ideal ring.
		Moreover, every left ideal of $	\mathcal{R}^{np^s}_{1, f,j}$ is generated by
		$
		a(x)+\langle f(x)^{j}\rangle,
		$
		where $a(x)$ is a  divisor of $f(x)^{j}$ in $\mathbb{F}_{p^m}[x;\theta]$.
	\end{proposition}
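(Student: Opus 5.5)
The approach is the standard one for quotients of a left Euclidean ring by a two-sided ideal. Since $\theta$ is an automorphism of the field $\mathbb{F}_{p^m}$, every nonzero element has an invertible leading coefficient. So the division lemma of \cite{McD74} applies to every nonzero divisor, and $\mathbb{F}_{p^m}[x;\theta]$ is a left Euclidean domain. From this I will first show that every left ideal $J$ of $\mathbb{F}_{p^m}[x;\theta]$ is principal: pick a nonzero $g(x)\in J$ of minimal degree and divide any $h(x)\in J$ on the right by $g(x)$. The remainder lies in $J$ and has smaller degree, so it vanishes, and hence $J=\mathbb{F}_{p^m}[x;\theta]g(x)$. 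Multiplying $g(x)$ by the inverse of its leading coefficient, we may take $g(x)$ monic.

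Next I pass to the quotient. Since $f(x)$ is central, so is $f(x)^j$, and $\langle f(x)^j\rangle$ is a two-sided ideal. Hence $\mathcal{R}^{np^s}_{1,f,j}$ is a ring and the natural projection $\pi:\mathbb{F}_{p^m}[x;\theta]\to\mathcal{R}^{np^s}_{1,f,j}$ is a surjective ring homomorphism. Let $I$ be a left ideal of $\mathcal{R}^{np^s}_{1,f,j}$. Its preimage $\pi^{-1}(I)$ is a left ideal containing $\langle f(x)^j\rangle$, so by the first step it equals $\mathbb{F}_{p^m}[x;\theta]a(x)$ for a monic $a(x)$. Then $I=\pi(\pi^{-1}(I))$ is generated by $a(x)+\langle f(x)^j\rangle$, which proves that $\mathcal{R}^{np^s}_{1,f,j}$ is a principal left ideal ring.

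It remains to see that the generator $a(x)$ divides $f(x)^j$. Since $f(x)^j\in\pi^{-1}(I)=\mathbb{F}_{p^m}[x;\theta]a(x)$, we have $f(x)^j=q(x)a(x)$, so $a(x)\mid_r f(x)^j$. I expect this to be the only point needing care, namely the side of divisibility. Because $f(x)^j$ is central, right divisibility also gives left divisibility, so the statement's ``divisor'' is unambiguous. To see this, write $f^j=qa$; then $a f^j=f^j a=qa\,a$. Cancelling $a$ on the right, which is allowed in a domain, gives $a q=f^j$, so $a(x)\mid_l f(x)^j$ as well.

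Finally, the irreducibility of $f(x)$ and the relation $f(x)\mid x^{np^s}-\alpha$ are not needed for this argument. They only matter if one wants to list the divisors explicitly, which the statement does not require.
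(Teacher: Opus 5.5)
The paper gives no proof of this proposition. It is stated in the preliminaries as a standard fact and then used as the base case $k=1$ of the classification of left ideals of $\mathcal{R}^{np^s}_{k,f,j}$, and in the examples. Your argument is the standard one, and it is correct:
\begin{itemize}
\item right division by polynomials with invertible leading coefficient makes every left ideal of $\mathbb{F}_{p^m}[x;\theta]$ principal;
\item centrality of $f(x)^j$ makes $\langle f(x)^j\rangle$ two-sided, so left ideals of the quotient correspond to left ideals of $\mathbb{F}_{p^m}[x;\theta]$ containing $f(x)^j$;
\item the monic generator $a(x)$ of the preimage is therefore a right divisor of $f(x)^j$.
\end{itemize}

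Your extra observation that a right divisor of a central element is also a left divisor is worth keeping. The statement says ``divisor'' without naming a side, while the paper later works with $\mid_r$ explicitly. Make the cancellation step explicit, though: from $f^j=qa$ you get $(aq)a=a f^j=f^j a=(qa)a$, and cancelling $a\neq 0$ on the right gives $aq=qa=f^j$. As written, you skip the substitution $a f^j=a(qa)$, which is needed before cancelling.

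You are also right that neither the irreducibility of $f(x)$ nor $f(x)\mid x^{np^s}-\alpha$ is used here. The same argument works for the quotient by any nonzero central polynomial.
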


	\begin{definition}The set $\mathcal{B}^{np^s}_{ \mu_k{(f^j)}}$ is defined as the set of all monic divisors of the central polynomial $(\mu_k(f(x)^j))$ in $\mathbb{F}_{p^m}[x;\theta]$, where $\mu_k(f(x))$ is an irreducible central polynomial such that  $\mu_k(f(x)^j)$ dividing $x^{np^s}-\alpha$.
		\[ \mathcal{B}^{np^s}_{ \mu_k(f^j)} = \{ g(x) \in \mathbb{F}_{p^m}[x;\theta] \mid g(x) \text{ is a divisor of } \mu_k(f(x)^j) \}. \]
	\end{definition}
	Following  Theorem 3.5 and Proposition 3.6 of  \cite{HS23},  we have following propositions.
\begin{proposition}\label{ideal over R_2} 
	Every left ideal of $\mathcal{R}^{np^s}_{2, f, j}, ~j \in \mathbb{N}$ is of the form
	\[
	\mathcal{I}
	= \mathcal{R}^{np^s}_{2,f, j}( a_1(x)
	+ u r_{1,1}(x))
	+ \mathcal{R}^{np^s}_{2, f, j}\, (u a_2(x)),
	\]
	where $a_1(x)=0$ or $a_1(x)\in \mathcal{B}^{np^s}_{\mu_2(f^j)}$, 
	$r_{1,1}(x)\in \mathcal{R}^{p^s}_{1, \mu_2(f),j}$,
	$a_2(x)=0$ or $a_2(x)\in \mathcal{B}^{np^s}_{\mu_2(f),j}$,
	$a_2(x)\mid_r a_1(x)$ and  $r_{1,1}(x)$ under the above conditions is unique.
Also, if $a_2(x) \neq 0$ then $\deg(r_{1,1}(x))<\deg(a_2(x))$.

\end{proposition}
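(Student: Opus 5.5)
The plan is the standard torsion-code argument for chain-ring quotients, applied to $\mathcal{R}^{np^s}_{2,f,j}$ with $R_2=\mathbb{F}_{p^m}+u\mathbb{F}_{p^m}$.

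Let $\mathcal{I}$ be a left ideal of $\mathcal{R}^{np^s}_{2,f,j}$. We attach to it the two torsion codes $\operatorname{Tor}_1(\mathcal{I})=\mu_2(\mathcal{I})$ and $\operatorname{Tor}_2(\mathcal{I})=\mu_2((\mathcal{I}:u))$. Both are left ideals of $\mathcal{R}^{np^s}_{1,\mu_2(f),j}$. By \Cref{ideal over F_{p^m}} each is principal, generated by a monic right divisor of $\mu_2(f(x))^j$ or equal to $0$. Write $a_1(x)$ and $a_2(x)$ for these generators, so that $a_1(x),a_2(x)\in\mathcal{B}^{np^s}_{\mu_2(f^j)}\cup\{0\}$.

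\textbf{Step 1: divisibility.} If $g\in\mathcal{I}$, then $ug\in\mathcal{I}$, so $g\in(\mathcal{I}:u)$. Hence $\operatorname{Tor}_1(\mathcal{I})\subseteq\operatorname{Tor}_2(\mathcal{I})$, which gives $a_1\in\mathcal{R}_1a_2$. Since both are monic right divisors of the central polynomial $\mu_2(f)^j$, this means $a_2\mid_r a_1$.

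\textbf{Step 2: lifting generators.} Choose $g_1\in\mathcal{I}$ with $\mu_2(g_1)=a_1$; it can be written $g_1=a_1+ur_{1,1}$ with $r_{1,1}\in\mathcal{R}^{np^s}_{1,\mu_2(f),j}$. This uses that $\Theta(u)$ is a unit multiple of $u$, so $uR_2[x;\Theta]=R_2[x;\Theta]u$ and $u\,\mathcal{R}_2\cong\mathcal{R}_1$ as left modules via $\mu_2$. Similarly, choose $h\in(\mathcal{I}:u)$ with $\mu_2(h)=a_2$. Then $uh=ua_2\in\mathcal{I}$, because $u^2=0$ kills the $u$-part of $h$.

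\textbf{Step 3: generation.} Let $J=\mathcal{R}_2(a_1+ur_{1,1})+\mathcal{R}_2(ua_2)\subseteq\mathcal{I}$. Take $c\in\mathcal{I}$. Since $\mu_2(c)=q\,a_1$ for some $q$, we have $c-q(a_1+ur_{1,1})\in\mathcal{I}\cap u\mathcal{R}_2$; call it $ub$. Then $b\in(\mathcal{I}:u)$, so $\mu_2(b)=q'a_2$, and $ub=uq'a_2$, where $q'$ is read in $\mathcal{R}_2$ and only its image matters. Hence $c\in J$, so $\mathcal{I}=J$ with the stated shape. The converse, that every such sum is a left ideal, is immediate.

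\textbf{Step 4: normalization and uniqueness.} If $a_2\neq0$, divide $r_{1,1}$ on the right by the monic $a_2$, say $r_{1,1}=qa_2+r$ with $\deg r<\deg a_2$. Then $a_1+ur_{1,1}-u\,\tilde q\,a_2=a_1+ur$ for a suitable lift $\tilde q$, obtained by twisting $q$ through $\Theta$ when moving it past $u$. So $r_{1,1}$ may be replaced by $r$, giving $\deg(r_{1,1})<\deg(a_2)$.

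For uniqueness, $a_1$ and $a_2$ are determined by $\mathcal{I}$ as the monic generators of its torsion codes. If $a_1+ur$ and $a_1+ur'$ both lie in $\mathcal{I}$, their difference $u(r-r')$ lies in $\mathcal{I}\cap u\mathcal{R}_2$. Hence $r-r'\in\operatorname{Tor}_2(\mathcal{I})=\mathcal{R}_1a_2$, and the degree bound forces $r=r'$.

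The main obstacle is the bookkeeping for the non-commutativity introduced by $\Theta(u)$. The identities $uq=\Theta'(q)u$ and $\mu_2(uq)=\ldots$ must be handled with care, where $\Theta'$ is the induced twist. The key point is that $u\mathcal{R}_2$ is a two-sided ideal whose left-module structure is that of $\mathcal{R}_1$, possibly twisted. I would isolate this as a preliminary remark, relying on the fact that automorphisms of $R_2$ send $u$ to $\beta u$ with $\beta$ a unit. After that, the argument follows Theorem 3.5 and Proposition 3.6 of \cite{HS23} essentially verbatim, with $x^{p^s}-\lambda$ replaced by the central polynomial $f(x)^j$.
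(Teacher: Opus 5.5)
Your proposal is correct and follows essentially the same route as the paper. The paper defers this case to Theorem 3.5 and Proposition 3.6 of \cite{HS23} and runs the identical torsion-code argument (splitting off $u\mathcal{R}_2\cap\mathcal{I}$, right division by $a_2$, uniqueness from $\operatorname{Tor}_2(\mathcal{I})=\mathcal{R}_1a_2$ plus the degree bound) in its proof of Theorem~\ref{ideals R_k}; your derivation of $a_2\mid_r a_1$ from $\operatorname{Tor}_1(\mathcal{I})\subseteq\operatorname{Tor}_2(\mathcal{I})$ is cleaner than the paper's contradiction argument for that step, and your loose claim that $u\mathcal{R}_2\cong\mathcal{R}_1$ as left modules (true only up to a twist) is harmless because the argument only needs $u\mathcal{R}_2=\mathcal{R}_2u$.
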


	\begin{proposition}\label{special 2}
		The  left ideals of the ring $\mathcal{R}^{np^{s}}_{2,f, j}$, can be separated into the following four types:
		\begin{itemize}
			\item[(i)] \emph{ Trivial ideals:}
			$
			\{0\}, \ \mathcal{R}^{np^{s}}_{2,f, j}.
			$
				\item[(ii)] \emph{ Principal left ideals with non-monic polynomial generators:}
			$
			\mathcal{R}^{np^{s}}_{2, f,j} \,( u a_2(x)),
			$
			where $a_2(x) \in \mathcal{B}^{np^s}_{\mu_2(f^j)}$ and $0 \le \deg(a_2(x)) \le lj-1$.
			
			\item[(iii)] \emph{Principal left ideals:}
			$
			\mathcal{R}^{np^{s}}_{2,f,j}( a_1(x) + u r_{1,1}(x)),
			$
			where $a_1(x) \in \mathcal{B}^{np^s}_{\mu_2(f^j)}$, $1 \le \deg(a_1(x)) \le lj-1$, and
			$\deg(r_{1,1}(x)) < \deg(a_1(x))$.
			Moreover, $r_{1,1}(x)$ satisfying the above conditions is unique.
			
			\item[(iv)] \emph{Non-principal left ideals:}
			$
			\mathcal{R}^{np^{s}}_{2,f,j} (a_1(x) + u r_{1,1}(x)) + \mathcal{R}^{np^{s}}_{2,f,j} (u a_2(x)),
			$
			where $a_1(x), a_2(x) \in \mathcal{B}^{np^s}_{\mu_2(f^j)}$, 
			$1 \le \deg(a_1(x)) \le lj-1$, 
			$a_2(x) \mid_{r} a_1(x)$, and
			$\deg(r_{1,1}(x)) < \deg(a_2(x))$.
			Moreover, $r_{1,1}(x)$ satisfying the above conditions is unique.
		\end{itemize}
	\end{proposition}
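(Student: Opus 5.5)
The plan is to derive this classification directly from \Cref{ideal over R_2}, by splitting according to whether $a_1(x)$ and $a_2(x)$ vanish and on their degrees. By \Cref{ideal over R_2}, every left ideal has the form $\mathcal{I}=\mathcal{R}^{np^s}_{2,f,j}(a_1(x)+ur_{1,1}(x))+\mathcal{R}^{np^s}_{2,f,j}(ua_2(x))$. The first key observation is that $\mu_2(\mathcal{I})=\operatorname{Tor}_1(\mathcal{I})$ is generated by $a_1(x)$, and $\operatorname{Tor}_2(\mathcal{I})=(\mathcal{I}:u)$ under $\mu_2$ is generated by $a_2(x)$, with $a_2\mid_r a_1$. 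Everything below is read off from these two torsion codes.

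The cases are as follows.

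\textbf{Case $a_1=0$.} Here $\mathcal{I}=\mathcal{R}(ua_2)$ is principal. If $a_2=0$, then $\mathcal{I}=\{0\}$. If $a_2=1$, then $\mathcal{I}=\langle u\rangle$, and $\deg a_2=0$ is permitted in type (ii). Any $a_2$ with $\deg a_2\le lj-1$ gives type (ii). Taking $a_2=\mu_2(f^j)$ gives $ua_2=0$ in the quotient, so this also reduces to $\{0\}$.

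\textbf{Case $\deg a_1=0$, i.e.\ $a_1=1$.} Then $1+ur_{1,1}$ is a unit, since $u^2=0$ gives inverse $1-ur_{1,1}$. Hence $\mathcal{I}$ is the whole ring, which is trivial.

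\textbf{Case $1\le\deg a_1\le lj-1$.} Split on $a_2$. If $a_2=a_1$, then $u a_1 = u(a_1+ur_{1,1})$, so the second generator is redundant and $\mathcal{I}$ is principal of type (iii). The degree bound $\deg r_{1,1}<\deg a_2=\deg a_1$ comes from \Cref{ideal over R_2}. If $a_2$ is a proper right divisor of $a_1$, the ideal is of type (iv), provided we show it is not principal. (The case $a_1=\mu_2(f^j)$, i.e.\ $a_1=0$ in the quotient, is absorbed into the first case.)

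The main obstacle is the non-principality in type (iv), together with the claim that a principal generator $a_1+ur_{1,1}$ yields $a_2=a_1$. For the latter, I would compute $(\mathcal{I}:u)$ for a single generator $g=a_1+ur_{1,1}$. We have $ug\in\mathcal{I}$ and $\mu_2(ug)$-information gives $a_1\in\operatorname{Tor}_2$. Conversely, suppose $u h\in \mathcal{R}g$ with $u h = q g$. Applying $\mu_2$ forces $\mu_2(q)a_1=0$ modulo $\mu_2(f^j)$, so $\mu_2(q)$ is a left multiple of $\mu_2(f^j)a_1^{-1}$ (the cofactor, using centrality of $f^j$). Then $qg$ equals $u$ times (cofactor$\cdot r_{1,1}$ plus a multiple of $a_1$). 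Here one must use that $a_1$ divides $f^j$ and that $f^j$ annihilates the relevant terms. This is where the relation between $\operatorname{Tor}_2$ and the cofactor $\widehat{a_1}=\mu_2(f^j)/a_1$ appears, and I would verify that in the stated parametrization this forces $\operatorname{Tor}_2=\langle a_1\rangle$. Hence any principal ideal has $\operatorname{Tor}_1=\operatorname{Tor}_2$.

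Non-principality in type (iv) then follows by contradiction. A single generator $g$ would have $\mu_2(g)$ generating $\langle a_1\rangle$, hence $\operatorname{Tor}_2=\operatorname{Tor}_1$. But $\operatorname{Tor}_2=\langle a_2\rangle\ne\langle a_1\rangle$, since $\deg a_2<\deg a_1$ and the monic divisor generator is unique by \Cref{ideal over F_{p^m}}. The uniqueness of $r_{1,1}$ in each type is inherited from \Cref{ideal over R_2}. This mirrors Theorem~3.5 and Proposition~3.6 of \cite{HS23}, with $x^{p^s}-\lambda$ replaced by the central $f^j$.
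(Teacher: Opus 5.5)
Your reduction to \Cref{ideal over R_2} and the easy cases ($a_1=0$, $a_1=1$, and $a_2=a_1$ giving a principal ideal) are fine. The step you flag as the main obstacle, however, is false. Write $\mathcal{R}=\mathcal{R}^{np^s}_{2,f,j}$. A principal ideal $\mathcal{R}(a_1+ur_{1,1})$ need \emph{not} have $\operatorname{Tor}_2=\langle a_1\rangle$.

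Your own computation shows why. Take $f\in\mathbb{F}_{p^m}[x;\theta]$ for simplicity. Left multiplication by the cofactor $\hat a_1$, with $\hat a_1a_1=\mu_2(f^j)$, kills the $a_1$-part. It leaves $u\,\tilde a_1r_{1,1}$, where $\tilde a_1$ is $\hat a_1$ with coefficients twisted by $\Theta(u)=\eta_1u$. In general $\tilde a_1r_{1,1}$ is not a left multiple of $a_1$, so $\operatorname{Tor}_2$ is strictly larger.

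Here is a concrete instance. Take $\Theta=\mathrm{id}$, $n=1$, $\lambda=1$, $f=x-1$, $j=p^s$, and $g=(x-1)^i+u$ with $p^s/2<i<p^s$. Then
$(x-1)^{p^s-i}g=(x-1)^{p^s}+u(x-1)^{p^s-i}=u(x-1)^{p^s-i}$.
Hence $\operatorname{Tor}_2(\mathcal{R}g)=\langle (x-1)^{p^s-i}\rangle\neq\langle (x-1)^i\rangle=\operatorname{Tor}_1(\mathcal{R}g)$. So your contradiction argument for non-principality in type (iv) collapses.

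This cannot be patched, because in this very example the statement fails as literally written.
\begin{itemize}
\item[(a)] The ideal $\mathcal{R}((x-1)^i+u)+\mathcal{R}(u(x-1)^{p^s-i})$ satisfies every condition listed in (iv). Here $a_2=(x-1)^{p^s-i}$ is a proper right divisor of $a_1=(x-1)^i$, and $\deg r_{1,1}=0<\deg a_2$. Yet it equals the principal ideal $\mathcal{R}((x-1)^i+u)$.
\item[(b)] The elements $(x-1)^i+u$ and $(x-1)^i+u\bigl(1+(x-1)^{p^s-i}\bigr)$ generate the same left ideal, and both have $\deg r_{1,1}<\deg a_1$. So the uniqueness claimed in (iii) fails.
\end{itemize}

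That uniqueness could not have been inherited from \Cref{ideal over R_2} anyway. There $r_{1,1}$ is reduced modulo $a_2$, the actual $\operatorname{Tor}_2$-generator, not modulo $a_1$.

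The paper gives no argument of its own for this proposition; it only defers to \cite{HS23}, so nothing there closes the gap either. A correct classification must compare $a_2$ with the generator of $\operatorname{Tor}_2(\mathcal{R}(a_1+ur_{1,1}))$, which itself depends on $r_{1,1}$. For uniqueness it must reduce $r_{1,1}$ modulo that generator rather than modulo $a_1$. This is exactly the extra condition carried by the commutative classifications, e.g.\ \cite{Din10}.
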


%
%
%
\section{ Skew $(\lambda,\Theta)$-Constacyclic Codes of Length $np^s$ over $R_k$}
As shown in the preliminaries (see \Cref{decomposition of consta code}), the study of skew $(\lambda,\Theta)$-constacyclic codes of length $np^s$ over $R_k$ reduces to determining the left ideals $\mathcal{I}$ of $\mathcal{R}_{k,f,j}^{np^s}$. In this section, assuming $\gcd(n,p)=1$ and $s\ge 1$, we investigate these ideals and obtain a complete classification in the case $\lambda=\alpha\in\mathbb{F}_{p^m}^*$, leading to a unified description of their algebraic structure. 
We first generalize \cite[Lemma 2.2]{HS23}.
\begin{lemma}\label{auto R_k}
	For $\theta \in \text{Aut}(\mathbb{F}_{p^m})$, $\eta_1 \in \mathbb{F}_{p^m}^*$, and $\eta_i \in 1 +  u^{i-1} \mathbb{F}_{p^m}$, where $2\leq i \leq k-1$, let
	\[
	\Theta_{\theta, \eta_1, \eta_2, \dots, \eta_{k-1}} : R_k \longrightarrow R_k
	\]
	be defined by
	\[
	\sum_{i=0}^{k-1} a_i u^i \longmapsto \sum_{i=0}^{k-1} \theta(a_i) \left( \prod_{j=1}^{k-1} \eta_j \right)^{i} u^i
	\]
	Then, $\text{Aut}(R_k) = \{ \Theta_{\theta, \eta_1, \eta_2, \dots, \eta_{k-1}} : \theta \in \text{Aut}(\mathbb{F}_{p^m}), \eta_1 \in \mathbb{F}_{p^m}^*, \eta_{i} \in 1 + u^{i-1} \mathbb{F}_{p^m} \}$.
\end{lemma}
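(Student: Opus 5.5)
The plan is to prove the two inclusions separately. The guiding idea is that an automorphism of $R_k=\mathbb{F}_{p^m}[u]/\langle u^k\rangle$ is determined by two pieces of data: its restriction to the coefficient field, and the image of $u$.

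\textbf{Each $\Theta_{\theta,\eta_1,\dots,\eta_{k-1}}$ is an automorphism.}
Put $w=\prod_{j=1}^{k-1}\eta_j$. Since $\eta_1\in\mathbb{F}_{p^m}^*$ and each $\eta_i\in 1+u^{i-1}\mathbb{F}_{p^m}$ is a unit, $w$ is a unit of $R_k$. The displayed map is the substitution homomorphism $\mathbb{F}_{p^m}[u]\to R_k$ that acts by $\theta$ on coefficients and sends $u\mapsto uw$, reduced modulo $u^k$.
\begin{itemize}
\item It is well defined on $R_k$ because $(uw)^k=u^kw^k=0$.
\item It is a ring homomorphism, since $\theta$ is a field automorphism and $R_k$ is commutative.
\item It is surjective. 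The image contains $\mathbb{F}_{p^m}$, because $\theta$ is onto. It also contains $uw$, hence $u=(uw)w^{-1}$ once we see that $w^{-1}$ lies in the image. For this, write $w^{-1}$ as a polynomial in $u$ over $\mathbb{F}_{p^m}$ and argue by a triangular (filtration) argument: the image contains every element congruent to a given element modulo $u^{i+1}R_k$, inductively in $i$. Equivalently, the image is a subring containing $\mathbb{F}_{p^m}$ and an element $uw$ generating the maximal ideal $uR_k$, and such a subring is all of $R_k$.
\item Finiteness of $R_k$ then gives bijectivity.
\end{itemize}

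\textbf{Every automorphism has this form.}
Let $\Theta\in\mathrm{Aut}(R_k)$.
\begin{enumerate}
\item $\Theta$ preserves $\mathbb{F}_{p^m}$. This subfield is exactly the set of solutions of $y^{p^m}=y$ in $R_k$: if $y=a_0+n$ with $n$ nilpotent and $y^{p^m}=y$, then $y^{p^m}=a_0^{p^m}+n^{p^m}$, which forces $n=n^{p^{m}}=\dots=0$. So $\theta:=\Theta|_{\mathbb{F}_{p^m}}$ is a field automorphism.
\item $\Theta$ preserves the unique maximal ideal $uR_k$, so $\Theta(u)=uw$ for some $w\in R_k$.
\item $w$ must be a unit. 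Otherwise $\Theta(u)\in u^2R_k$, hence $\Theta(u)^{k-1}=0$, contradicting $u^{k-1}\neq 0$ and injectivity.
\item By the homomorphism property, $\Theta\bigl(\sum a_iu^i\bigr)=\sum\theta(a_i)w^iu^i$.
\end{enumerate}

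\textbf{Factoring the unit $w$ (the main obstacle).}
It remains to show that every unit $w$ can be written as $\eta_1\eta_2\cdots\eta_{k-1}$ with $\eta_1\in\mathbb{F}_{p^m}^*$ and $\eta_i\in1+u^{i-1}\mathbb{F}_{p^m}$. The bookkeeping here is where care is needed. I would peel off factors one order at a time:
\begin{itemize}
\item Take $\eta_1=w_0$, the constant term of $w$.
\item Inductively, suppose $w(\eta_1\cdots\eta_{i-1})^{-1}\equiv 1+c\,u^{i-1}\pmod{u^{i}}$ for some $c\in\mathbb{F}_{p^m}$. Set $\eta_i=1+cu^{i-1}$; then the quotient by $\eta_1\cdots\eta_i$ is congruent to $1$ modulo $u^{i}$.
\item After $k-1$ steps the quotient is $\equiv1\pmod{u^{k-1}}$, i.e.\ it equals $1+c'u^{k-1}$.
\end{itemize}
With the paper's indexing, $\eta_{k-1}$ only covers order $u^{k-2}$. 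This leftover order-$u^{k-1}$ term is harmless: it multiplies $u$, so $u^{k-1}\cdot u=u^k=0$ kills it in $\Theta(u)=uw$. Thus $w$ may be replaced by $\eta_1\cdots\eta_{k-1}$ without changing $\Theta$. This gives $\Theta=\Theta_{\theta,\eta_1,\dots,\eta_{k-1}}$, completing the equality of the two sets.
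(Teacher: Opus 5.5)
Your proposal is correct and follows essentially the same route as the paper. In both, each $\Theta_{\theta,\eta_1,\dots,\eta_{k-1}}$ is shown to be a bijective ring homomorphism: you argue surjectivity of the substitution $u\mapsto uw$, the paper argues injectivity by comparing coefficients, and both finish with finiteness. Conversely, both restrict an automorphism to $\mathbb{F}_{p^m}$, write $\Theta(u)=uw$ with $w$ a unit, and factor $w$ as $\eta_1\cdots\eta_{k-1}$ up to a factor $1+cu^{k-1}$ that is killed by multiplication with $u$. Your version is more careful than the paper's at two points it only asserts: why $\Theta(\mathbb{F}_{p^m})\subseteq\mathbb{F}_{p^m}$ (your characterization of $\mathbb{F}_{p^m}$ as the solutions of $y^{p^m}=y$ in $R_k$), and why $w$ must be a unit.
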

\begin{proof} Denote  $\Theta_{\theta, \eta_1, \dots, \eta_{k-1}} $ by $ \Theta$ for brevity. We first check that
	$\Theta  \in \text{Aut}(R_k)$. Indeed, $\Theta$ is a ring homomorphism. For,  if $A = \sum_{i=0}^{k-1} a_i u^i$ and $B = \sum_{j=0}^{k-1} b_j u^j$,  then 
		\[
	\Theta ( A + B ) = \Theta \left( \sum_{i=0}^{k-1} (a_i + b_i) u^i \right) = \sum_{i=0}^{k-1} \theta(a_i + b_i) \left( \prod_{j=1}^{k-1} \eta_j \right)^i u^i.
	\]
	Since $\theta \in \text{Aut}(\mathbb{F}_{p^m})$, we have that $ \theta(a_i + b_i) = \theta(a_i) + \theta(b_i)$. Consequently,
	\[
	\Theta(A+B) = \sum_{i=0}^{k-1} \theta(a_i) \left( \prod_{j=1}^{k-1} \eta_j \right)^i u^i + \sum_{i=0}^{k-1} \theta(b_i) \left( \prod_{j=1}^{k-1} \eta_j \right)^i u^i = \Theta(A) + \Theta(B).
	\]
	Also, using Cauchy product we have,
	\[
	\Theta(AB) = \Theta \left( \sum_{i=0}^{k-1} \left( \sum_{j=0}^{i} a_j b_{i-j} \right) u^i \right) = \sum_{i=0}^{k-1} \theta \left( \sum_{j=0}^{i} a_j b_{i-j} \right) \left( \prod_{j=1}^{k-1} \eta_j \right)^i u^i
	\]
	\[
	= \sum_{i=0}^{k-1} \left( \sum_{j=	0}^{i} \theta(a_j) \theta(b_{i-j}) \right) \left( \prod_{j=1}^{k-1} \eta_j \right)^i u^i = \Theta(A)\Theta(B).
	\]
	To check injectivity of $\Theta$, 	suppose $\Theta \left( \sum_{i=0}^{k-1} a_i u^i \right) = 0$, which implies
	\[
	\theta(a_0) + \theta(a_1)\left( \prod_{j=1}^{k-1} \eta_j \right)u + \theta(a_2)\left( \prod_{j=1}^{k-1} \eta_j \right)^2 u^2 + \dots +\theta(a_{k-1})\left( \prod_{j=1}^{k-1} \eta_j \right) ^{k-1}u^{k-1}= 0.
		\] 
		By comparing the coefficient of $1$ (the constant term) on both sides, we immediately obtain 
	\[
\prod_{j=1}^{k-1} \eta_j \left(\theta(a_1)u + \theta(a_2)\left( \prod_{j=1}^{k-1} \eta_j \right) u^2 + \dots +\theta(a_{k-1})\left( \prod_{j=1}^{k-1} \eta_j \right) ^{k-2}u^{k-1}\right)= 0.
		\] 
		Since $\prod_{j=1}^{k-1} \eta_j $ is invertible, hence we get 
			\[
		\theta(a_1)u + \theta(a_2)\left( \prod_{j=1}^{k-1} \eta_j \right) u^2 + \dots +\theta(a_{k-1})\left( \prod_{j=1}^{k-1} \eta_j \right) ^{k-2}u^{k-1}= 0.
		\] 
		Now, by comparing the coefficient of $u$ on both sides, we get that $a_1=0$.  
	Continuing this recursive process for higher powers of $u$ yields
	$
	\theta(a_0) = 0, \theta(a_1) = 0, \dots, \theta(a_{k-1}) = 0.
	$
	 Thus, $\text{ker} ~\Theta= \{0\}$ i.e., $\Theta$ is injective. As  $R_k$ is a finite ring, we have that $\Theta \in \text{Aut}(R_k)$.\\
	  We now check  the set $ \{ \Theta_{\theta, \eta_1, \eta_2, \dots, \eta_{k-1}} : \theta \in \text{Aut}(\mathbb{F}_{p^m}), \eta_1 \in \mathbb{F}_{p^m}^*, \eta_i \in 1 + \sum_{j=1}^{i} u^j \mathbb{F}_{p^m} \}$ is a complete set of automorphisms of $R_k$.
	 Let $\Phi \in \text{Aut}(R_k)$. Since $\mathbb{F}_{p^m}$ is the residue field of $R_k$, the restriction of $\Phi$ to $\mathbb{F}_{p^m}$ gives a field automorphism $\theta \in \text{Aut}(\mathbb{F}_{p^m})$.
	 		As $\langle u \rangle$ is the unique maximal ideal of $R_k$, it must map to itself under any ring automorphism of $R_k$  and
	 		 hence,	 $\Phi(u) = \eta u,$
	 	for some unit $\eta \in R_k^*$.	 Any  element of $R_k^*$ can be written as	$
	 	\eta = \eta_1 \eta_2 \cdots \eta_{k},$
	 	where $\eta_1 \in \mathbb{F}_{p^m}^*$ and $\eta_i \in 1 +  u^{i-1} \mathbb{F}_{p^m}$ for $2 \leq i \leq k$. Such a decomposition exists because every unit in $R_k$ admits a unique expansion of this form.
	 	Now, $	\Phi(u)=( \eta_1 \eta_2 \cdots \eta_{k})u= (\eta_1 \eta_2 \cdots \eta_{k-1})u.$ 
	 	Consequently, $\Phi(u^i)=\Phi(u)^i = ((\eta_1 \eta_2 \cdots \eta_{k-1}) u)^i = (\eta_1 \eta_2 \cdots \eta_{k-1})^i u^i$, and we obtain
	 	\[
	 	\Phi\left(\sum_{i=0}^{k-1} a_i u^i\right)
	 	= \sum_{i=0}^{k-1} \theta(a_i)\left(\prod_{j=1}^{k-1}\eta_j\right)^i u^i.
	 	\]
	 		Thus $\Phi = \Theta_{\theta,\eta_1,\dots,\eta_{k-1}}$.
\end{proof}

\begin{theorem}\label{ideals R_k} Every skew $(f^j, \Theta)$-polycyclic code of length $lj$ where $l=$ $\deg f(x)$, i.e.,
	every left ideal $\mathcal{I}$ of $\mathcal{R}_{k,f,j}^{np^{s}}$ is of the form
	\[ \mathcal{I} = \sum_{i=1}^{k} \mathcal{R}_{k,f, j}^{np^{s}} \left(u^{i-1}a_{i}(x) + \sum_{j=1}^{k-i} u^{i-1+j}r_{i,j}(x)\right), \]
	satisfying the following conditions:
	$a_{i}(x)=0 ~\text{or}~ a_{i}(x) \in \mathcal{B}^{np^s}_{\mu_k(f^j)}$, $a_{k}(x) ~|_{r} ~a_{i}(x)$ for all $1 \le i \le k-1$,   $r_{i,j}(x)\in \mathcal{R}^{np^s}_{1, \mu_k(f),j}.$
	Moreover,  the polynomials $r_{i,j}(x)$ satisfying these conditions are unique.
	Also, if $a_{i}(x) \neq 0$ then for each $2 \le i \le k$, we have,
	\[ \max_{1 \le j \le i-1} \{\deg r_{j,i-j}(x)\} < \deg a_{i}(x).\]

\end{theorem}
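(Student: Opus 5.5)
We argue by induction on $k$. The case $k=1$ is \Cref{ideal over F_{p^m}} and the case $k=2$ is \Cref{ideal over R_2}, so fix $k\ge 3$ and assume the statement for $R_{k-1}$. Let $\mathcal{I}$ be a left ideal of $\mathcal{R}^{np^s}_{k,f,j}$. The filtration by powers of $u$ gives a chain of torsion codes
\[
\operatorname{Tor}_1(\mathcal{I})\subseteq \operatorname{Tor}_2(\mathcal{I})\subseteq\cdots\subseteq \operatorname{Tor}_k(\mathcal{I}).
\]
The inclusions hold because $u^{i-1}g\in\mathcal{I}$ implies $u^{i}g\in\mathcal{I}$. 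Each $\operatorname{Tor}_i(\mathcal{I})$ is a left ideal of $\mathcal{R}^{np^s}_{1,\mu_k(f),j}$, so by \Cref{ideal over F_{p^m}} it is generated by a unique monic right divisor $a_i(x)\in\mathcal{B}^{np^s}_{\mu_k(f^j)}$ of $\mu_k(f)^j$, or $a_i=0$. The chain of inclusions then gives $a_k\mid_r a_i$ for every $i$, and more precisely $a_{i+1}\mid_r a_i$.

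For existence, for each $i$ with $a_i\neq 0$ we choose $g_i\in(\mathcal{I}:u^{i-1})$ with $\mu_k(g_i)=a_i$. Then $u^{i-1}g_i\in\mathcal{I}$, and since $u^k=0$ only the first $k-i$ higher coefficients of $g_i$ matter. This gives
\[
u^{i-1}g_i=u^{i-1}a_i+\sum_{t=1}^{k-i}u^{i-1+t}r_{i,t}
\]
with $r_{i,t}\in\mathcal{R}^{np^s}_{1,\mu_k(f),j}$, where we identify $R_k[x;\Theta]$ with $\sum u^t\mathbb{F}_{p^m}[x;\theta]$ using $xu=\Theta(u)x$. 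Call these elements $G_i$ and let $\mathcal{J}=\sum_i \mathcal{R}G_i\subseteq\mathcal{I}$.

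To show $\mathcal{I}\subseteq\mathcal{J}$, take $c\in\mathcal{I}$ and let $u^{t}$ be the lowest power of $u$ appearing in $c$, so $c=u^{t}c'$ with $c'\in(\mathcal{I}:u^{t})$. Then $\mu_k(c')\in\operatorname{Tor}_{t+1}(\mathcal{I})=\langle a_{t+1}\rangle$, so $\mu_k(c')=h\,a_{t+1}$. Subtracting a suitable left multiple of $G_{t+1}$ removes the $u^{t}$-level of $c$, and we iterate up to level $k-1$. Here we use that $\Theta(u)=\eta u$ with $\eta$ a unit (\Cref{auto R_k}), so left multiplication by $x$ preserves the $u$-adic filtration and acts on the leading $u$-level by $\theta$ up to a unit twist. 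This lets us write $h\in\mathbb{F}_{p^m}[x;\theta]$ on the left and still land in the correct $u$-level.

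For the degree normalization we reduce $r_{j,i-j}$ modulo $a_i$ by downward division. Right-dividing $r_{j,i-j}$ by $a_i$, which is monic, gives $r_{j,i-j}=q\,a_i+r'$. We then replace $G_j$ by $G_j-\tilde q\,G_i$, where $\tilde q$ is $q$ twisted by the appropriate power of $\eta$ so that the cancellation occurs exactly at level $u^{i-1}$. This changes only the coefficients at levels $\ge i-1$, so we process $i=2,\dots,k$ in increasing order. Doing this yields $\deg r_{j,i-j}<\deg a_i$ whenever $a_i\neq0$.

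Uniqueness is the main obstacle. Suppose two normalized families $\{r_{j,t}\}$ and $\{r'_{j,t}\}$ generate the same $\mathcal{I}$ with the same $a_i$, which are determined by the torsion codes. The difference $G_j-G'_j$ lies in $\mathcal{I}$ and starts at level $u^{j-1+t}$ with coefficient $r_{j,t}-r'_{j,t}$, where $t$ is the first index at which the families differ. Because $u^{j-1+t}(r_{j,t}-r'_{j,t})+\cdots\in\mathcal{I}$, the definition of torsion codes gives $r_{j,t}-r'_{j,t}\in\operatorname{Tor}_{j+t}(\mathcal{I})=\langle a_{j+t}\rangle$, up to the $\eta$-twist, which is invertible and degree-preserving. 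Its degree is below $\deg a_{j+t}$, so it must be $0$. The delicate point is controlling the $\eta$-twisting in the non-commutative multiplication when passing to $\mu_k$. We handle it by noting that $\mu_k(u^{-t}\cdot)$ commutes with left multiplication up to composing with the automorphism $a\mapsto\eta^{t}a$ restricted modulo $u$, which preserves monicity of divisors and degrees. Induction on $t$ then gives uniqueness of all $r_{i,j}$.
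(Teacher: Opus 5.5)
Your proof is correct, but it takes a different route from the paper's. The paper genuinely inducts on $k$. It writes $\mathcal{I}=\mathcal{R}^{np^s}_{k,f,j}h(x)+\bigl(u\mathcal{R}^{np^s}_{k,f,j}\cap\mathcal{I}\bigr)$ with $\mu_k(h)=a_1$, and identifies the second summand with $u\,\pi_k^{-1}(\pi_k(\mathcal{I}:u))$. It then takes the generators $u^{i}a_{i+1}+\cdots$ from the induction hypothesis over $R_{k-1}$, so it only has to reduce the tail of the one new generator $h$ modulo the $a_{i+1}$.

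You instead handle the whole chain $\operatorname{Tor}_1(\mathcal{I})\subseteq\cdots\subseteq\operatorname{Tor}_k(\mathcal{I})$ at once. You lift its generators, prove generation by level-by-level elimination, and normalize all generators in increasing order of $u$-level. Your induction hypothesis is therefore never used, and the induction framing can simply be dropped.

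Your route gains three things. First, the divisibility conditions come out directly from the inclusions of torsion codes, and in the stronger form $a_{i+1}\mid_r a_i$. The paper instead gets $a_k\mid_r a_1$ from a case distinction that is not exhaustive, since right divisors of $\mu_k(f)^j$ in a skew polynomial ring need not be comparable. Second, the identification $\operatorname{Tor}_i(\mathcal{I})=\langle a_i\rangle$, on which uniqueness rests in both proofs, is explicit in yours rather than inherited implicitly through $\pi_k(\mathcal{I}:u)$. Third, your uniqueness argument correctly works at the lowest level where two families differ. The paper asserts $r_{1,i}-r'_{1,i}\in\operatorname{Tor}_{i+1}(\mathcal{I})$ for every $i$ simultaneously, which is justified only for the lowest differing index. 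In exchange, the paper's route recycles the settled cases $k=1,2$ and needs to normalize only one generator per step.

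Two small remarks. All your ``unit twists'' amount to one fact: $u$ is normal, $uR_k[x;\Theta]=R_k[x;\Theta]u$, because $\Theta(u)=\eta u$ with $\eta$ a unit. This is also what makes $(\mathcal{I}:u^{i-1})$ a left ideal, and it is worth stating once. Also, no twist occurs at all in the uniqueness step, since $G_j-G'_j=u^{j-1+t}\bigl((r_{j,t}-r'_{j,t})+u(\cdots)\bigr)$ with $u$ multiplying left coefficients.
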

\begin{proof}
	We will prove the theorem by induction on $k$. From  \Cref{ideal over F_{p^m}} and \Cref{ideal over R_2} we have the  base cases for $k=1, 2$.  Now assume that the theorem holds for $k-1$. Let $\mathcal{I}$ be a left ideal of $\mathcal{R}_{k,f,j}^{np^s}$ is of the form $\mathcal{I} = \frac{I }{ \langle f(x)^{j} \rangle}$, where $I$ is a left ideal in $R_k[x;\Theta]$ containing the central ideal $\langle f(x)^{j} \rangle$. Since $\mu_k : R_k[x;\Theta] \longrightarrow \mathbb{F}_{p^m}[x;\theta]$ is an epimorphism, $\mu_k(I)$ is a left ideal of $\mathbb{F}_{p^m}[x;\theta]$. Thus, there exists a polynomial $h(x) \in R_k[x;\Theta]$ such that $\mu_k(I)=\mathbb{F}_{p^m}[x;\theta]\mu_k(h(x))$. 
	It is easy to see that that $R_k[x;\Theta](h(x)) + (I \cap R_k[x;\Theta]u) \subseteq I.$ 
	Let $h_1(x)$ be an arbitrary polynomial in $I$. Then $\mu_k(h_1(x)) \in \mu_k(I)=\mathbb{F}_{p^m}[x;\theta]\mu_k(h(x))$ and hence there exists $t(x) \in R_k[x;\Theta]$ such that $\mu_k(h_1(x)) = \mu_k(t(x)) \mu_k(h(x))$ which implies that $h_1(x) - t(x)h(x) \in \ker(\mu_k) \cap I = I \cap R_k[x;\Theta]u$. Thus, $h_1(x) \in R_k[x;\Theta]h(x) + (I \cap R_k[x;\Theta]u)$ and therefore, $I = R_k[x;\Theta]h(x) + (I \cap R_k[x;\Theta]u)$. Since $\langle f(x)^{j}\rangle \subseteq$ $I$, it implies that 
	\[ \mathcal{I} = \mathcal{R}_{k,f,j}^{np^{s}} (h(x)) +u\mathcal{R}_{k,f,j}^{np^s} \cap \mathcal{I} . \]
Furthermore, we observe that $u\mathcal{R}_{k,f,j}^{np^s} \cap \mathcal{I} = u \pi_{k}^{-1}(\pi_{k}(\mathcal{I} : u))$, where $\pi_{k}(\mathcal{I} : u)$ is an ideal of $\mathcal{R}_{k-1,\pi_k(f),j}^{np^{s}}$. By the induction hypothesis, left ideals of $\mathcal{R}_{k-1,\pi_k(f),j}^{np^{s}}$ are of the form
	\[ \mathcal{I}_{k-1} = \sum_{i=1}^{k-1} \mathcal{R}_{k-1,\pi_k(f),j}^{np^{s}} \left(u^{i-1}a_{i+1}(x) + \sum_{j=1}^{k-1-i} u^{i-1+j}r_{i+1,j}(x)\right)\]
	satisfying the following conditions that
	$a_{i+1}(x) \in  \mathcal{B}^{np^s}_{\mu_{k-1}(f^j)}$, $a_{k}(x) ~|_{r} ~a_{i+1}(x)$ for all $1 \le i \le k-2$, and for each $2 \le i \le k-1$,
	$\max_{1 \le j \le i-1} \{\deg r_{j,i-j}(x)\} < \deg a_{i+1}(x).$
	Also,  $r_{i+1,j}(x)\in \mathcal{R}^{np^s}_{1, \mu_k(f),j}$ with above conditions are unique.
	The pre-image $\pi_k^{-1}$ pulls this structure from $\mathcal{R}^{np^s}_{k-1, \pi_k(f),j}$ back to $\mathcal{R}^{np^s}_{k, f,j}$ then after multiplying by $u$ we have,
	
	\[ u \mathcal{R}_{k,f,j}^{np^s} \cap \mathcal{I} = \sum_{i=1}^{k-1} \mathcal{R}_{k,f,j}^{np^{s}} \left(u^{i}a_{i+1}(x) + \sum_{j=1}^{k-1-i} u^{i+j}r_{i+1,j}(x)\right), \] where 	$a_{i+1}(x) \in  \mathcal{B}^{np^s}_{\mu_{k}(f^j)}$.
	By \Cref{ideal over F_{p^m}} we have
	\[
	\mu_k(\mathcal{I})=\mathcal{R}_{1,\mu_k(f),j}^{np^s}(\mu_k(h(x))=\mathcal{R}_{1,\mu_k(f),j}^{np^s}(a_1(x)),
	\]
	where $a_{1}(x)=0$ or $a_{1}(x) \in  \mathcal{B}^{np^s}_{\mu_k(f^j)}.$ Hence, there exist an element $v(x)$ such that $a_{1}(x)=v(x)\mu_k(h).$ Therefore, $\mu_k(h(x))$ is factor of $(\mu_k(f(x)^j)).$ We may assume, without loss of generality, $h(x)=a_1(x)+ug_{1,1}(x)+u^2g_{1,2}(x)+\cdots+u^{k-1}g_{1,k-1}(x)$ where $g_{1,j}(x)\in \mathcal{R}^{np^s}_{1, \mu_k(f),j}.$
	If $a_1(x)=0$ then $u(g_{1,1}(x)+ug_{1,2}(x)+\cdots+u^{k-2}g_{1,k-1}(x)) \in u\mathcal{R}_{k,f,j}^{np^s} \cap \mathcal{I}$, hence 
		\[
	\mathcal{I}=\sum_{i=1}^{k-1}\mathcal{R}_{k,f,j}^{np^s}\left(u^{i}a_{i+1}(x)+\sum_{j=1}^{k-i-1}u^{i+j}r_{i+1,j}(x)\right).
	\]
	 Now if $a_1(x) \neq 0$ then
	\[
	\mathcal{I}=\mathcal{R}_{k,f,j}^{np^s}\left(a_1(x)+\sum_{i^{'}=1}^{k-1}u^{i}g_{1,i{'}}(x)\right)+\sum_{i=1}^{k-1}\mathcal{R}_{k,f,j}^{np^s}\left(u^{i}a_{i+1}(x)+\sum_{j=1}^{k-i-1}u^{i+j}r_{i+1,j}(x)\right).
	\]
	If $a_1(x)\mid_r a_k(x)$, then  by induction hypothesis, we have $a_1(x)\mid_r a_2(x)$, which contradicts \Cref{ideal over R_2}. Therefore, it follows that $a_k(x)\mid_r a_1(x)$.
 Hence, we have $a_k(x) ~|_r~a_i(x)$ for all $1\leq i \leq k-1.$ 
	Suppose $a_{i+1}(x)\neq 0,$ apply right division algorithm on $g_{1,i}$ and $a_{i+1}$, we get
	\[
	g_{1,i}(x)=q_{1,i}(x)a_{i+1}(x)+r_{1,i}(x),
	\]
	where \begin{equation} \label{eq 1}
		r_{1,i}(x)=0~ \text{or deg}(r_{1,i}(x))< \text{deg} (a_{i+1}(x)).
	\end{equation}
	Therefore,
	\[
	a_1(x)+\sum_{i=1}^{k-1}u^{i}r_{1,i}(x)=a_1(x)+\sum_{i=1}^{k-1}u^{i}g_{1,i}-\sum_{i=1}^{k-1}u^{i}q_{1,i}(x)a_{i+1}(x) \in \mathcal{I}
	\]
	Thus, we can write $$\mathcal{I}=\mathcal{R}_{k,f,j}^{np^s}\left(a_1(x)+\sum_{i^{'}=1}^{k-1}u^{i}r_{1,i{'}}(x)\right)+\sum_{i=1}^{k-1}\mathcal{R}_{k,f,j}^{np^s}\left(u^{i}a_{i+1}(x)+\sum_{j=1}^{k-i-1}u^{i+j}r_{i+1,j}(x)\right),$$ where  for each $2 \le i \le k$,
	\[ \max_{1 \le j \le i-1} \{\deg r_{j,i-j}(x)\} < \deg a_{i}(x). \]
	For the uniqueness of $r_{1,i}$, for $1 \leq i \leq k-1$, suppose that $r'_{1,i}$ is another polynomial, such that 
	$\mathcal{I}=\mathcal{R}_{k,f,j}^{np^s}\left(a_1(x)+\sum_{i=1}^{k-1}u^{i}r'_{1,i}(x)\right)+\sum_{i=1}^{k-1}\mathcal{R}_{k,f,j}^{np^s}\left(u^{i}a_{i+1}(x)+\sum_{j=1}^{k-i-1}u^{i+j}g_{i+1,j}(x)\right)$.
	Hence
	$(a_1(x)+\sum_{i=1}^{k-1}u^ir_{1,i})- (a_1(x)+\sum_{i=1}^{k-1}u^ir'_{1,i}) \in \mathcal{I}$. which implies that  $\sum_{i=1}^{k-1}u^i(r_{1,i}-r'_{1,i}) \in \mathcal{I}.$ This implies $r_{1,i}-r'_{1,i} 
	\in Tor_{i+1} (\mathcal{I})=\mathcal{R}_{1, \mu_k(f),j}^{np^s}(a_{i+1}(x)).$ If $r_{1,i} \neq r'_{1,i},$ then  deg $a_{i+1}(x) \leq$ deg $(r_{1,i}-r'_{1,i})$, which is a contradiction to \Cref{eq 1}. Thus, $r_{1,i}=r'_{1,i} ~\forall i.$ 
\end{proof}

In particular, if $\lambda=\alpha$, where $\alpha \in \mathbb{F}_{p^m}^*$, then we classify all left ideals of  $ \mathcal{R}_{k,f,j}^{np^s}$. 
Thus, we are able to classify all skew cyclic and skew negacyclic codes of length $n p^s$ over 
$\mathcal{R}^{n p^{s}}_{k, f,j} $ for $\lambda=\pm 1$. 
\begin{theorem}\label{alpha ideals}
		Every left ideals of $\mathcal{R}^{np^{s}}_{k,f,j}$ can be classified as
	\[ \mathcal{I} = \sum_{i=1}^{k} \alpha_i~ \mathcal{R}_{k,f,j}^{np^{s}}\left(u^{i-1}a_{i}(x) + \sum_{j=1}^{k-i} u^{i-1+j}r_{i,j}(x)\right),  \]	where $\alpha_i\in\{0,1\}$ with $(\alpha_1,\ldots,\alpha_k)\neq(0,\ldots,0)$,
	satisfying the following conditions
	$a_{i}(x)=0~ \text{or}~ a_{i}(x) \in \mathcal{B}^{np^s}_{\mu_k(f^j)}$, $a_{k}(x) ~|_{r} ~a_{i}(x)$ for all $1 \le i \le k-1$, and  $r_{i,j}(x)\in \mathcal{R}^{np^s}_{1, \mu_k(f),j}$. 	Also, the polynomials $r_{i,j}(x)$ satisfying these conditions are unique.
	 If $a_{i}(x) \neq 0$ then for each $2 \le i \le k$, we have,
	\[ \max_{1 \le j \le i-1} \{\deg r_{j,i-j}(x)\} < \deg a_{i}(x).\]

\end{theorem}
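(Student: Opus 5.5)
The plan is to deduce the statement from \Cref{ideals R_k}, using the extra hypothesis $\lambda=\alpha\in\mathbb{F}_{p^m}^*$ only to make the ambient ring behave uniformly in every layer $u^{i-1}$. The key point is that $\alpha$ is a constant. So $\mu_k(x^{np^s}-\alpha)=x^{np^s}-\alpha$, and likewise for every $\pi_k$. Hence the central irreducible factor $f(x)$ may be taken with coefficients in $\mathbb{F}_{p^m}$, namely $f=\mu_k(f)$. Then $\mathcal{B}^{np^s}_{\mu_k(f^j)}$ is the same set of divisors in every torsion layer. Moreover, $a(x)\in\mathcal{B}^{np^s}_{\mu_k(f^j)}$ remains a right divisor of $f(x)^j$ in $R_k[x;\Theta]$, provided $\Theta$ fixes $\mathbb{F}_{p^m}$ through $\theta$; this holds by \Cref{auto R_k}, since $\Theta|_{\mathbb{F}_{p^m}}=\theta$.

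\textbf{Step 1.} Start from the generating set given by \Cref{ideals R_k}:
\[
\mathcal{I}=\sum_{i=1}^{k}\mathcal{R}^{np^s}_{k,f,j}\,g_i(x),\qquad g_i(x)=u^{i-1}a_i(x)+\sum_{j=1}^{k-i}u^{i-1+j}r_{i,j}(x).
\]
For each $i$, set $\alpha_i=0$ if the $i$-th generator is redundant, i.e.\ if $a_i(x)=0$ or $g_i(x)$ already lies in the sum of the other summands, and set $\alpha_i=1$ otherwise. Dropping the redundant generators gives the displayed form. The excluded case $(\alpha_1,\dots,\alpha_k)=(0,\dots,0)$ is just $\mathcal{I}=\{0\}$, which is listed separately, as in \Cref{special 2}(i).

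\textbf{Step 2.} Check that the divisibility chain $a_k\mid_r a_i$ and the degree bounds $\max_j\deg r_{j,i-j}<\deg a_i$ survive the deletion of generators. This holds because these conditions are read off from the torsion codes: $\operatorname{Tor}_i(\mathcal{I})=\mathcal{R}^{np^s}_{1,f,j}\,a_i(x)$ is an invariant of $\mathcal{I}$ and not of the chosen generating set. Since $\lambda$ is a constant, the identity $\operatorname{Tor}_i\subseteq\operatorname{Tor}_{i+1}$ (multiply by $u$) translates into $a_{i+1}\mid_r a_i$.

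\textbf{Step 3.} Prove uniqueness of the $r_{i,j}$ by the same torsion argument as at the end of the proof of \Cref{ideals R_k}. Suppose two choices give the same $\mathcal{I}$. Their difference is $u^{t}\bigl(r_{i,t-i+1}-r'_{i,t-i+1}\bigr)+\cdots\in\mathcal{I}$, so the leading difference lies in $\operatorname{Tor}_{t+1}(\mathcal{I})=\mathcal{R}^{np^s}_{1,f,j}\,a_{t+1}$. This contradicts the degree bound unless the difference is zero. One then inducts downward on the $u$-adic level.

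\textbf{Main obstacle.} The hardest part is Step 2 together with the claim that the selection by $\alpha_i$ is well defined and exhaustive. One must show that the choice of $\alpha_i$ is forced by $\mathcal{I}$. The approach is to set $\alpha_i=1$ exactly when $\operatorname{Tor}_i(\mathcal{I})\neq\operatorname{Tor}_{i-1}(\mathcal{I})$, with the convention that $\operatorname{Tor}_0(\mathcal{I})=0$. Proving that this criterion matches redundancy of $g_i$ requires comparing $u$-multiples of earlier generators with $g_i$ modulo the later ones. This is where the hypothesis $\lambda\in\mathbb{F}_{p^m}^*$ is really used: it ensures that $u\,g_{i-1}$ reduces modulo $f^j$ without producing new $u$-terms from $\lambda$.
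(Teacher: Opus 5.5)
Your Steps 1--3 are sound within the paper's framework, and they give a shorter route than the paper's. You take the normal form of \Cref{ideals R_k} as a black box and prune generators. You then read the divisibility chain and the uniqueness of the $r_{i,j}$ off the invariants $\operatorname{Tor}_i(\mathcal{I})=\mathcal{R}^{np^s}_{1,\mu_k(f),j}\,a_i(x)$, which hold by construction there. The paper instead re-runs the induction on $k$. It writes $\mathcal{I}=\mathcal{R}^{np^s}_{k,f,j}h(x)+u\,\pi_k^{-1}(\pi_k(\mathcal{I}:u))$ and splits on whether $\mathcal{I}\subseteq\mathcal{R}^{np^s}_{k,f,j}u$. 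The flag $\alpha_1$ just records which case occurs, and the remaining flags are inherited from the induction hypothesis applied to $\pi_k(\mathcal{I}:u)$. The statement asserts only that such a representation exists and that the $r_{i,j}$ are unique. Neither it nor the paper's proof claims the flag vector is determined by $\mathcal{I}$. So for existence it is enough to delete redundant generators one at a time, rather than all individually redundant ones simultaneously as your Step 1 literally says. Alternatively, simply keep exactly those generators with $a_i\neq 0$.

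Your ``main obstacle'' paragraph is therefore unnecessary, and the criterion it proposes is false. A jump $\operatorname{Tor}_{i-1}(\mathcal{I})\subsetneq\operatorname{Tor}_i(\mathcal{I})$ does not force a new generator, because earlier generators already contribute to $\operatorname{Tor}_i$ through left multiples that kill their leading terms. Concretely, take $k=2$, $\Theta=\mathrm{id}$, $n=1$, $\lambda=1$, $f=x-1$, $j=p^s$. Let $\mathcal{I}=\mathcal{R}^{p^s}_{2,f,p^s}\bigl((x-1)^i+u\bigr)$ with $p^s/2<i<p^s$. Then $(x-1)^{p^s-i}\bigl((x-1)^i+u\bigr)=u(x-1)^{p^s-i}$, so $(x-1)^{p^s-i}\in\operatorname{Tor}_2(\mathcal{I})\setminus\operatorname{Tor}_1(\mathcal{I})$. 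Your rule would set $\alpha_2=1$, yet $\mathcal{I}$ is principal (type (iii) of \Cref{special 2}) and the second generator is redundant.

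You also misplace the role of $\lambda\in\mathbb{F}_{p^m}^*$. Left multiplication by $u$ does not change degrees, so no reduction modulo $f^j$ is triggered. The inclusions $\operatorname{Tor}_i\subseteq\operatorname{Tor}_{i+1}$ and your uniqueness argument hold for any central $f$. The paper's proof of this theorem never uses $\lambda=\alpha$ either, so your reduction to $f=\mu_k(f)$ is harmless but not needed.
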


\begin{proof}
	We prove the theorem by induction on $k$.	For $k=1$ and $k=2$, the result follows from \Cref{ideal over F_{p^m}}
	and \Cref*{special 2}, respectively. Assume that the theorem holds for $k-1$.
	Let $\mathcal{I}$ be a left ideal of $\mathcal{R}^{np^s}_{k,f,j}$.
	As shown earlier,
	\[
	\mathcal{I}
	= \mathcal{R}^{np^s}_{k,f,j} (h(x))
	+ u\,\pi_k^{-1}\bigl(\pi_k(\mathcal{I}:u)\bigr),
	\]
	where $\pi_k(\mathcal{I}:u)$ is a left ideal of
	$\mathcal{R}^{np^s}_{k-1,\pi_k(f),j}$ and  $h(x) \in R_k[x;\theta]$ such that $	\mu_k(\mathcal{I})=\mathcal{R}_{1,\mu_k(f),j}^{np^s}(\mu_k(h(x))$.
	Case 1: Assume $	\mathcal{I} \subseteq   \mathcal{R}^{np^s}_{k,f,j}u .$ Then $	\mathcal{I}= u\,\pi_k^{-1}\bigl(\pi_k(\mathcal{I}:u)\bigr).$
	By the induction hypothesis,
	
	\[ \pi_k(\mathcal{I}:u) = \sum_{i=1}^{k-1}\alpha_i \mathcal{R}_{k-1,\pi_k(f),j}^{np^{s}} \left(u^{i-1}a_{i+1}(x) + \sum_{j=1}^{k-1-i} u^{i-1+j}r_{i+1,j}(x)\right)\]

		satisfying the following conditions
	\begin{equation}\label{eq 2}
		\begin{aligned}
			& \alpha_i \in \{0,1\} \quad \text{and} \quad (\alpha_1, \alpha_2, \ldots, \alpha_{k-1}) \neq (0,0,\ldots,0),\\
			& a_{i+1}(x) \in \mathcal{B}^{np^s}_{\mu_{k-1}(f^j)}, ~
			a_{k}(x) \mid_r a_{i+1}(x), \quad \text{for all } 1 \le i \le k-2. \\
			&\text{For each } 2 \le i \le k-1, \quad  \text{if} \quad a_{i+1} \neq 0\quad \text{then} \quad
			\max_{1 \le j \le i-1} \{ \deg r_{j,\,i-j}(x) \} < \deg a_{i+1}(x).
				\end{aligned}				
	\end{equation}
		Also, the polynomials  $r_{i+1,j}(x)\in \mathcal{R}^{np^s}_{1, \mu_k(f),j}$ satisfying these conditions are unique.
		Thus,
		\[ 	\mathcal{I}= u\,\pi_k^{-1}\bigl(\pi_k(\mathcal{I}:u)\bigr) = \sum_{i=1}^{k-1} \alpha_i \mathcal{R}_{k,f,j}^{np^{s}}\left(u^{i}a_{i+1}(x) + \sum_{j=1}^{k-i} u^{i+j}r_{i+1,j}(x)\right),  \]
which satisfies all the required conditions.\\
	Case 2:  Assume $\mathcal{I} \nsubseteq  \mathcal{R}^{np^s}_{k,f,j} u .$ 
	So, 	\[
	\mathcal{I}
	= \mathcal{R}^{np^s}_{k,f,j} (h(x))
	+ u\,\pi_k^{-1}\bigl(\pi_k(\mathcal{I}:u)\bigr),
	\]
	where $\pi_k(\mathcal{I}:u)$ is a left ideal of
	$\mathcal{R}^{np^s}_{k-1,\pi_k(f),j}$ and  $h(x) \in R_k[x;\Theta]$ such that $	\mu_k(\mathcal{I})=\mathcal{R}_{1,\mu_k(f),j}^{np^s}(\mu_k(h(x))$ and $\mu_k(h(x)) \neq 0$.
	So by similar argument as we use in above theorem we can write, 
	$$ \mathcal{R}^{np^s}_{k,f,j} (h(x))=\mathcal{R}_{k,f,j}^{np^s}\left(a_1(x)+\sum_{i^{'}=1}^{k-1}u^{i}r_{1,i{'}}(x)\right),$$	where \begin{equation} \label{eq 3}
		r_{1,i}(x)=0~ \text{or deg}(r_{1,i}(x))< \text{deg} (a_{i+1}(x)), ~ a_k(x) ~|_r~a_1(x).
	\end{equation}
	Also  the polynomials $r_{1,j}(x)$ satisfying these conditions are unique.
	So	\[
	\mathcal{I}=\mathcal{R}_{k,f,j}^{np^s}\left(a_1(x)+\sum_{i^{'}=1}^{k-1}u^{i}g_{1,i{'}}(x)\right)+ \sum_{i=1}^{k-1} \alpha_i \mathcal{R}_{k,f,j}^{np^{s}}\left(u^{i}a_{i+1}(x) + \sum_{j=1}^{k-i} u^{i+j}r_{i+1,j}(x)\right),  
	\]
	By combining conditions of  \ref{eq 2}, \ref{eq 3} we get
	$a_{i}(x) \in \mathcal{B}^{np^s}_{\mu_k{(f^j)}}$, $a_{k}(x)~ |_{r}~ a_{i}(x)$ for all $1 \le i \le k-1$. If $a_{i+1} \neq 0$ then for each $2 \le i \le k$, we have 
	\[ \max_{1 \le j \le i-1} \{\deg r_{j,i-j}(x)\} < \deg a_{i}(x), \] where $\alpha_i=0 ~\text{or} ~1$ \text{and}~ $(\alpha_1, \alpha_2, \ldots, \alpha_k) \neq (0, 0, \ldots, 0)$. The polynomials $r_{i,j}(x)$ satisfying these conditions are unique.
\end{proof}
\begin{remark}
	In particular, when $\Theta$ is the identity automorphism and $\lambda=1$, Theorems~3.2 and~3.3 of \cite{HMRS24} follow as corollaries of the results obtained in this section.
\end{remark}

We now establish an isomorphism between skew $\Theta$-cyclic codes of length $np^s$ and skew $(\lambda, \Theta)$- constacyclic codes over $R_k$.\\
For $\alpha \in \mathbb{F}_{p^m}$, we have that $\alpha^{p^m}=\alpha$. So inductively,  for any positive integer $k$, we have $\alpha^{({p^m})^k}=\alpha$.
By the division algorithm, there exist non-negative integers $q$ and $r $ such that
\[
s=q m+r,\quad 0\le r \le m-1.
\]
Let
\[
\alpha_0'=\alpha^{-p^{m-r}}.
\]
Then
\[
\alpha_0'^{p^s}
=\alpha^{-p^{(q+1)m}}
=\alpha^{-1}.
\]

\begin{proposition}\label{lemma_uniform_iso}
	Suppose $\lambda = \alpha^n, \alpha \in \mathbb{F}_{p^m}^*, n\in \mathbb{N}$. If $\Theta(\alpha) = \alpha$, then the map 
	\[ \Psi: \frac{R_k[x; \Theta]}{\langle x^{np^s} - 1 \rangle} \to \frac{R_k[x; \Theta]}{\langle x^{np^s} - \lambda \rangle} \]
	defined by $\Psi(f(x)) = f(\alpha_0' x)$ is a weight-preserving ring isomorphism. In particular,  $\Psi$ provides a one-to-one correspondence between  skew $\Theta$-cyclic codes and skew $(\lambda, \Theta)$-constacyclic code of length $np^s$.
\end{proposition}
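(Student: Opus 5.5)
The plan is to realize $\Psi$ as the map induced on quotients by a ring automorphism of $R_k[x;\Theta]$ itself. Define $\widetilde{\Psi}: R_k[x;\Theta]\to R_k[x;\Theta]$ by $\sum_i a_i x^i \mapsto \sum_i a_i(\alpha_0' x)^i$, extended additively. Then show three things in turn: $\widetilde{\Psi}$ is a ring automorphism; it carries the two-sided ideal $\langle x^{np^s}-1\rangle$ onto $\langle x^{np^s}-\lambda\rangle$; and the induced map is weight-preserving.

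\textbf{Step 1: $\widetilde{\Psi}$ is a ring automorphism.} First, $\alpha_0'=\alpha^{-p^{m-r}}$ is a power of $\alpha$, and $\Theta(\alpha)=\alpha$, so $\Theta(\alpha_0')=\alpha_0'$. Hence $(\alpha_0' x)^i=\alpha_0'\Theta(\alpha_0')\cdots\Theta^{i-1}(\alpha_0')x^i=\alpha_0'^{\,i}x^i$, and $\widetilde{\Psi}(\sum a_ix^i)=\sum a_i\alpha_0'^{\,i}x^i$.

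Multiplicativity reduces to checking that $y:=\alpha_0' x$ obeys the same commutation rule as $x$, namely $ya=\Theta(a)y$ for all $a\in R_k$. This holds because $ya=\alpha_0'\Theta(a)x=\Theta(a)\alpha_0'x$, using that $R_k$ is commutative. Since $R_k[x;\Theta]$ is free as a left $R_k$-module on the powers of $x$, any element $y$ satisfying this relation determines a ring endomorphism $x\mapsto y$ fixing $R_k$. Concretely, one checks $\widetilde{\Psi}(ax^i\cdot bx^j)=\widetilde{\Psi}(ax^i)\widetilde{\Psi}(bx^j)$, which amounts to $\alpha_0'^{\,i+j}=\alpha_0'^{\,i}\alpha_0'^{\,j}$ together with $\Theta^i(\alpha_0')=\alpha_0'$. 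The same construction with $\alpha_0'^{-1}$ gives a two-sided inverse, so $\widetilde{\Psi}$ is an automorphism. I expect this verification in the noncommutative setting to be the only delicate point, and it is settled entirely by the fact that $\Theta$ fixes $\alpha_0'$.

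\textbf{Step 2: the ideals correspond.} Compute $\widetilde{\Psi}(x^{np^s}-1)=\alpha_0'^{\,np^s}x^{np^s}-1$. From $\alpha_0'^{\,p^s}=\alpha^{-1}$ we get $\alpha_0'^{\,np^s}=\alpha^{-n}=\lambda^{-1}$, so
\[
\widetilde{\Psi}(x^{np^s}-1)=\lambda^{-1}\bigl(x^{np^s}-\lambda\bigr).
\]
Here $\lambda^{-1}$ is a unit fixed by $\Theta$, hence central. Since $\Theta(\lambda)=\lambda$ and the order of $\Theta$ divides $np^s$, \Cref{Central poly} shows $x^{np^s}-\lambda$ is central. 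Therefore $\widetilde{\Psi}$ maps the two-sided ideal $\langle x^{np^s}-1\rangle$ onto $\langle x^{np^s}-\lambda\rangle$. It follows that $\Psi$ is well defined on the quotients and is a ring isomorphism, with inverse induced by $x\mapsto\alpha_0'^{-1}x$.

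\textbf{Step 3: weight preservation and the code correspondence.} Reducing modulo either central polynomial, we may represent elements by polynomials of degree $<np^s$. On such a representative $\sum_{i<np^s}a_ix^i$, $\Psi$ multiplies the coefficient $a_i$ by the unit $\alpha_0'^{\,i}$. Thus $a_i\neq 0$ if and only if $a_i\alpha_0'^{\,i}\neq0$, so Hamming weight is preserved. The map is also $R_k$-linear, so it is a linear isometry of $R_k^{np^s}$. Finally, a ring isomorphism sends left ideals bijectively to left ideals, so $\Psi$ gives a one-to-one correspondence between skew $\Theta$-cyclic codes and skew $(\lambda,\Theta)$-constacyclic codes of length $np^s$.
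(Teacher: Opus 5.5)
Your proposal is correct and takes essentially the same route as the paper. Both substitute $x\mapsto\alpha_0'x$ and use $\Theta(\alpha_0')=\alpha_0'$ to get $x^{np^s}-1\mapsto\lambda^{-1}(x^{np^s}-\lambda)$. Both invert via $\alpha_0'^{-1}$ and read off weight preservation from the coefficientwise unit scaling $a_i\mapsto a_i\alpha_0'^{\,i}$. What you add is an explicit check that the substitution is multiplicative in the skew ring, via the commutation rule $(\alpha_0'x)a=\Theta(a)(\alpha_0'x)$; the paper uses this tacitly in its well-definedness computation when it says $\Psi$ is ``clearly'' a ring homomorphism.
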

\begin{proof}
	To prove $\Psi$ is a well-defined,
consider $f(x) \equiv g(x) \pmod{x^{np^s} - 1}$. This implies $f(x) - g(x) = h(x)(x^{np^s} - 1)$ for some $h(x) \in R_k[x; \Theta]$. Now replace $x$ by $\alpha_0'x$ both side,
	\begin{equation}
	f(\alpha_0' x) - g(\alpha_0' x) = h(\alpha_0 'x) \left( (\alpha_0' x)^{np^s} - 1 \right).
	\end{equation} 
Because $\Theta(\alpha) = \alpha$, the automorphism $\Theta$ fixes the constants $\alpha_0'$.
It follows that $(\alpha_0' x)^{np^s} = \alpha_0'^{np^s} x^{np^s}$. As $\alpha_0'^{p^s} = \alpha^{-1}$, we have,
	$ \alpha_0'^{np^s} = \lambda^{-1}. $
	\\Hence,
	\[ f(\alpha_0' x) - g(\alpha_0' x) = h(\alpha_0' x)(\lambda^{-1}x^{np^s} - 1) = \lambda^{-1}h(\alpha_0' x)(x^{np^s} - \lambda). \]
	Thus, $f(\alpha_0' x) \equiv g(\alpha_0' x) \pmod{x^{np^s}- \lambda}$, confirming that $\Psi$ is a well-defined. 
Also, clearly $\Psi$ is a ring homomorphism.
	Since $\alpha_0'$ is a unit in $\mathbb{F}_{p^m}$, it is also a unit in $R_k$. For an arbitrary element $f(x) \in \frac{R_k[x; \Theta]}{\langle x^{np^s} - \lambda \rangle}$, there exist an element  $f(\alpha_0x) \in  \frac{R_k[x; \Theta]}{\langle x^{np^s} - 1 \rangle}$ such that  $\Psi^{-1}(f(x)) = f(\alpha_0x),$ where  $\alpha_0'^{-1}=\alpha_0$. Clearly, $\Psi$ is onto, hence $\Psi$ is an isomorphism.
		Finally, the map is weight-preserving because $\Psi(\sum a_i x^i ) = \sum (a_i \alpha_0'^i) x^i , a_i \in R_k$. Since $\alpha_0'$ is a unit, the coefficient $a_i \alpha_0'^i$ is non-zero if and only if $a_i$ is non-zero, preserving the Hamming weight. 
\end{proof}
	We now demonstrate the results developed so far by considering the particular code lengths $3p^s$ and $6p^s$.

	\section{Skew constacyclic codes of length $3p^s$ over $R_k$}
First we will study the skew constacyclic codes of length $3p^s$ over $\mathbb{F}_{p^m}$.
	\subsection{Skew constacyclic codes of length $3p^s$ over $\mathbb{F}_{p^m}$}
	Let \( p \) be an odd prime with \( \gcd(p,3)=1 \), and let \( \theta \) be an automorphism of \( \mathbb{F}_{p^m} \) such that the order of \( \theta \), denoted by \( |\theta| \), divides \( p^s \).
	As observed in Section~3, for any \( \lambda \in \mathbb{F}_{p^m} \), there exists \( \lambda_0 \in \mathbb{F}_{p^m} \) such that
	$\lambda_0^{p^s}=\lambda.$
	Moreover, if \( \theta(\lambda)=\lambda \), then \( \theta(\lambda_0)=\lambda_0 \).
	Hence, $x^{3p^s}-\lambda=(x^3-\lambda_0)^{p^s}.$
	If \( \lambda_0 \) is not a cube in \( \mathbb{F}_{p^m} \), then \( x^3-\lambda_0 \) is irreducible in \( \mathbb{F}_{p^m}[x;\theta] \), and hence the structure of left ideals follows from \Cref{ideals R_k}.
	Now suppose that \( \lambda_0 \) is a cube in \( \mathbb{F}_{p^m} \), and we now undertake this case.\\
	We begin by proving the following lemma, which shall be crucial for further results.
	\begin{lemma}\label{Pro 3p^s iso}
	Let $\lambda \in \mathbb{F}_{p^m}^*$ be  such that $\lambda = \alpha^{3}$ for some $\alpha \in \mathbb{F}_{p^m}^*$ and let $\theta$ be automorphism of $\mathbb{F}_{p^m}$  such that $\theta(\lambda)=\lambda,$  the order  of $\theta$ divides $p^s$. The map 
	\[ \Phi: \frac{\mathbb{F}_{p^m}[x; \theta]}{\langle x^{3p^s} - 1 \rangle} \longrightarrow \frac{\mathbb{F}_{p^m}[x; \theta]}{\langle x^{3p^s} - \lambda \rangle} \]
	defined by $\Phi(f(x)) = f(\alpha_0' x)$, where $\alpha_0'^{p^s} = \alpha^{-1}$, is a weight-preserving ring isomorphism.
\end{lemma}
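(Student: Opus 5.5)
This lemma is essentially the case $k=1$, $n=3$ of \Cref{lemma_uniform_iso}, and I would prove it along the same lines, checking directly in $\mathbb{F}_{p^m}[x;\theta]$.

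\emph{Step 1: fixing $\alpha_0'$.} First I need $\theta(\alpha_0')=\alpha_0'$. The hypothesis only gives $\theta(\lambda)=\lambda$, not $\theta(\alpha)=\alpha$, so this is where care is needed. Write $|\theta|=p^t$ with $t\le s$, and note that $\theta$ is a power of Frobenius. The field automorphism $\theta$ is a bijection, so if $\alpha_0'^{p^s}=\alpha^{-1}$ then $\theta(\alpha_0')^{p^s}=\theta(\alpha)^{-1}$, and Frobenius is injective. Hence it suffices that $\theta(\alpha)=\alpha$. Since $\theta(\alpha)^3=\theta(\lambda)=\lambda=\alpha^3$, the ratio $\zeta=\theta(\alpha)/\alpha$ satisfies $\zeta^3=1$. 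Iterating, $\theta^{i}(\alpha)=\zeta^{\,1+q+\cdots+q^{i-1}}\alpha$ (up to the Frobenius exponent $q$ of $\theta$, which fixes $\zeta$ or sends it to $\zeta^{\pm1}$). Using $\theta^{p^t}=\mathrm{id}$ together with $\gcd(p^t,3)=1$ forces $\zeta=1$. Concretely, I would argue that the orbit of $\alpha$ under $\langle\theta\rangle$ has size dividing $p^t$, while $\alpha$ lies among the three cube roots of $\lambda$, which $\theta$ permutes. An orbit of size $1$ or $3$ dividing a power of $p\neq 3$ must have size $1$. If the cube roots in $\mathbb{F}_{p^m}$ number fewer than three, the orbit size is at most $2$, and $p$ odd again forces size $1$. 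So $\theta(\alpha)=\alpha$, and therefore $\theta(\alpha_0')=\alpha_0'$. I expect this orbit-counting step to be the only real obstacle.

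\emph{Step 2: well-definedness and the homomorphism property.} Since $\alpha_0'$ is $\theta$-fixed, $x\alpha_0'=\alpha_0'x$, so $(\alpha_0'x)^i=\alpha_0'^ix^i$. Then $x\mapsto\alpha_0'x$ induces a ring endomorphism of $\mathbb{F}_{p^m}[x;\theta]$, because $(\alpha_0'x)a=\theta(a)(\alpha_0'x)$ and the defining relation is preserved. We have $(\alpha_0'x)^{3p^s}-1=\alpha^{-3}x^{3p^s}-1=\lambda^{-1}(x^{3p^s}-\lambda)$, so the ideal $\langle x^{3p^s}-1\rangle$ maps into $\langle x^{3p^s}-\lambda\rangle$. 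These ideals are two-sided by \Cref{Central poly}, since $|\theta|$ divides $3p^s$ and $\theta(\lambda)=\lambda$. The map is therefore well defined on the quotients.

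\emph{Step 3: bijectivity and weights.} The substitution $x\mapsto\alpha_0'^{-1}x$ gives the inverse map by the same computation, so $\Phi$ is an isomorphism. Finally, $\Phi(\sum a_ix^i)=\sum a_i\alpha_0'^ix^i$ with $\alpha_0'\ne0$, so $a_i\alpha_0'^i\neq 0$ exactly when $a_i\neq 0$, and the Hamming weight is preserved.
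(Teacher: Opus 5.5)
Your proposal is correct and takes essentially the same route as the paper: first establish $\theta(\alpha)=\alpha$, then run the substitution argument of Proposition~\ref{lemma_uniform_iso}, which the paper simply cites and you re-derive, including the explicit check that $\theta(\alpha_0')=\alpha_0'$ via injectivity of the $p^s$-power map. Your orbit-counting step is a cleaner packaging of the paper's case analysis on $\theta(\alpha)\in\{\alpha,\alpha\omega,\alpha\omega^2\}$ and $\theta(\omega)$; note that among three cube roots an orbit of size $2$ can also occur, but it is excluded by the same fact you use, namely that the orbit size divides the odd number $p^t$ (with $p\neq 3$ from the standing hypothesis $\gcd(p,3)=1$).
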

\begin{proof}
First, we show that \( \theta \) fixes \( \alpha \). Since \( \theta(\lambda)=\lambda \) and \( \lambda=\alpha^3 \), we have \( \theta(\alpha)^3=\alpha^3 \). If \( \mathbb{F}_{p^m}^* \) does not contain a primitive third root of unity, then \( \theta(\alpha)=\alpha \).  In case, \( \mathbb{F}_{p^m}^* \) contains a primitive third root of unity \( \omega \), then \( \theta(\alpha)\in\{\alpha,\alpha\omega,\alpha\omega^2\} \).
 If \( \theta(\alpha)=\alpha\omega \), then \( \theta^2(\alpha)=\theta(\alpha\omega)=\theta(\alpha)\theta(\omega)=\alpha\omega\,\theta(\omega) \). Since \( \theta(\omega)\in\{\omega,\omega^2\} \), if \( \theta(\omega)=\omega \), then \( \theta^2(\alpha)=\alpha\omega^2 \) and \( \theta^3(\alpha)=\alpha \). Let \( |\theta|=p^i \), \( 1\le i\le s \). As \( \gcd(p,3)=1 \), we have \( p^i\equiv 1 \) or \( 2 \mod{3} \), hence \( \theta^{p^i}(\alpha)=\theta^{\pm1}(\alpha)\neq\alpha \), a contradiction. If \( \theta(\omega)=\omega^2 \), then \( \theta^2(\alpha)=\alpha\omega^3=\alpha \), implying \( |\theta| \) is even, again a contradiction since \( |\theta|\mid p^s \) with \( p \) odd. Similarly the case \( \theta(\alpha)=\alpha\omega^2 \) gives a contradiction.  Therefore, \( \theta(\alpha)=\alpha \).
Hence, applying \Cref{lemma_uniform_iso} with length \( 3p^s \) and \( \alpha_0'^{p^s}=\alpha^{-1} \), we conclude that \( \Phi \) is a weight-preserving ring isomorphism.
\end{proof}
 By  \Cref{Pro 3p^s iso}, the study of skew $(\lambda, \theta)$-constacyclic codes of length $3p^s$ is mathematically equivalent to the study of skew  $\theta$-cyclic codes of the same length through the isomorphism $\Phi$. Consequently, our investigation reduces to determining the algebraic structure of the ambient ring,
\[ \mathcal{R}^{3p^s}_{1} = \frac{\mathbb{F}_{p^m}[x; \theta]}{\langle x^{3p^s} - 1 \rangle}. \]
	\begin{theorem}\label{3p^s first thm}
 Suppose that $p \equiv 1 \mod{3}$. 
	Let $\theta \in \mathrm{Aut}(\mathbb{F}_{p^m})$ such that $|\theta|\mid p^s$.  Then every skew $\theta$-cyclic code $\mathcal{C}$ of length $3p^s$ over $\mathbb{F}_{p^m}$ admits a decomposition
	$
	\mathcal{C} = \mathcal{C}_1 \oplus \mathcal{C}_2 \oplus \mathcal{C}_3,
	$
	where
$\mathcal{C}_i$	is left ideal of $\frac{\mathbb{F}_{p^m}[x;\theta]}{\langle  x^{p^s}-\omega^{i-1}\rangle} , ~ i=1,2,3$, respectively and $\omega \in \mathbb{F}_p^*$ is a primitive cube root of unity. Moreover, $|\mathcal{C}| = |\mathcal{C}_1|\,|\mathcal{C}_2|\,|\mathcal{C}_3|.$  Furthermore,
	$\mathcal{C}^\perp = \mathcal{C}_1^\perp \oplus \mathcal{C}_2^\perp \oplus \mathcal{C}_3^\perp,
	$	where $\mathcal{C}_i^\perp$ is a left ideal of the quotient ring $\frac{\mathbb{F}_{p^m}[x; \theta]}{\langle x^{p^s}  - \omega^{1-i} \rangle}$ and  $\mathcal{C}$ is a self-dual code if and only if $\mathcal{C}_1 = \mathcal{C}_1^\perp$, $\mathcal{C}_2 = \mathcal{C}_3^\perp$, and $\mathcal{C}_3 = \mathcal{C}_2^\perp$.
\end{theorem}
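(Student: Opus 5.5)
Since $p\equiv 1 \pmod 3$, the prime field $\mathbb{F}_p$ contains a primitive cube root of unity $\omega$, and $x^3-1=(x-1)(x-\omega)(x-\omega^2)$. Any automorphism $\theta$ of $\mathbb{F}_{p^m}$ is a power of Frobenius, so $\theta(\omega)=\omega$. Raising to the $p^s$-th power and using $\omega^{p}=\omega$ gives the factorization in $\mathbb{F}_{p^m}[x;\theta]$
\[
x^{3p^s}-1=(x^{p^s}-1)(x^{p^s}-\omega)(x^{p^s}-\omega^2).
\]
The three factors commute, since each has coefficients in the fixed field of $\theta$. Each factor $x^{p^s}-\omega^{i-1}$ is central by \Cref{Central poly}, because $|\theta|\mid p^s$ and $\theta(\omega^{i-1})=\omega^{i-1}$.

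\textbf{Pairwise right coprimality.} The factors are pairwise right coprime. For $i\neq j$ we have
\[
(x^{p^s}-\omega^{i-1})-(x^{p^s}-\omega^{j-1})=\omega^{j-1}-\omega^{i-1},
\]
a nonzero constant, hence a unit. So there is a Bézout relation with constant coefficients and $\gcd_r=1$. The CRT-type lemma of Section~2 then applies with $t=3$, $f_i^{k_i}=x^{p^s}-\omega^{i-1}$. We may take it in this form directly: its proof only uses centrality and coprimality, not irreducibility. This produces idempotents $\varepsilon_1,\varepsilon_2,\varepsilon_3$ and a ring isomorphism
\[
\frac{\mathbb{F}_{p^m}[x;\theta]}{\langle x^{3p^s}-1\rangle}\cong \prod_{i=1}^3\frac{\mathbb{F}_{p^m}[x;\theta]}{\langle x^{p^s}-\omega^{i-1}\rangle}.
\]
By \Cref{decomposition of consta code}, every left ideal $\mathcal{C}$ is $\bigoplus \varepsilon_i\mathcal{C}_i$ with $\mathcal{C}_i$ a left ideal of the $i$-th factor. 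Since the sum is direct, the cardinality formula $|\mathcal{C}|=|\mathcal{C}_1||\mathcal{C}_2||\mathcal{C}_3|$ follows.

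\textbf{Duality.} This is the main obstacle. By Remark~\ref{remark 1}, $\mathcal{C}^\perp$ corresponds to $\mathcal{A}(\mathcal{C})^*$. I would compute the annihilator componentwise: in a product ring, $\mathcal{A}(\mathcal{C})=\bigoplus \varepsilon_i\mathcal{A}(\mathcal{C}_i)$. Then I would track how the reciprocal map $*$ acts on the factors. The reciprocal of $x^{p^s}-\omega^{i-1}$ is, up to a unit, $x^{p^s}-\omega^{1-i}$, because $\theta$ fixes $\omega$ and $-\omega^{i-1}\bigl(x^{p^s}-\omega^{1-i}\bigr)=1-\omega^{i-1}x^{p^s}$. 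Hence $*$ sends the component for $\omega^{i-1}$ to the component for $\omega^{1-i}$. Thus $\mathcal{C}_i^\perp$, meaning $\mathcal{A}(\mathcal{C}_i)^*$, is a left ideal of $\mathbb{F}_{p^m}[x;\theta]/\langle x^{p^s}-\omega^{1-i}\rangle$, and it is the $\omega^{1-i}$-component of $\mathcal{C}^\perp$.

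The delicate point is checking that $*$ is compatible with the skew multiplication, namely $(ab)^*$ versus $b^*a^*$ up to $\theta$-twists and powers of $x$. It also has to carry the idempotent $\varepsilon_i$ to a unit multiple of $\varepsilon_{i'}$ with $\omega^{i'-1}=\omega^{1-i}$. I would check this using centrality of $x^{3p^s}-1$ and the fact that $\theta$ fixes $\omega$.

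\textbf{Self-duality.} Since $\omega^{0}=1$, $\omega^{-1}=\omega^2$ and $\omega^{-2}=\omega$, the component indices swap as $1\leftrightarrow1$ and $2\leftrightarrow3$. By uniqueness of the decomposition, $\mathcal{C}=\mathcal{C}^\perp$ if and only if the components agree. This gives $\mathcal{C}_1=\mathcal{C}_1^\perp$, $\mathcal{C}_2=\mathcal{C}_3^\perp$ and $\mathcal{C}_3=\mathcal{C}_2^\perp$.
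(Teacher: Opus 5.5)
Your decomposition argument is the paper's proof. You factor $x^{3p^s}-1=(x^{p^s}-1)(x^{p^s}-\omega)(x^{p^s}-\omega^2)$ with $\omega\in\mathbb{F}_p$ fixed by $\theta$, get centrality from Proposition~\ref{Central poly}, and apply the CRT lemma with \Cref{decomposition of consta code}. Your explicit B\'ezout relation goes beyond the paper, which only asserts coprimality. Add one line: the lemma needs $\gcd_r(f_i,F_i)=1$, where $F_i$ is the product of the other two factors, and this follows by multiplying two pairwise relations since all factors are central.

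The paper's proof stops at the decomposition and never argues the cardinality, duality or self-duality assertions. Those rest on your sketch alone, and there it has a genuine gap. The step ``hence $*$ sends the component for $\omega^{i-1}$ to the component for $\omega^{1-i}$'' does not follow from the reciprocal of the modulus being an associate of $x^{p^s}-\omega^{1-i}$. The reciprocal of Remark~\ref{remark 1} acts on representatives and depends on their degree. So it is not even additive on the quotient ring, let alone an anti-homomorphism carrying $\varepsilon_i$ to $\varepsilon_{i'}$, where $\omega^{i'-1}=\omega^{1-i}$. Even granting that it permutes components, you would still need the reciprocal at length $3p^s$ of $\varepsilon_i\mathcal{A}(\mathcal{C}_i)$ to equal $\varphi_{i'}$ of the reciprocal at length $p^s$ of the annihilator of $\mathcal{C}_i$ in its component ring. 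You name this as the delicate point but do not carry it out. So the claimed form of $\mathcal{C}^\perp$, with $\mathcal{C}_i^\perp$ in the $\omega^{1-i}$-component, is not established.

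A direct computation closes the gap without annihilators. Since $x^{p^s}$ is central, $x^{3p^s}=1$ in the quotient and $3$ is a unit, the CRT idempotents are $\varepsilon_i=\tfrac13\sum_{k=0}^{2}\omega^{k(1-i)}x^{kp^s}$. Because $\theta^{p^s}=\mathrm{id}$, for $a=\sum_{j<p^s}a_jx^j$ the coefficient vector of $\varphi_i(a)=\varepsilon_i a$ is $\tfrac13\bigl(a\mid\omega^{1-i}a\mid\omega^{2(1-i)}a\bigr)$. Hence
\[
\langle\varphi_i(a),\varphi_j(b)\rangle=\tfrac19\,\langle a,b\rangle\sum_{k=0}^{2}\omega^{k(2-i-j)},
\]
which is $\tfrac13\langle a,b\rangle$ if $i+j\equiv 2\pmod 3$ and $0$ otherwise. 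Writing an arbitrary vector as $\sum_j\varphi_j(b_j)$ gives $\mathcal{C}^\perp=\varphi_1(\mathcal{C}_1^\perp)\oplus\varphi_3(\mathcal{C}_2^\perp)\oplus\varphi_2(\mathcal{C}_3^\perp)$. Here $\mathcal{C}_i^\perp$ is the ordinary Euclidean dual of the length-$p^s$ code $\mathcal{C}_i$, which is skew $(\omega^{1-i},\theta)$-constacyclic. Your index bookkeeping and the self-duality criterion then go through as you wrote.

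If you prefer to keep the annihilator route, replace the reciprocal by the anti-automorphism $\iota:\sum_j a_jx^j\mapsto\sum_j x^{-j}a_j$ of $\mathbb{F}_{p^m}[x;\theta]/\langle x^{3p^s}-1\rangle$. One checks that $\langle d,c\rangle$ is the constant term of $d\,\iota(c)$, whence $\mathcal{C}^\perp=\iota(\mathcal{A}(\mathcal{C}))$, and that $\iota(\varepsilon_i)=\varepsilon_{i'}$.
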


\begin{proof}
Any skew $\theta$-cyclic code $\mathcal{C}$ of length $3p^s$ corresponds to a left ideal of
	$
	\mathcal{R}_1^{3p^s} = \frac{\mathbb{F}_{p^m}[x;\theta]}{\langle x^{3p^s}-1 \rangle}.
	$
		Since  the order of $\theta$ divides $p^s$, the polynomial $x^{3p^s}-1$ lies in the center of $\mathbb{F}_{p^m}[x;\theta]$. Thus,
	\[
	x^{3p^s}-1 = (x^{p^s})^3 - 1 = (x^{p^s}-1)(x^{2p^s}+x^{p^s}+1).
	\]
	
	As $p \equiv 1 \mod{3}$, there exists a primitive cube root of unity $\omega \in \mathbb{F}_p^*$ satisfying
	$	\omega^2 + \omega + 1 = 0.$
	Since $\theta$ fixes $\mathbb{F}_p$, we have $\theta(\omega)=\omega$ and $\theta(\omega^2)=\omega^2$. Hence,	$
	x^{2p^s}+x^{p^s}+1 = (x^{p^s}-\omega)(x^{p^s}-\omega^2),
	$
	and therefore
	$	x^{3p^s}-1 = (x^{p^s}-1)(x^{p^s}-\omega)(x^{p^s}-\omega^2).
	$ As $|\theta|\mid p^s$, by \Cref*{Central poly},
	each of the ideals $\langle x^{p^s}-1 \rangle$, $\langle x^{p^s}-\omega \rangle$, and $\langle x^{p^s}-\omega^2 \rangle$ is a two-sided ideal in $\mathbb{F}_{p^m}[x;\theta]$, and these factors are pairwise coprime. Hence, by the Chinese Remainder Theorem, we have
	\[
	\mathcal{R}_1^{3p^s}
	\cong
	\frac{\mathbb{F}_{p^m}[x;\theta]}{\langle x^{p^s}-1\rangle}
	\oplus
	\frac{\mathbb{F}_{p^m}[x;\theta]}{\langle x^{p^s}-\omega\rangle}
	\oplus
	\frac{\mathbb{F}_{p^m}[x;\theta]}{\langle x^{p^s}-\omega^2\rangle}.
	\]
	Thus any left ideal $\mathcal{C} \subseteq \mathcal{R}_1^{3p^s}$ decomposes uniquely as
	$
	\mathcal{C} = \mathcal{C}_1 \oplus \mathcal{C}_2\oplus \mathcal{C}_3,
	$
	where each $\mathcal{C}_i$ is a left ideal of the corresponding component ring.
	\end{proof}
The following theorem establishes the algebraic structure of the dual of the skew $(\lambda, \theta)$-constacyclic code $\mathcal{C}$ of length $3p^s$ over $\mathbb{F}_{p^m}$.

%
%
%
%
\begin{theorem}\label{3p^s 2nd theorem F}
	Let $p \equiv 2 \mod{3}$ and let $\theta \in \mathrm{Aut}(\mathbb{F}_{p^m})$ with $|\theta|\mid p^s$. 
	Let $\mathcal{C}$ be a skew $\theta$-cyclic code of length $3p^s$ over $\mathbb{F}_{p^m}$.
		\begin{enumerate}
			\item[(a)] If $m$ is odd, then $\mathcal{C}$ decomposes as
			$\mathcal{C} = \mathcal{C}_1 \oplus \mathcal{C}_2,
			$
			where $\mathcal{C}_1$ and $\mathcal{C}_2$
			are left ideals of $\frac{\mathbb{F}_{p^m}[x;\theta]}{\langle  x^{p^s}-1\rangle}$ and $\frac{\mathbb{F}_{p^m}[x;\theta]}{\langle  x^{2p^s}+ x^{p^s}+1\rangle}$, respectively. Moreover,
			$	|\mathcal{C}| = |\mathcal{C}_1|\,|\mathcal{C}_2|, \quad 
			\mathcal{C}^\perp = \mathcal{C}_1^\perp \oplus \mathcal{C}_2^\perp,
			$ where $\mathcal{C}_1^\perp $ and $\mathcal{C}_2^\perp $ are left ideals of $\frac{\mathbb{F}_{p^m}[x;\theta]}{\langle  x^{p^s}-1\rangle}$ and $\frac{\mathbb{F}_{p^m}[x;\theta]}{\langle  x^{2p^s}+ x^{p^s}+1\rangle}$, respectively
			and $\mathcal{C}$ is self-dual if and only if $\mathcal{C}_1 = \mathcal{C}_1^\perp$ and $\mathcal{C}_2 = \mathcal{C}_2^\perp$.
			
			\item[(b)] If $m$ is even, then $\mathcal{C}$ decomposes as
			$
			\mathcal{C} = \mathcal{C}_1 \oplus \mathcal{C}_2 \oplus \mathcal{C}_3,
			$
			where $\mathcal{C}_i$	is left ideal of $\frac{\mathbb{F}_{p^m}[x;\theta]}{\langle  x^{p^s}-\omega^{i-1}\rangle} , ~ i=1,2,3$, respectively
			and $\omega$ is a primitive cube root of unity in $\mathbb{F}_{p^m}^*$. Moreover,
			\[
			|\mathcal{C}| = |\mathcal{C}_1|\,|\mathcal{C}_2|\,|\mathcal{C}_3|, \quad 
			\mathcal{C}^\perp = \mathcal{C}_1^\perp \oplus \mathcal{C}_2^\perp \oplus \mathcal{C}_3^\perp,
			\]
			and  $\mathcal{C}$ is a self-dual code if and only if $\mathcal{C}_1 = \mathcal{C}_1^\perp$, $\mathcal{C}_2 = \mathcal{C}_3^\perp$, and $\mathcal{C}_3 = \mathcal{C}_2^\perp$.
		\end{enumerate}

\end{theorem}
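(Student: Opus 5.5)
We follow the pattern of the proof of \Cref{3p^s first thm}. Since $|\theta|\mid p^s$, \Cref{Central poly} shows that $x^{3p^s}-1$, $x^{p^s}-1$ and, when they lie in $\mathbb{F}_{p^m}[x;\theta]$, the polynomials $x^{p^s}-\omega^{i}$ are central. We factor
\[
x^{3p^s}-1=(x^{p^s}-1)(x^{2p^s}+x^{p^s}+1).
\]
The case distinction rests on whether $\mathbb{F}_{p^m}^*$ contains a primitive cube root of unity. This happens exactly when $3\mid p^m-1$. Since $p\equiv 2 \pmod 3$, we have $p^m\equiv(-1)^m \pmod 3$, so this holds if and only if $m$ is even.

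\textbf{Case (a), $m$ odd.} There is no cube root of unity in $\mathbb{F}_{p^m}$. Centrality of $x^{2p^s}+x^{p^s}+1$ follows from \Cref{Central poly generalization}: its coefficients lie in $\mathbb{F}_p$ and are fixed by $\theta$, its exponents are multiples of $p^s$ (a multiple of $|\theta|$), and $\Theta^{2p^s}=\mathrm{Id}$. For coprimality, note that
\[
x^{2p^s}+x^{p^s}+1-(x^{p^s}+2)(x^{p^s}-1)=3 .
\]
Since $3$ is a unit because $p\neq 3$, the two factors generate the whole ring as a two-sided ideal. Both are central, so we obtain idempotents as in the lemma preceding \Cref{decomposition of consta code}, and the CRT gives
\[
\mathcal{R}_1^{3p^s}\cong \frac{\mathbb{F}_{p^m}[x;\theta]}{\langle x^{p^s}-1\rangle}\oplus\frac{\mathbb{F}_{p^m}[x;\theta]}{\langle x^{2p^s}+x^{p^s}+1\rangle}.
\]
By \Cref{decomposition of consta code}, every left ideal then splits uniquely as $\mathcal{C}_1\oplus\mathcal{C}_2$, and $|\mathcal{C}|=|\mathcal{C}_1||\mathcal{C}_2|$.

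\textbf{Case (b), $m$ even.} Let $\omega$ be a primitive cube root of unity in $\mathbb{F}_{p^m}^*$. We must check $\theta(\omega)=\omega$. Since $\theta(\omega)\in\{\omega,\omega^2\}$, if $\theta(\omega)=\omega^2$ then $\theta^2$ fixes $\omega$ while $\theta$ does not. That forces $2\mid|\theta|$, which contradicts $|\theta|\mid p^s$ with $p$ odd; this is the same argument as in \Cref{Pro 3p^s iso}. So the three factors $x^{p^s}-\omega^{i-1}$ are central, and they are pairwise coprime because their differences $\omega^{a}-\omega^{b}$ are nonzero constants. The three-component CRT decomposition then follows exactly as in \Cref{3p^s first thm}.

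\textbf{Duality.} We use \Cref{remark 1}: the left ideal associated with $\mathcal{C}^\perp$ is $\mathcal{A}(\mathcal{I})^*$. The annihilator respects the idempotent decomposition componentwise. Taking reciprocals sends a factor $g(x)$ to a unit multiple of $g^*(x)$. For $g=x^{p^s}-c$ with $c$ fixed by $\theta$, the reciprocal $g^*$ is $-c$ times $x^{p^s}-c^{-1}$. Hence $x^{p^s}-1$ is self-reciprocal up to a unit, and $x^{2p^s}+x^{p^s}+1$ is self-reciprocal. In case (b), $\omega^{-1}=\omega^2$, so the components for $\omega$ and $\omega^2$ are interchanged. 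Consequently, in (a) the dual is $\mathcal{C}_1^\perp\oplus\mathcal{C}_2^\perp$ with components in the same factor rings. In (b) the dual of the $\omega^{i-1}$-component lands in the $\omega^{1-i}$-component.

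Self-duality means the decomposition of $\mathcal{C}^\perp$ coincides with that of $\mathcal{C}$. By uniqueness of the decomposition, this gives $\mathcal{C}_1=\mathcal{C}_1^\perp$ and $\mathcal{C}_2=\mathcal{C}_2^\perp$ in (a). In (b) it gives $\mathcal{C}_1=\mathcal{C}_1^\perp$, $\mathcal{C}_2=\mathcal{C}_3^\perp$ and $\mathcal{C}_3=\mathcal{C}_2^\perp$.

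\textbf{Main obstacle.} The delicate step is the duality bookkeeping. One has to verify that taking reciprocals commutes with the CRT idempotents in the skew setting, so that the annihilator of $\bigoplus\varepsilon_i\mathcal{C}_i$ is $\bigoplus\varepsilon_i\mathcal{A}(\mathcal{C}_i)$ and $\varepsilon_i^*$ corresponds to the idempotent of the reciprocal factor. I would check this by noting that $x^{3p^s}$ is central and that $\theta$ fixes all constants appearing in the factors, so the skew reciprocal behaves like the commutative one on these components.
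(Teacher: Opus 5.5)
Your proposal is correct and follows essentially the same route as the paper. Both factor $x^{3p^s}-1=(x^{p^s}-1)(x^{2p^s}+x^{p^s}+1)$, split the quadratic factor further exactly when $\mathbb{F}_{p^m}$ contains a primitive cube root of unity, get centrality from \Cref{Central poly generalization}, apply the CRT, and read off the duals from \Cref{remark 1} via the reciprocals of the factors. You are somewhat more explicit than the paper: you decide the case split by $3\mid p^m-1$ rather than the quadratic character of $-3$, and you supply the B\'ezout identity, the check $\theta(\omega)=\omega$, and the $\omega\leftrightarrow\omega^2$ swap for the duals in (b), none of which the paper writes out.
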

\begin{proof}
Any skew $\theta$-cyclic code $\mathcal{C}$ of length $3p^s$ corresponds to a left ideal of $\mathcal{R}_1^{3p^s}$. Since $|\theta| \mid p^s$, the polynomial $x^{3p^s}-1$ remains central in $\mathbb{F}_{p^m}[x;\theta]$, and we have,
	\[
	x^{3p^s} - 1 = (x^{p^s} - 1)(x^{2p^s} + x^{p^s} + 1).
	\]
	If $m$ is odd, the polynomial $g(x)=x^2 +x + 1$ is irreducible over $\mathbb{F}_{p^m}$.  Because if this is reducible then the roots of $g(x)$ are $\frac{-1 \pm \sqrt{-3}}{2}$. Since $p \equiv 2 \mod 3$, $-3$ is a non-quadratic residue in $\mathbb{F}_p$, meaning $\sqrt{-3} \in \mathbb{F}_{p^2} \setminus \mathbb{F}_p$ \cite{Bur10}. Thus, $\sqrt{-3} \in \mathbb{F}_{p^m}$ if and only if $2 \mid m$. With $m$ odd, the roots lie outside $\mathbb{F}_{p^m}$.
	Consequently, the factorization is
	$
	x^{3p^s}-1=(x^{p^s}-1)(x^{2p^s}+x^{p^s}+1).
	$
	Since \(\mathbb{F}_p\) is fixed by \(\theta\) and \(|\theta|\mid p^s\),
	it follows from Proposition \ref{Central poly generalization} that both
	$
	x^{p^s}-1
	\quad \text{and} \quad
	x^{2p^s}+x^{p^s}+1
	$
	are central polynomials.
Hence,	the generated ideals are two-sided in $\mathbb{F}_{p^m}[x; \theta]$. By the CRT, the ambient ring decomposes as:
	\[ \mathcal{R}^{3p^s}_{1} \cong \frac{\mathbb{F}_{p^m}[x; \theta]}{\langle x^{p^s} - 1 \rangle} \oplus \frac{\mathbb{F}_{p^m}[x; \theta]}{\langle x^{2p^s} + x^{p^s} + 1 \rangle}. \]
	Hence, every skew $\theta$-cyclic code $\mathcal{C}$  of length $3p^s$ over $\mathbb{F}_{p^m}$ can be written as,  
	$\mathcal{C} = \mathcal{C}_1 \oplus \mathcal{C}_2,$
	where $\mathcal{C}_1$ and $\mathcal{C}_2$
	are left ideals of $\frac{\mathbb{F}_{p^m}[x;\theta]}{\langle  x^{p^s}-1\rangle}$ and $\frac{\mathbb{F}_{p^m}[x;\theta]}{\langle  x^{2p^s}+ x^{p^s}+1\rangle}$, respectively.
As 	$\mathcal{C}_1^\perp$ is clear,
for  $\mathcal{C}_2^\perp$, let
\[
\mathcal{R}^{3p^s}_{k, f}=\frac{R_k[x;\Theta]}{\langle f(x)\rangle}, 
\quad f(x)=x^{2p^s} +x^{p^s}+1
\]
As $\mathcal{C}_2$ be a left ideal of $\mathcal{R}^{3p^s}_{k,f}$. Its right annihilator is
$
\mathcal{A}(\mathcal{C}_2)=\{\,h(x)\in \mathcal{R}^{3p^s}_{k,f} \mid g(x)h(x)=0,\ \forall\, g(x)\in \mathcal{C}_2\,\}. 
$ Hence the annihilator consists of those polynomials whose products vanish modulo $f(x)$.
Now, $( x^{2p^s}+x^{p^s}+1)^*= x^{2p^s}+x^{p^s}+1.$ 
Therefore, by Remark \ref{remark 1}, $\mathcal{C}_2^\perp $ is also left ideal of $\mathcal{R}^{3p^s}_{k,f}.$

	Now if, $m$ is even, $\mathbb{F}_{p^2} \subseteq \mathbb{F}_{p^m}$. As shown above, $\sqrt{-3} \in \mathbb{F}_{p^2}$, so $-3$ is a quadratic residue in $\mathbb{F}_{p^m}$. The polynomial $y^2 + y + 1$ splits into linear factors, and we have:
	\[ x^{3p^s} - 1 = (x^{p^s} - 1)(x^{p^s} - \omega)(x^{p^s} - \omega^2), \]
	where $\omega$ is a primitive cube root of unity in $\mathbb{F}_{p^m}$.

\end{proof}


	Now we will study the algebraic structure  of skew constacyclic codes of length $3p^s$ over $R_k$.
\subsection{Skew constacyclic codes of length $3p^s$ over $R_k, k>1$}
 For  ring $R_k$ with $k > 1$ and  $\lambda \in R_k^*$, 
  \Cref{lemma_uniform_iso} is no longer applicable because $\lambda^{p^m} \neq \lambda$ in general.  Consequently, even if there exists a unit $\delta \in R_k^*$ such that $\delta^3 = \lambda$, the skew $(\lambda, \Theta)$-constacyclic codes of length $3p^s$ ($p$ odd prime coprime with $3$) need not be  equivalent to the skew $\Theta$-cyclic codes of  same length via the  isomorphism $\Phi$. \\
 Let \( \Theta \) be an automorphism of \( R_k \), as defined in \Cref{auto R_k}, satisfying \( \eta_j = 1 \) for all \( 1 \le j < k-1 \), that is, for every \( \sum a_i u^i \in R_k \), we have
 \[
 \Theta\left(\sum a_i u^i\right)=\sum \theta(a_i)u^i.
 \]
 Assume that the order of \( \Theta \), denoted by \( |\Theta| \), divides \( p^s \). Suppose further that \( \lambda \in (R_k^*)^{p^s} \). This implies there exist $\lambda_0 \in R_k^*$ such that $\lambda_0^{p^s}=\lambda$ 
 satisfies \( \Theta(\lambda_0)=\lambda_0 \).

	We analyze the structure of skew $(\lambda, \Theta)$-constacyclic codes by considering whether the unit $\lambda_0 \in R_k$ is a perfect cube in $R_k$.

	Let us first assume $\lambda_0$ is a perfect cube in $R_k^*$ i.e.,    $\lambda_0=\delta^3$, for some $\delta \in R_k^*$.
		Write $\lambda_0 =\lambda_1+\lambda_2u+\ldots+\lambda_{k}u^{k-1}$ such that $\lambda_1 \in \mathbb{F}_{p^m}^*, \lambda_i \in \mathbb{F}_{p^m}, 2 \leq i \leq k$ and $\delta=\delta_1+\delta_2u+\ldots+\delta_ku^{k-1}$, where $\delta_1 \in \mathbb{F}_{p^m}^*, \delta_i \in \mathbb{F}_{p^m}, 2 \leq i \leq k$. 
			\begin{lemma}\label{lemma fixing lambda_0}
			Let $R_k = \mathbb{F}_{p^m}[u]/\langle u^k \rangle$ where $p \neq 3$. Let $\Theta$ be an automorphism of $R_k$ such that the order of $\Theta$ divides $p^s$. If $\lambda_0 \in R_k$ is a unit such that $\Theta(\lambda_0) = \lambda_0$ and $\lambda_0 = \delta^3$ for some $\delta \in R_k^*$, then $\Theta(\delta) = \delta$.
		\end{lemma}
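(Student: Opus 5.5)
Since $\Theta$ is a ring automorphism fixing $\lambda_0$, we get $\Theta(\delta)^3=\Theta(\delta^3)=\Theta(\lambda_0)=\lambda_0=\delta^3$. Hence $\zeta:=\Theta(\delta)\delta^{-1}$ is a unit of $R_k$ with $\zeta^3=1$. The proof then has two steps: show that the cube roots of unity in $R_k^*$ all lie in $\mathbb{F}_{p^m}^*$, and then rerun the order argument from the proof of \Cref{Pro 3p^s iso} to force $\zeta=1$.

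\emph{Step 1: cube roots of unity lie in $\mathbb{F}_{p^m}^*$.} Write $\zeta=\zeta_1(1+w)$ with $\zeta_1\in\mathbb{F}_{p^m}^*$ and $w\in uR_k$. This uses the decomposition $R_k^*=\mathbb{F}_{p^m}^*\times(1+uR_k)$ already invoked in the proof of \Cref{auto R_k}. The group $1+uR_k$ has order $p^{m(k-1)}$, so it is a $p$-group. Reducing $\zeta^3=1$ modulo $u$ gives $\zeta_1^3=1$, and therefore $(1+w)^3=1$. Since $p\neq 3$, an element of a $p$-group of order dividing $3$ is trivial, so $1+w=1$. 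Thus $\zeta=\zeta_1\in\mathbb{F}_{p^m}^*$ with $\zeta_1^3=1$. If one prefers to avoid group orders, the same conclusion follows by expanding $(1+w)^3=1+3w+3w^2+w^3$, taking the lowest-order $u$-term of $w$, and using that $3$ is invertible.

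\emph{Step 2: $\zeta=1$.} If $\mathbb{F}_{p^m}$ has no primitive cube root of unity, then $\zeta=1$ and we are done. Otherwise $\zeta\in\{1,\omega,\omega^2\}$, and we iterate $\Theta(\delta)=\zeta\delta$. Note $\Theta(\omega)=\omega^{\pm1}$, because $\Theta$ maps the set of primitive cube roots of unity to itself.
\begin{itemize}
\item If $\Theta(\omega)=\omega$ and $\zeta=\omega$, then $\Theta^i(\delta)=\omega^i\delta$. Let $|\Theta|=p^t$. Since $\gcd(p,3)=1$, we have $\omega^{p^t}\neq 1$, so $\Theta^{p^t}(\delta)\neq\delta$. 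This contradicts $\Theta^{p^t}=\mathrm{Id}$.
\item If $\Theta(\omega)=\omega^2$, then $\Theta^2(\delta)=\Theta(\omega)\,\omega\,\delta=\omega^3\delta=\delta$, and $\Theta^2(\omega)=\omega$. A neater route in this case: $\Theta$ restricts to a nontrivial automorphism on the subgroup $\{1,\omega,\omega^2\}$, so its order is divisible by $2$. But $|\Theta|$ is odd, since it divides $p^s$ with $p$ odd. This is a contradiction.
\end{itemize}
The case $\zeta=\omega^2$ is symmetric to $\zeta=\omega$. This matches the argument of \Cref{Pro 3p^s iso}.

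If $p=2$ were allowed, the parity argument above would fail. Even then, $\Theta(\omega)=\omega^2$ would give order divisible by $2$, which is fine, so the argument needs care. Since the paper takes $p$ odd, I will simply invoke oddness.

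The main obstacle is Step 1, namely excluding non-constant cube roots of unity in $R_k$. This is exactly where the hypothesis $p\neq3$ is used. Once Step 1 is settled, Step 2 is a direct transcription of the field argument from \Cref{Pro 3p^s iso}.
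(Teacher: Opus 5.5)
Your proof is correct, but it takes a different route from the paper's.

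The paper works coefficientwise. It writes $\delta=\delta_1+\delta_2u+\cdots+\delta_ku^{k-1}$ and expands $\delta^3=\lambda_0$, which gives $\lambda_1=\delta_1^3$. It then gets $\theta(\delta_1)=\delta_1$ from the field case (\Cref{Pro 3p^s iso}). Finally it inducts on the $u$-adic coefficients: the coefficient of $u^{n-1}$ has the form $3\delta_1^2\delta_n+(\text{terms in }\delta_1,\dots,\delta_{n-1})$, and $3\delta_1^2$ is invertible and $\theta$-fixed. This recursion turns $\Theta(\lambda_0)=\lambda_0$ into $\theta(\lambda_n)=\lambda_n$ only because of the standing assumption, made just before the lemma, that $\Theta\bigl(\sum a_iu^i\bigr)=\sum\theta(a_i)u^i$. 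As a by-product, the recursion also gives the Hensel-type fact that $\lambda_0$ is a cube in $R_k^*$ iff its residue is a cube in $\mathbb{F}_{p^m}^*$, which the paper uses later.

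Your argument isolates the obstruction $\zeta=\Theta(\delta)\delta^{-1}$, a cube root of unity. You remove its $1+uR_k$ component because that group is a $p$-group with $p\neq 3$, and then rerun the field-level order argument on $\zeta$ alone. This buys two things:
\begin{itemize}
\item It proves the lemma for every $\Theta\in\mathrm{Aut}(R_k)$ with $|\Theta|\mid p^s$, including those with $\Theta(u)=\eta u$, $\eta\neq 1$. That is what the statement literally asserts, and the paper's proof does not cover it.
\item It shows exactly where each hypothesis enters: $p\neq 3$ in your Step 1 and $p$ odd in your Step 2.
\end{itemize}

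Your hesitation about $p=2$ can be sharpened: the lemma is false there. Take $m=2$, let $\Theta$ be the Frobenius $a\mapsto a^2$ acting on coefficients (order $2$, which divides $2^s$), and set $\delta=\omega\in\mathbb{F}_4$. Then $\lambda_0=1$ is fixed, but $\Theta(\delta)=\omega^2\neq\delta$. So the paper's global hypothesis that $p$ is odd is genuinely needed, both in your proof and in the paper's, which uses it through \Cref{Pro 3p^s iso}.
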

		\begin{proof}
			As given $\lambda_0=\delta^3$ for some $\delta \in R_k^*$. This implies $\lambda_1+\lambda_2u+\ldots+\lambda_{k}u^{k-1}=(\delta_1+\delta_2u+\ldots+\delta_ku^{k-1})^3,$ where $\lambda_1, ~\delta_1  \in \mathbb{F}_{p^m}^*$ and $ \lambda_i, \delta_i \in \mathbb{F}_{p^m}, 2 \leq i \leq k$. 
By simple computations, we obtain
			$\lambda_n = \sum_{\substack{1 \leq i,j,l \leq n \\ i + j + l = n + 2}} \delta_i \delta_j \delta_l, 1 \leq n \leq k$,
			as we have to multiply all combinations $\delta_iu^{i-1}, \delta_ju^{j-1}, \delta_lu^{l-1}$ such that
			$i-1+j-1+l-1=n-1$. This implies, \begin{equation}\label{eq lambda}
				\lambda_0=\sum_{n=1}^{k}(\sum_{\substack{1 \leq i,j,l \leq n \\ i + j + l = n + 2}} \delta_i \delta_j \delta_l)u^n.
			\end{equation}
			By compairing the cofficient of $1$ both side, we obtain $\lambda_1=\delta_1^3, \delta_1 \in \mathbb{F}_{p^m}^*$.
			This gives $\lambda_0$ is perfect cube in $R_k^*$ if and only if $\lambda_1$ is perfect cube in $\mathbb{F}_{p^m}^*$.  Now because $\Theta(\lambda_0)=\lambda_0$, this gives, $\theta(\delta_1^3)=\delta_1^3.$ From \Cref{Pro 3p^s iso}, we get $\theta(\delta_1)=\delta_1$. Now on compairing the cofficient of $u$ in (\ref{eq lambda}) both side, we obtain, $3\delta_1^2\delta_2=\lambda_2.$ By the given condition, $\theta(3\delta_1^2\delta_2)=3\delta_1^2\delta_2$, as $\theta(\delta_1)=\delta_1$ and $\delta_1$ is invertible, we get $\theta(\delta_2)=\delta_2$, repeating this  recursive process for higher powers of $u$
			 we obtain $\theta(\delta_i)=\delta_i$, for any non-zero $\delta_i$. Thus $\Theta(\delta)=\delta$.
		\end{proof}
		Now first assume  $p \equiv 1 \mod 3$.
\begin{theorem}\label{3p^s Rk thm}
	Suppose  $p \equiv 1 \mod{3}$ and $R_k = \mathbb{F}_{p^m}[u]/\langle u^k \rangle$ with $k>1$.
	Let $\Theta \in \mathrm{Aut}(R_k)$ be such that $|\Theta| \mid p^s$. Assume that $\lambda \in R_k^*$ is such that $\lambda=\lambda_0^{p^s} $, $\Theta(\lambda_0)=\lambda_0$ and 
$\lambda_0=\delta^3$ for some $\delta \in R_k^*$. Then every skew $(\lambda,\Theta)$-constacyclic code $\mathcal{C}$ of length $3p^s$ over $R_k$ decomposes as
	$
	\mathcal{C} = \mathcal{C}_1 \oplus \mathcal{C}_2 \oplus \mathcal{C}_3,
	$
	where $\mathcal{C}_i$ is a left ideal of
	$$
	\frac{R_k[x;\Theta]}{\langle (x- \omega^{i-1}\delta)^{p^s}  \rangle},  ~ i=1,2,3,
	$$
	$\omega \in \mathbb{F}_p^*$ being a primitive cube root.
		Moreover,
	$	|\mathcal{C}| = |\mathcal{C}_1|\,|\mathcal{C}_2|\,|\mathcal{C}_3|.
	$	Furthermore, the dual code satisfies
	$
	\mathcal{C}^\perp = \mathcal{C}_1^\perp \oplus \mathcal{C}_2^\perp \oplus \mathcal{C}_3^\perp,
	$
	where $\mathcal{C}_i^\perp$ is a left ideal of
	\[
	\frac{R_k[x;\Theta]}{\langle (x- \omega^{1-i}\delta^{-1})^{p^s}  \rangle}.
	\]
		In particular, $\mathcal{C}$ is self-dual if and only if
	$
	\mathcal{C}_1 = \mathcal{C}_1^\perp, \quad 
	\mathcal{C}_2 = \mathcal{C}_3^\perp, \quad 
	\mathcal{C}_3 = \mathcal{C}_2^\perp.
	$
\end{theorem}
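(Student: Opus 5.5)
We factor $x^{3p^s}-\lambda$ into three central, pairwise coprime pieces, apply the Chinese Remainder decomposition, and then track the duals through Remark~\ref{remark 1}.

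\textbf{Step 1: factorization.} Since $|\Theta|\mid p^s$ and $\Theta(\lambda)=\lambda$ (because $\Theta(\lambda_0)=\lambda_0$), Proposition~\ref{Central poly} shows that $x^{3p^s}-\lambda$ is central. Lemma~\ref{lemma fixing lambda_0} gives $\Theta(\delta)=\delta$. We have $\omega\in\mathbb{F}_p$, so $\Theta(\omega)=\omega$. Hence the elements $\omega^{i-1}\delta$ commute with $x$ under the skew rule, and the identity $y^3-\delta^3=(y-\delta)(y-\omega\delta)(y-\omega^2\delta)$ can be used formally with $y=x$. In characteristic $p$, the $p^s$-power map is additive on the commutative subring $\mathbb{F}_{p^m}[x^{|\Theta|}]$-type expressions involving only $\Theta$-fixed coefficients. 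Applying it gives
\[
x^{3p^s}-\lambda=(x^3-\lambda_0)^{p^s}=\prod_{i=1}^{3}(x-\omega^{i-1}\delta)^{p^s}.
\]
The care needed here is that $x$ and $\delta$ commute, which holds because $\Theta(\delta)=\delta$. Each factor $(x-\omega^{i-1}\delta)^{p^s}=x^{p^s}-\omega^{(i-1)p^s}\delta^{p^s}$ has exponent $p^s$, a multiple of $|\Theta|$, and a $\Theta$-fixed constant term. By Proposition~\ref{Central poly}, each factor is central.

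\textbf{Step 2: coprimality.} This is the main obstacle. We show the factors are pairwise right-coprime in $R_k[x;\Theta]$. The difference of the linear terms is $(\omega^{i-1}-\omega^{j-1})\delta$, which is a unit: $\omega^{i-1}-\omega^{j-1}\in\mathbb{F}_p^*$ because $p\neq3$, and $\delta\in R_k^*$.

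First, $x-\omega^{i-1}\delta$ and $x-\omega^{j-1}\delta$ generate the unit ideal, since their difference is a unit. Next, powers of comaximal elements remain comaximal. To see this, write $1=a+b$ with $a\in\langle f\rangle$ and $b\in\langle g\rangle$, where $f,g$ are the central linear factors. Raising to the power $2p^s$ in this commutative setting (everything lies in the commutative subring generated by $x$ and $\Theta$-fixed scalars) gives $1\in\langle f^{p^s}\rangle+\langle g^{p^s}\rangle$.

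Alternatively, one can reduce modulo $u$ via $\mu_k$. Over $\mathbb{F}_{p^m}$ the factors are coprime, and a Nakayama/Hensel-type lifting applies because $u$ is nilpotent.

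\textbf{Step 3: decomposition and cardinality.} With centrality and coprimality in hand, the idempotent lemma and Theorem~\ref{decomposition of consta code} give
\[
\mathcal{R}^{3p^s}_{k,\lambda}\cong\bigoplus_{i=1}^{3}R_k[x;\Theta]/\langle(x-\omega^{i-1}\delta)^{p^s}\rangle.
\]
Hence $\mathcal{C}=\mathcal{C}_1\oplus\mathcal{C}_2\oplus\mathcal{C}_3$ uniquely. The cardinality formula follows because the sum is direct.

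\textbf{Step 4: duals and self-duality.} For the dual, $\mathcal{C}^\perp$ is a skew $(\lambda^{-1},\Theta)$-constacyclic code, associated with $\mathcal{A}(\mathcal{I})^*$ by Remark~\ref{remark 1}. The reciprocal of $x-\omega^{i-1}\delta$ is a unit multiple of $x-\omega^{1-i}\delta^{-1}$. Therefore the reciprocal of the $i$-th component factor is, up to a unit, $(x-\omega^{1-i}\delta^{-1})^{p^s}$, and these are precisely the factors of $x^{3p^s}-\lambda^{-1}$.

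On each component, the annihilator of $\mathcal{C}_i$ corresponds componentwise, so $\mathcal{C}^\perp$ decomposes as claimed with $\mathcal{C}_i^\perp$ in the stated quotient ring.

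Self-duality requires $\lambda=\lambda^{-1}$, so that both codes live in the same ambient ring. Matching components by their associated factors, $\omega^{1-i}\delta^{-1}$ must equal some $\omega^{j-1}\delta$. Under this identification the index $1$ is fixed and the indices $2$ and $3$ are swapped, since $\omega^{-1}=\omega^2$. This yields the conditions $\mathcal{C}_1=\mathcal{C}_1^\perp$, $\mathcal{C}_2=\mathcal{C}_3^\perp$, and $\mathcal{C}_3=\mathcal{C}_2^\perp$.
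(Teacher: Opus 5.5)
Your Steps 1--3 are essentially the paper's proof. You get $\Theta(\delta)=\delta$ from the lemma on cube roots of $\lambda_0$ and $\Theta(\omega)=\omega$ since $\omega\in\mathbb{F}_p$. You then factor $x^{3p^s}-\lambda=\prod_{i=1}^{3}(x-\omega^{i-1}\delta)^{p^s}$ into central factors $x^{p^s}-(\omega^{i-1}\delta)^{p^s}$ and apply the CRT. You also justify the pairwise coprimality that the paper only asserts.

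One slip there: the linear factors $x-\omega^{i-1}\delta$ are not central when $\Theta\neq\mathrm{id}$ (not even $x$ is), so ``central linear factors'' is wrong. What your comaximality argument actually uses, correctly, is that they lie in the commutative subring of $R_k[x;\Theta]$ consisting of polynomials with $\Theta$-fixed coefficients.

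The genuine gap is the self-duality clause in Step 4, where you conclude that ``the index $1$ is fixed and $2,3$ are swapped, since $\omega^{-1}=\omega^2$.'' The component rings only see $p^s$-th powers. Using $\omega^{p^s}=\omega$, one gets $(x-\omega^{1-i}\delta^{-1})^{p^s}=(x-\omega^{j-1}\delta)^{p^s}$ if and only if $\delta^{2p^s}=\omega^{2-i-j}$. So index $1$ is fixed only when $\delta^{2p^s}=1$, i.e.\ $\delta^{p^s}=\lambda=\pm1$. By contrast, $\lambda=\lambda^{-1}$ only gives $\delta^{2p^s}\in\{1,\omega,\omega^2\}$.

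For a concrete failure, take $\Theta=\mathrm{id}$, $k=2$, $\lambda=\lambda_0=1$ and the admissible cube root $\delta=\omega$.
\begin{itemize}
\item The three components sit over $(x-\omega)^{p^s}$, $(x-\omega^2)^{p^s}$, $(x-1)^{p^s}$.
\item The correct pairing is $1\leftrightarrow2$, $3\leftrightarrow3$.
\item Let $\mathcal{C}_1$ be the whole component ring, $\mathcal{C}_2=0$, and $\mathcal{C}_3$ be $u$ times its component ring.
\item This gives a self-dual $\mathcal{C}$ with $\mathcal{C}_1\neq\mathcal{C}_1^\perp$.
\end{itemize}
So the criterion, and your derivation of it, needs the extra normalization $\lambda=\pm1$ and $\delta^{p^s}=\lambda$. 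When $\lambda=\pm1$, this can be arranged by replacing $\delta$ with $\omega^{a}\delta$.

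Your preliminary claim that self-duality forces $\lambda=\lambda^{-1}$ is also false over $R_k$. For even $k$, $u^{k/2}R_k^{3p^s}$ is self-dual and $(\lambda,\Theta)$-constacyclic for every unit $\lambda$. So the case $\lambda\neq\lambda^{-1}$ has to be excluded by hypothesis rather than argued away.

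The paper's proof stops at the CRT decomposition and never argues the dual or self-dual parts, so this defect sits in the statement itself. Your write-up should state the missing hypothesis explicitly.
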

\begin{proof}
	Using \Cref{lemma fixing lambda_0}, we have $\Theta(\delta)=\delta$.
	So, we can factorize $x^{3p^s} - \lambda$ over $R_k$.
As	\[ x^{3p^s} - \lambda = x^{3p^s} - \delta^{3p^s} = (x - \delta)^{p^s}(x^{2} + \delta x + \delta^2)^{p^s}. \]
Since	$p \equiv 1 \mod 3$, there exist element $\omega \in \mathbb{F}_p$ such that
$\omega^3 = 1.$ 
	The quadratic factor $x^{2} + \delta x + \delta^2$ can be further factorized using the roots $\omega$ and $\omega^2$. We can write,
$$ x^{2} + \delta x + \delta^2=
		(x - \omega \delta)(x- \omega^2 \delta).$$
	Thus,
	\[
	x^{3p^s}-\lambda
	=(x-\delta)^{p^s}(x-\omega\delta)^{p^s}(x-\omega^2\delta)^{p^s}.
	\]
	Since $\omega \in \mathbb{F}_p$, is fixed by the automorphism $\Theta$.  Therefore, the coefficients of these factors are fixed under $\Theta$, and the generated ideals are two-sided in the ring $R_k[x; \Theta]$.
		By the Chinese Remainder Theorem, the ambient ring decomposes as:
	\[ \frac{R_k[x; \Theta]}{\langle x^{3p^s} - \lambda \rangle} \cong \frac{R_k[x; \Theta]}{\langle (x - \delta)^{p^s} \rangle} \oplus \frac{R_k[x; \Theta]}{\langle (x - \omega \delta)^{p^s} \rangle} \oplus \frac{R_k[x; \Theta]}{\langle (x- \omega^2 \delta)^{p^s} \rangle}. \]
\end{proof}
	Consequently, any skew $(\lambda, \Theta)$-constacyclic code $\mathcal{C}$ of length $3p^s$ over $R_k$ can be uniquely written as a direct sum $\mathcal{C} = \mathcal{C}_1 \oplus \mathcal{C}_2 \oplus \mathcal{C}_3$, where $\mathcal{C}_i$ is skew $((\omega^i\delta)^{p^s}, \Theta)$-constacyclic codes of length $p^s$ with $i=1,2,3.$ Thus by using \Cref{ideals R_k}, we obtain the structure of corresponding ideals.\\
Now  we consider $p \equiv 2 \mod 3$.
\begin{theorem}
	Suppose $p \equiv 2 \mod{3}$ and let $\Theta \in \mathrm{Aut}(R_k)$ with $|\Theta| \mid p^s$. 
	Assume that $\lambda \in R_k^*$ is such that $\lambda=\lambda_0^{p^s} $, $\Theta(\lambda_0)=\lambda_0$ and 
	$\lambda_0=\delta^3$ for some $\delta \in R_k^*$.
	Let $\mathcal{C}$ be a skew $(\lambda, \Theta)$-constacyclic code of length $3p^s$ over $R_k$.
	
	\begin{enumerate}
		
		\item[(a)] If $m$ is odd, then $\mathcal{C}$ decomposes as
		$
		\mathcal{C} = \mathcal{C}_1 \oplus \mathcal{C}_2,
		$
		where $\mathcal{C}_1$ and $\mathcal{C}_2$ are left ideals of
		\[
		\frac{R_k[x;\Theta]}{\langle (x-\delta)^{p^s} \rangle}
		\quad \text{and} \quad
		\frac{R_k[x;\Theta]}{\langle (x^2+\delta x^{p^s}+\delta^2)^{p^s} \rangle},
		\]
		respectively. Moreover,
		$
		|\mathcal{C}| = |\mathcal{C}_1|\,|\mathcal{C}_2|, \quad 
		\mathcal{C}^\perp = \mathcal{C}_1^\perp \oplus \mathcal{C}_2^\perp,
		$
		where $\mathcal{C}_1^\perp$ and $\mathcal{C}_2^\perp$ are left ideals of
		\[
		\frac{R_k[x;\Theta]}{\langle (x-\delta^{-1})^{p^s} \rangle}
		\quad \text{and} \quad
		\frac{R_k[x;\Theta]}{\langle (x^2+\delta^{-1} x+\delta^{-2})^{p^s} \rangle},
		\]
		respectively. In particular, $\mathcal{C}$ is self-dual if and only if 
		$
		\mathcal{C}_1 = \mathcal{C}_1^\perp \quad \text{and} \quad \mathcal{C}_2 = \mathcal{C}_2^\perp.
		$
		
		\item[(b)] If $m$ is even, then $\mathcal{C}$ decomposes as
		$
		\mathcal{C} = \mathcal{C}_1 \oplus \mathcal{C}_2 \oplus \mathcal{C}_3,
		$
		where $\mathcal{C}_i$ is left ideal of
		\[
		\frac{R_k[x;\Theta]}{\langle (x-\omega^{\,i-1}\delta)^{p^s} \rangle}, ~ i=1,2,3,
		\]
		and $\omega \in \mathbb{F}_{p^m}^*$ is a primitive cube root of unity. Moreover,
		$
		|\mathcal{C}| = |\mathcal{C}_1|\,|\mathcal{C}_2|\,|\mathcal{C}_3|, \quad 
		\mathcal{C}^\perp = \mathcal{C}_1^\perp \oplus \mathcal{C}_2^\perp \oplus \mathcal{C}_3^\perp,
		$
		where $\mathcal{C}_i^\perp$ is left ideal of
		\[
		\frac{R_k[x;\Theta]}{\langle (x-\omega^{\,1-i}\delta^{-1})^{p^s} \rangle}, ~ i=1,2,3.
		\]
		In particular, $\mathcal{C}$ is self-dual if and only if 
		$
		\mathcal{C}_1 = \mathcal{C}_1^\perp, \quad 
		\mathcal{C}_2 = \mathcal{C}_3^\perp, \quad 
		\mathcal{C}_3 = \mathcal{C}_2^\perp.
		$
		
	\end{enumerate}
\end{theorem}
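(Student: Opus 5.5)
The plan follows the proof of \Cref{3p^s Rk thm}, replacing the splitting argument in $\mathbb{F}_p$ by the analysis of $y^2+y+1$ over $\mathbb{F}_{p^m}$ from \Cref{3p^s 2nd theorem F}. First, \Cref{lemma fixing lambda_0} gives $\Theta(\delta)=\delta$. Since $\Theta$ fixes $\delta$, the powers $\delta^{i}$ commute with $x$ in the sense needed, and the identity
\[
x^{3p^s}-\lambda=x^{3p^s}-\delta^{3p^s}=(x-\delta)^{p^s}\,(x^2+\delta x+\delta^2)^{p^s}
\]
holds in $R_k[x;\Theta]$. This uses $(a+b)^{p^s}=a^{p^s}+b^{p^s}$ for commuting $a,b$ of characteristic $p$, together with $x^3-\delta^3=(x-\delta)(x^2+\delta x+\delta^2)$. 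The quadratic factor in part (a) should be read as $x^2+\delta x+\delta^2$; the displayed $\delta x^{p^s}$ is a typo. Centrality of each factor $(x-\delta)^{p^s}$ and $(x^2+\delta x+\delta^2)^{p^s}=x^{2p^s}+\delta^{p^s}x^{p^s}+\delta^{2p^s}$ follows from \Cref{Central poly generalization}. The coefficients are fixed by $\Theta$, and the commutation condition holds because every exponent is a multiple of $p^s$ and $|\Theta|\mid p^s$.

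For case (a), with $m$ odd, $\omega\notin\mathbb{F}_{p^m}$ by the argument in \Cref{3p^s 2nd theorem F}. I will show that $x^2+\delta x+\delta^2$ has no root in $R_k$. If $r$ were a root, reducing via $\mu_k$ would give a root of $x^2+\delta_1x+\delta_1^2$ in $\mathbb{F}_{p^m}$, so $\omega\in\mathbb{F}_{p^m}$, which is a contradiction. Next I will check that the two factors are right coprime. Their images under $\mu_k$ are coprime powers of $x-\delta_1$ and of $x^2+\delta_1x+\delta_1^2$, since $\delta_1\neq\omega^{i}\delta_1$ and $p\neq 3$. A Bézout relation modulo $u$ then lifts to $R_k$ by the standard Hensel/nilpotence argument: if $aF+bG=1+uh$, then $1+uh$ is a unit. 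The CRT decomposition of the preliminaries (the idempotent lemma and \Cref{decomposition of consta code}) then gives
\[
\mathcal{R}\cong \frac{R_k[x;\Theta]}{\langle (x-\delta)^{p^s}\rangle}\oplus\frac{R_k[x;\Theta]}{\langle (x^2+\delta x+\delta^2)^{p^s}\rangle}.
\]
From this, $\mathcal{C}=\mathcal{C}_1\oplus\mathcal{C}_2$ and $|\mathcal{C}|=|\mathcal{C}_1||\mathcal{C}_2|$ follow.

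For case (b), with $m$ even, $\omega\in\mathbb{F}_{p^2}\subseteq\mathbb{F}_{p^m}$ and $x^2+\delta x+\delta^2=(x-\omega\delta)(x-\omega^2\delta)$. This step needs $\theta(\omega)=\omega$. If $\theta(\omega)=\omega^2$, then $|\theta|$ would be even, contradicting $|\theta|\mid p^s$, exactly as in \Cref{Pro 3p^s iso}. The three factors $(x-\omega^{i-1}\delta)^{p^s}$ are then central and pairwise coprime by the same reduction-mod-$u$ argument, since $\omega^{i}\delta_1$ are distinct. The CRT then yields the three-component decomposition.

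For the duals, I apply Remark \ref{remark 1}. $\mathcal{C}^\perp$ is a skew $(\lambda^{-1},\Theta)$-constacyclic code, and its components correspond to the reciprocal factorization of $x^{3p^s}-\lambda^{-1}$. For a factor $g(x)=(x-c)^{p^s}$, the reciprocal is, up to a unit scalar, $(x-c^{-1})^{p^s}$. Hence $(x-\omega^{i-1}\delta)^{p^s}$ corresponds to $(x-\omega^{1-i}\delta^{-1})^{p^s}$, and $(x^2+\delta x+\delta^2)^*$ is a unit multiple of $x^2+\delta^{-1}x+\delta^{-2}$. The main obstacle is bookkeeping which component the annihilator lands in. Reciprocation permutes the factors: it swaps indices $2$ and $3$ in case (b) and fixes the quadratic factor in case (a). Comparing components in $\mathcal{C}=\mathcal{C}^\perp$, under the natural identification via $\delta\mapsto\delta^{-1}$, gives the stated self-duality criteria. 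I expect this component matching, together with checking that the scalar normalisation of reciprocals does not affect the ideals, to require the most care. The coprimality lift and the centrality checks are routine.
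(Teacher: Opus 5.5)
Your decomposition argument is the paper's argument. It uses $\Theta(\delta)=\delta$ from Lemma~\ref{lemma fixing lambda_0}, the factorisation $x^{3p^s}-\lambda=(x-\delta)^{p^s}(x^2+\delta x+\delta^2)^{p^s}$, centrality from Proposition~\ref{Central poly generalization}, reduction by $\mu_k$ for $m$ odd, $\theta(\omega)=\omega$ for $m$ even, and then the CRT. You are also right that $\delta x^{p^s}$ in the statement is a typo. In one respect you are more careful than the paper, which applies the CRT without checking comaximality. Your lift of a B\'ezout relation from $\mathbb{F}_{p^m}[x;\theta]$ supplies this; it is valid because $uR_k[x;\Theta]$ is a nilpotent two-sided ideal. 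The ``no root'' check is not needed for the decomposition.

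\textbf{The gap is the self-duality step.} The paper also only asserts it (``dual is clear from Remark~\ref{remark 1}''). There is no ``natural identification via $\delta\mapsto\delta^{-1}$'' that turns the set equality $\mathcal{C}=\mathcal{C}^\perp$ into componentwise equalities.
\begin{itemize}
\item $\mathcal{C}$ is split by the idempotents of $R_k[x;\Theta]/\langle x^{3p^s}-\lambda\rangle$, while $\mathcal{C}^\perp$ is split by those of $R_k[x;\Theta]/\langle x^{3p^s}-\lambda^{-1}\rangle$.
\item For example, $\mathcal{C}_1$ and $\mathcal{C}_1^\perp$ are ideals of the rings defined by $(x-\delta)^{p^s}$ and $(x-\delta^{-1})^{p^s}$. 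These rings differ unless $\delta^{2p^s}=1$.
\item An isomorphism between them, such as $x\mapsto\delta^{2}x$, is not the identity on $R_k^{3p^s}$. So it cannot transport $\mathcal{C}=\mathcal{C}^\perp$.
\end{itemize}

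A componentwise criterion is meaningful only when both codes are decomposed in the same ring. In that case the component of $\mathcal{C}^\perp$ at a factor $g$ comes from the component of $\mathcal{C}$ at $g^{*}$. The stated labelling is correct exactly when $\delta^{2p^s}=1$, and this forces $\lambda=\pm1$.

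In (a) this condition is then automatic, since $R_k$ has no nontrivial cube roots of unity when $m$ is odd. In (b) it depends on the choice of cube root. Take $\lambda=\lambda_0=1$ and $\delta=\omega$, which the hypotheses allow. Then $\mathcal{C}_1,\mathcal{C}_2,\mathcal{C}_3$ sit at $(x-\omega)^{p^s},(x-\omega^2)^{p^s},(x-1)^{p^s}$. Meanwhile $\mathcal{C}_1^\perp,\mathcal{C}_2^\perp,\mathcal{C}_3^\perp$ sit at $(x-\omega^2)^{p^s},(x-\omega)^{p^s},(x-1)^{p^s}$. So self-duality means $\mathcal{C}_1=\mathcal{C}_2^\perp$, $\mathcal{C}_2=\mathcal{C}_1^\perp$, $\mathcal{C}_3=\mathcal{C}_3^\perp$, which is not the stated condition.

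To close the argument, you need three steps:
\begin{itemize}
\item restrict to $\lambda=\pm1$;
\item replace $\delta$ by $\zeta\delta$, with $\zeta^3=1$ chosen so that $(\zeta\delta)^{p^s}=\pm1$;
\item carry out the matching $g\leftrightarrow g^{*}$ explicitly on the actual list of factors.
\end{itemize}
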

\begin{proof}
	Using \Cref{lemma fixing lambda_0}, we can factorize $x^{3p^s} - \lambda$ over $R_k$.
	As 	\[ x^{3p^s} - \lambda = x^{3p^s} - \delta^{3p^s} = (x - \delta)^{p^s}(x^{2} + \delta x + \delta^2)^{p^s}. \]
	First we consider $m$ is odd. We prove that in this case $x^{2} + \delta x + \delta^2$ is irreducible over $R_k$.
		Suppose, for the sake of contradiction, that $ x^2 + \delta x + \delta^2$ is reducible over $R_k$. Then there exists a root $r \in R_k$ such that $r^2 + \delta r + \delta^2 = 0. $
	This implies, $\mu_k(r^2 + \delta r + \delta^2) = 0 $ in $\mathbb{F}_{p^m}.$ Let $\bar{r}$ and $\bar{\delta}$ be the images of $r$ and $\delta$ respectively. We have $\bar{r}^2 + \bar{\delta} \bar{r} + \bar{\delta}^2 = 0.$ This implies,
	$ \bar{r}^3 - \bar{\delta}^3 = 0 \implies (\bar{r}\bar{\delta}^{-1})^3 = 1. $ If $\bar{r}\bar{\delta}^{-1}=1$, then $\bar{r} = \bar{\delta}$. Thus we have, $\bar{\delta}^2 + \bar{\delta}^2 + \bar{\delta}^2 = 3\bar{\delta}^2 = 0. $ Since $\lambda$ is a unit, $\bar{\delta}\neq 0$. Thus, $3 = 0$, which contradicts $p \neq 3$.
	If $\bar{r}\bar{\delta}^{-1} \neq 1$, then the order of $ \bar{r}\bar{\delta}^{-1}$ is exactly $3$. For an element of order $3$ to exist in $\mathbb{F}_{p^m}$, we must have $3 \mid |\mathbb{F}_{p^m}^*|$, i.e., $3 \mid p^m - 1$.
	This implies	$p^m \equiv 1 \mod 3. $
	However, we are given $p \equiv -1 \mod 3$. Since $m$ is odd, we have
	$p^m \equiv (-1)^m \equiv -1  \mod 3.$
	This contradicts $p^m \equiv 1 \mod 3$.
	This implies, $x^{2} + \delta x + \delta^2$ is irreducible over $R_k$.
		Since $\delta$ is fixed by the automorphism $\Theta$. Therefore, by \Cref{Central poly generalization} the generated ideals are two-sided in  $R_k[x; \Theta]$.
	By the Chinese Remainder Theorem, the ambient ring decomposes as:
	\[ \frac{R_k[x; \Theta]}{\langle x^{3p^s} - \lambda \rangle} \cong \frac{R_k[x; \Theta]}{\langle (x - \delta)^{p^s} \rangle} \oplus \frac{R_k[x; \Theta]}{\langle (x^{2} + \delta x + \delta^2)^{p^s} \rangle}. \]
	Consequently, any skew $(\lambda, \Theta)$-constacyclic code $\mathcal{C}$ of length $3p^s$ over $R_k$ can be uniquely written as a direct sum $\mathcal{C} = \mathcal{C}_1 \oplus \mathcal{C}_2$, where $\mathcal{C}_1$ and $ \mathcal{C}_2$  are skew $(\delta^{p^s}, \Theta)$-constacyclic  and skew $(f, \Theta)$-polycyclic
	codes of length $p^s$ and $2p^s$ respectively, where $f= x^{2p^s} + \delta^{p^s} x^{p^s} + \delta^{2p^s}$. \\
	Since $(x^{2} + \delta x + \delta^2)^*=\delta^2x^{2}+\delta x+1= \delta^2(x^{2}+\delta^{-1}x+\delta^{-2}).$ Because $\Theta(\delta^{-1})=\delta^{-1}$ and $|\Theta| \mid p^s$; 
	 the ideal $\langle x^{2p^s}+\delta^{-1}x^{p^s}+\delta^{-2} \rangle $ is central in $R_k[x; \Theta]$. Hence dual is clear from \Cref{remark 1}.\\
	Now let us consider  $m$ is even.
	Since $p \equiv -1 \mod 3$, we have,
	$p^m \equiv (-1)^m \mod 3,$
 thus $p^m \equiv 1 \mod 3$. Consequently, $\mathbb{F}_{p^m}$ contains primitive cube roots of unity, denoted by $\omega$ and $\omega^2$. 
 Now because of order of $\Theta$ divides $p^s$  and $p$ is odd prime strictly bigger then $3$, so $\Theta(\omega)=\omega$.
 Thus,  $x^{2p^s} + \delta x^{p^s} + \delta^2=
	(x^{p^s} - \omega\delta)(x^{p^s} - \omega^2\delta). $
 Hence,
	$x^{3p^s} - \lambda = (x^{p^s} - \delta)(x^{p^s} - \omega\delta)(x^{p^s} - \omega^2\delta). $
	The ambient ring decomposes as
	$$\frac{R_k[x; \Theta]}{\langle x^{3p^s} - \lambda \rangle} \cong \frac{R_k[x; \Theta]}{\langle x^{p^s} - \delta\rangle} \oplus \frac{R_k[x; \Theta]}{\langle x^{p^s} - \omega\delta \rangle} \oplus \frac{R_k[x; \Theta]}{\langle x^{p^s} - \omega^2\delta \rangle}. $$
	\end{proof}
	
	Assume that there exists $\lambda_0 \in R_k^*$ such that $\lambda=\lambda_0^{p^s}$ and $\Theta(\lambda_0)=\lambda_0$.
	Then $x^{3p^s}-\lambda=(x^3-\lambda_0)^{p^s}.$
		If $\lambda_0$ is not a cube in $R_k^*$, then by the proof of  \Cref{lemma fixing lambda_0}, $\mu_k(\lambda_0)$ is also not a cube in $\mathbb{F}_{p^m}$. Hence,
	$x^3-\mu_k(\lambda_0)$
	is irreducible in $\mathbb{F}_{p^m}[x;\theta]$, which implies that $x^3-\lambda_0$ is irreducible in $R_k[x;\Theta]$. Therefore, the structure of the left ideals follows from \Cref{ideals R_k}.

  \section{Skew cyclic and skew negacyclic codes of length $6p^s$ over $R_k, k \geq 1$}
 In this section, we study the algebraic structure of  skew cyclic and skew negacyclic code of length $6p^s$ over $R_k$.  Consider $\mathcal{R}_{k, \lambda}^{6p^s} = \frac{R_k[x;\Theta]}{\langle x^{6p^s}-\lambda \rangle}$, where $\lambda \in \{1, -1\}$ represent skew cyclic and skew negacyclic code, respectively.
 
 \subsection{Structure of Skew Cyclic Codes of Length $6p^s$ over $R_k$}\label{sub 5.1}
 Note that $x^{6p^s}-1=(x^6-1)^{p^s}$ and  $x^6-1=(x-1)(x+1)(x^2-x+1)(x^2+x+1)$ over $R_k$. 
Further, if $p \equiv 1 \mod 6$ or $m$ is even then  $\mathbb{F}_{p^m}$ contains a primitive $6^{th}$ roots of unity, say $\alpha$. Thus $\Theta(\alpha)=\alpha$ because if $\Theta(\alpha)=\alpha^{-1}$ then it contradict the assumption, $|\Theta| \mid p^s$.
 Thus,
 $$x^{6p^s}-1 =(x-1)^{p^s}(x+1)^{p^s}(x-\alpha^2)^{p^s}(x-\alpha^4)^{p^s}(x-\alpha)^{p^s}(x-\alpha^5)^{p^s}.$$
 Hence,
 \[ \mathcal{R}_k^{6p^s} \cong \bigoplus_{i=1}^6 \frac{R_k[x;\Theta]}{\langle (x^{p^s}-\alpha^{i{p^s}}) \rangle} .\]
 Every skew $\Theta$-cyclic code $\mathcal{C}$ of length $6p^s$ is a direct sum $\mathcal{C} = \bigoplus_{i=1}^6 \mathcal{C}_i$, where each $\mathcal{C}_i$ is skew $(\Theta, \alpha^{i{p^s}})$-constacyclic code of length $p^s$.\\
Now, if $p \equiv 5 \mod 6$ and $m$ is odd
 then $x^2 \pm x + 1$ are irreducible over $R_k$. Indeed, if $x^2 \pm x +1$ is reducible over $R_k$, then there exists $\gamma \in R_k$, where $\gamma=\sum_{i=0}^{k-1}u^i\gamma_i$ with $\gamma_i \in \mathbb{F}_{p^m}$, such that $\gamma^2 \pm \gamma +1=0$. Applying $\mu_k$, we get $\mu_k(\gamma)^2 \pm \mu_k(\gamma)+1=0$. Hence, $\sqrt{-3}\in \mathbb{F}_{p^m}$, a contradiction. Therefore, $x^2 \pm x +1$ are irreducible over $R_k$.
   Thus, Proposition \ref{Central poly generalization}, we have
 \[ \mathcal{R}_k^{6p^s} \cong \frac{R_k[x;\Theta]}{\langle (x-1)^{p^s} \rangle} \oplus \frac{R_k[x;\Theta]}{\langle (x+1)^{p^s} \rangle} \oplus \frac{R_k[x;\Theta]}{\langle (x^2-x+1)^{p^s} \rangle} \oplus \frac{R_k[x;\Theta]}{\langle (x^2+x+1)^{p^s} \rangle}. \]
Hence, every skew \(\Theta\)-cyclic code \(\mathcal{C}\) can be written as
$\mathcal{C}=\mathcal{C}_1\oplus \mathcal{C}_2\oplus \mathcal{C}_3\oplus \mathcal{C}_4,
$
where \(\mathcal{C}_1\) is a skew \(\Theta\)-cyclic code of length \(p^s\), \(\mathcal{C}_2\) is a skew \(\Theta\)-negacyclic code of length \(p^s\) while  \(\mathcal{C}_3\) and \(\mathcal{C}_4\) are skew \((f_1,\Theta)\)-polycyclic and skew \((f_2, \Theta)\)-polycyclic codes of length \(2p^s\), respectively, where $f_1=x^{2p^s}-x^{p^s}+1 $ and $f_2=x^{2p^s}+x^{p^s}+1.$
 \\
 
 Now we move further to study skew negacyclic codes of length $6p^s.$
\subsection{Structure of Skew Negacyclic Codes of Length $6p^s$ over $R_k$} 
This subsection is divided into the following four cases based on the congruence class of the prime \( p \) modulo \( 12 \).
By Proposition~\ref{lemma_uniform_iso}, the cases \(p\equiv 1, 5, 7, \text{and}~ 11 \mod{12}\) with \(m\) even are equivalent to the corresponding skew cyclic cases, since in each of these cases there exist $\alpha' \in \mathbb{F}_{p^m}^*$ such that $\alpha'^6=-1$.
	Hence, these cases are omitted.\\
Now, we consider the remaining cases.
\subsubsection{\boldmath{$p \equiv 5 \mod 12$ and $m$ odd}}
Since $12~|~p-5$, we have $4~|~p-1$ but $12~\nmid p-1$ then there exist an element $\alpha$ in $\mathbb{F}_p$ such that
\[
\alpha= \xi^{\frac{p-1}{4}} ~\text{and }~\alpha^{-1}= \xi^{\frac{3(p-1)}{4}},
\] where $\xi$ be a generator of $\mathbb{F}_p^{*}=\mathbb{F}_p \setminus \{0\}.$ For any $\Theta \in $ Aut$(R_k)$, we have $\Theta(\alpha)=\alpha,$ $\Theta(\alpha^{-1})=\alpha^{-1}.$\\
Therefore, the polynomial $x^{6}+1$ admits the following decomposition:
\begin{align*}
	(x^{6}+1)&=(x^{2}+1)(x^{4}-x^{2}+1)\\&
	=(x^{2}-\alpha^2) (x^{4}+(\alpha+\alpha^{-1})x^{3}+(\alpha \alpha^{-1}-2)x^{2}+1)\\&
	= (x-\alpha) (x+\alpha) (x^2+\alpha x -1)  (x^2+\alpha^{-1}x-1)
	\end{align*}

\begin{lemma}\label{ p 5 mod 12}
	If $m$ is odd then  $x^2+\alpha x -1, $ and  $x^2+\alpha^{-1} x-1$  are irreducible over $R_k.$
\end{lemma}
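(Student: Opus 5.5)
The plan is to reduce modulo $u$ via $\mu_k$, as was done for $x^2\pm x+1$ in Subsection~\ref{sub 5.1}, and then show that the reduced quadratics have no root in $\mathbb{F}_{p^m}$ by a discriminant computation.

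First recall that $\alpha=\xi^{\frac{p-1}{4}}$ has order $4$, so $\alpha^2=-1$ and $\alpha^{-2}=-1$.

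\textbf{Reduction to the residue field.} Suppose, for contradiction, that $x^2+\alpha x-1$ factors nontrivially in $R_k[x;\Theta]$. Since it is monic, a nontrivial factorization is a product of two monic linear factors, say $(x+a)(x+b)$ with $a,b\in R_k$ (up to units). Applying the epimorphism $\mu_k$ gives
\[
(x+\bar a)(x+\bar b)=x^2+\alpha x-1 \quad\text{in } \mathbb{F}_{p^m}[x;\theta],
\]
because $\alpha\in\mathbb{F}_p$ is fixed by $\mu_k$. Expanding with $xa=\theta(a)x$, this means $\theta(\bar b)+\bar a=\alpha$ and $\bar a\bar b=-1$.

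\textbf{Disposing of the skewness.} I would next show $\theta(\bar b)=\bar b$, so that the equations become the commutative ones. The tool is the order argument used in Lemma~\ref{Pro 3p^s iso} and Lemma~\ref{lemma fixing lambda_0}. The element $\beta=\theta(\bar b)$ satisfies $\beta^2+\alpha\beta-1=0$ up to twisting, and $\theta$ has order dividing $p^s$, which is odd. So $\theta$ permutes the at most two roots of the central polynomial $x^2+\alpha x-1$ in a suitable extension. An automorphism of odd order acting on a set of size at most $2$ must act trivially, so $\bar b$ is fixed by $\theta$. Hence $\bar b$ is a genuine root of $y^2+\alpha y-1$ in $\mathbb{F}_{p^m}$. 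This is the step I expect to be the main obstacle: making the skew-to-commutative passage rigorous, since root-counting in $\mathbb{F}_{p^m}[x;\theta]$ is not literally the commutative one. If that route fails, I would instead use centrality together with \Cref{Central poly generalization} to justify working as in the commutative case, which is how the paper handles $x^2\pm x+1$.

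\textbf{Discriminant computation.} Since $p$ is odd, $2$ is invertible, so a root of $y^2+\alpha y-1$ exists in $\mathbb{F}_{p^m}$ if and only if the discriminant $\alpha^2+4=3$ is a square in $\mathbb{F}_{p^m}$. Because $p\equiv1\pmod 4$, quadratic reciprocity gives
\[
\left(\tfrac{3}{p}\right)=\left(\tfrac{p}{3}\right)=\left(\tfrac{2}{3}\right)=-1,
\]
so $\sqrt3\in\mathbb{F}_{p^2}\setminus\mathbb{F}_p$. As $m$ is odd, $\mathbb{F}_{p^2}\not\subseteq\mathbb{F}_{p^m}$, hence $\sqrt3\notin\mathbb{F}_{p^m}$. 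This contradicts the existence of $\bar b$, so $x^2+\alpha x-1$ is irreducible over $R_k$.

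\textbf{The second polynomial.} For $x^2+\alpha^{-1}x-1$ the same argument applies verbatim. Its discriminant is $\alpha^{-2}+4=-1+4=3$, so the same contradiction arises.
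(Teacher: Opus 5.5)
\textbf{There is a gap in your step ``Disposing of the skewness''; the rest of your proposal is the paper's proof.} Your reduction via $\mu_k$ and your discriminant computation ($\alpha^2+4=3$, a non-residue modulo $p$ by reciprocity, hence a non-square in $\mathbb{F}_{p^m}$ for odd $m$) are exactly what the paper does.

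The paper avoids the skewness issue entirely. It reads reducibility as the existence of $\gamma\in R_k$ with $\gamma^2+\alpha\gamma-1=0$, i.e.\ it treats the quadratic commutatively. This matches how it factors $x^6+1$: the factors actually used, such as $f_1=x^{2p^s}+\alpha^{p^s}x^{p^s}-1$, are polynomials in the central element $x^{p^s}$ with $\Theta$-fixed coefficients.

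Your skewness step fails in three ways.
\begin{itemize}
\item It is circular. From $\theta(\bar b)+\bar a=\alpha$ and $\bar a\bar b=-1$ you get, for $c=-\bar b$, the equation $\theta(c)c+\alpha c-1=0$, not $c^2+\alpha c-1=0$. Saying ``$\theta$ permutes the roots of $y^2+\alpha y-1$, hence fixes $\bar b$'' presupposes that $\bar b$ is already such a root, which is what you need to show.
\item The fallback fails. Comparing $fr$ with $rf$, or Proposition~\ref{Central poly generalization}(2) with $i=1$, shows that $x^2+\alpha x-1$ is central in $\mathbb{F}_{p^m}[x;\theta]$ only if $\theta=\mathrm{id}$.
\item Oddness of $|\theta|$, the only property you use, cannot suffice. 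Take $p=5$, $m=3$, $\alpha=2$, $\theta(a)=a^5$ (order $3$). This violates $|\theta|\mid p^s$ but satisfies your hypothesis. Over $\mathbb{F}_5$ one has $c^6+2c-1=(c^3+c^2+2)(c^3-c^2+c+2)$ with both cubics irreducible. So some $c\in\mathbb{F}_{125}$ satisfies $\theta(c)c+2c-1=0$, and then $x^2+2x-1=(x+\theta(c)+2)(x-c)$ in $\mathbb{F}_{125}[x;\theta]$, although $3$ is not a square in $\mathbb{F}_{125}$.
\end{itemize}

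\textbf{How to repair it.} If you argue with the commutative quadratic, as the paper does, simply delete the step; your proof is then complete and coincides with the paper's. If you want irreducibility in $R_k[x;\Theta]$ itself, you must use that $r=|\theta|$ is a power of $p$. For example:
\begin{itemize}
\item The equation says $\theta(c)=g(c)$, where $g(z)=(1-\alpha z)/z$ is the M\"obius map of $M=\bigl(\begin{smallmatrix}-\alpha&1\\1&0\end{smallmatrix}\bigr)$.
\item Since $\alpha$ is $\theta$-fixed, $c=\theta^{r}(c)=g^{r}(c)$. So $(c,1)^{T}$ is an eigenvector of $M^{r}$ with eigenvalue in $\mathbb{F}_{p^m}$.
\item The eigenvalues of $M^r$ are $\rho^{r},\rho^{pr}$, where $\rho\in\mathbb{F}_{p^2}\setminus\mathbb{F}_p$ is a root of $T^2+\alpha T-1$. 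Hence, for a suitable choice of $\rho$, $\rho^{r}\in\mathbb{F}_{p^m}\cap\mathbb{F}_{p^2}=\mathbb{F}_p$.
\item Together with $\rho^{p+1}=-1$, and using $\gcd(r,2(p+1))=1$, this gives $\operatorname{ord}(\rho)\mid\gcd(r(p-1),2(p+1))=\gcd(p-1,4)=4$. Hence $\rho\in\mathbb{F}_p$, a contradiction.
\end{itemize}
This is precisely what breaks for $r=3$ in the example above, since $3\mid p+1$. The same argument handles $x^2+\alpha^{-1}x-1$.
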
 
\begin{proof} Suppose $x^2+\alpha x -1 $ is reducible over $R_k$, then there exist a root  $\gamma \in R_k$, where $\gamma=\sum_{i=0}^{k-1}u^i\gamma_i$, with $\gamma_i \in \mathbb{F}_{p^m}$, such that $	\gamma^2+\alpha\gamma-1=0$.
	This implies $ \mu_k(\gamma)^2+\alpha \mu_k(\gamma)^2-1=0$, so
	$\mu_k(\gamma)=\frac{-\alpha \pm \sqrt{\alpha^2+4}}{2}.$
	Since $\alpha^2=-1$, we get $
	\mu_k(\gamma)=\frac{-\alpha \pm \sqrt{3}}{2}.
	$ As $3$ is  non-quadratic residue modulo $p,$ it follows that $\sqrt{3} \in \mathbb{F}_{p^2} \setminus \mathbb{F}_p.$ Hence, $\sqrt{3} \in \mathbb{F}_{p^m}$ if and only if $2~|~ m.$ But $m$ is odd, which gives a contradiction. Therefore,  $x^2+\alpha x -1 $ is an irreducible over $R_k.$ Similarly, we can prove $x^2+\alpha^{-1} x -1$ is irreducible over $R_k.$
\end{proof}
Hence, by Proposition \ref{Central poly generalization}, and CRT, we have 
  \begin{align*}
 	\mathcal{R}_k^{6p^s}&\cong \frac{R_k[x; \Theta]}{\langle (x-\alpha)^{p^s} \rangle} \oplus \frac{R_k[x; \Theta]}{\langle (x+\alpha)^{p^s} \rangle} \oplus\frac{R_k[x; \Theta]}{\langle (x^{2}+\alpha x -1)^{p^s} \rangle} \oplus\frac{R_k[x; \Theta]}{\langle (x^{2}+\alpha^{-1} x-1)^{p^s} \rangle}.
 \end{align*}
 It follows that the left ideals of $ \mathcal{R}_k^{6p^s}$, say $\mathcal{C}$, is represented as a direct sum
 \[ \mathcal{C}=\mathcal{C}_1\oplus \mathcal{C}_2\oplus\mathcal{C}_3\oplus\mathcal{C}_4, \] where \(\mathcal{C}_1\) and \(\mathcal{C}_2\) are skew \((\alpha,\Theta)\)-constacyclic and skew \((-\alpha,\Theta)\)-constacyclic codes of length \(p^s\) over \(R_k\), respectively, while \(\mathcal{C}_3\) and \(\mathcal{C}_4\) are skew \((f_1, \Theta)\)-polycyclic and skew \((f_2, \Theta)\)-polycyclic codes of length \(2p^s\), respectively, where $f_1=x^{2p^s}+\alpha^{p^s} x^{p^s}-1
 \quad \text{and} \quad
 f_2=x^{2p^s}+\alpha^{-p^s}x^{p^s}-1.$\\
 
 Thus, we have following theorem:
 
  \begin{theorem}\label{6}
 	Let $\mathcal{C}=\mathcal{C}_1\oplus \mathcal{C}_2\oplus\mathcal{C}_3\oplus\mathcal{C}_4$ be a skew negacyclic code of length $6p^s$ over $R_k,$ where  $\mathcal{C}_1$, $\mathcal{C}_2$, $\mathcal{C}_3$, and $\mathcal{C}_4$  are the left ideals of $\frac{R_k[x; \Theta]}{\langle (x-\alpha)^{p^s} \rangle},$ $\frac{R_k[x; \Theta]}{\langle (x+\alpha)^{p^s} \rangle},$ $\frac{R_k[x; \Theta]}{\langle (x^{2}+\alpha x -1)^{p^s} \rangle},$ and $\frac{R_k[x; \Theta]}{\langle (x^{2}+\alpha^{-1} x-1)^{p^s} \rangle},$ respectively. Then $|\mathcal{C}|=\prod_{i=1}^{4} |\mathcal{C}_i|$ and its dual is given by
 	$$	\mathcal{C}^\perp=\mathcal{C}_1^\perp \oplus \mathcal{C}_2^\perp \oplus\mathcal{C}_3^\perp \oplus  \mathcal{C}_4^\perp, $$
 	 where $\mathcal{C}_1^\perp$ and  $\mathcal{C}_2^\perp$  are the left ideals of the ambient rings $\frac{R_k[x; \Theta]}{\langle (x+\alpha)^{p^s} \rangle}$ and
 	  $\frac{R_k[x; \Theta]}{\langle (x-\alpha)^{p^s} \rangle}$, while $\mathcal{C}_3^\perp$ and $\mathcal{C}_4^\perp$ are  left ideals of $\frac{R_k[x; \Theta]}{\langle (x^{2}+\alpha^{-1} x -1)^{p^s} \rangle} ,$ and  $\frac{R_k[x; \Theta]}{\langle (x^{2}+\alpha x -1)^{p^s} \rangle} ,$ respectively. Moreover, $\mathcal{C}$ is a self-dual code if and only if $\mathcal{C}_1=\mathcal{C}_2^\perp,$ $\mathcal{C}_2=\mathcal{C}_1^\perp,$ $\mathcal{C}_3=\mathcal{C}_4^\perp,$ and $\mathcal{C}_4=\mathcal{C}_3^\perp.$    
 	
 \end{theorem}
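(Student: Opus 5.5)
The plan has three parts: get the cardinality from the CRT decomposition, identify each component of the dual by computing reciprocals of the central factors, and deduce self-duality by comparing components.

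\textbf{Decomposition and cardinality.} Use the decomposition already established before the statement. By Lemma \ref{ p 5 mod 12} and Proposition \ref{Central poly generalization}, the four factors $(x-\alpha)^{p^s}$, $(x+\alpha)^{p^s}$, $(x^2+\alpha x-1)^{p^s}$ and $(x^2+\alpha^{-1}x-1)^{p^s}$ are central and pairwise coprime. The ring isomorphism $\varphi$ of the CRT lemma from Section 2, together with \Cref{decomposition of consta code}, then gives a bijection $\mathcal{C}\mapsto(\mathcal{C}_1,\dots,\mathcal{C}_4)$. Under it,
\[
\mathcal{C}=\textstyle\bigoplus_i \varepsilon_i(x)\mathcal{C}_i .
\]
Since $\varphi$ is a bijection of sets, $|\mathcal{C}|=\prod_{i=1}^4|\mathcal{C}_i|$ follows at once.

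\textbf{Identifying the dual components.} Since $\mathcal{C}$ is skew $(-1,\Theta)$-constacyclic and $(-1)^{-1}=-1$, the dual $\mathcal{C}^\perp$ is again a left ideal of $\mathcal{R}_{k,-1}^{6p^s}$. So $\mathcal{C}^\perp$ decomposes over the same four factors, and I only need to decide which component each $\mathcal{C}_i^\perp$ lands in. By Remark \ref{remark 1}, the dual corresponds to $\mathcal{A}(\mathcal{I})^*$. I would show that the annihilator respects the decomposition, i.e. $\mathcal{A}(\mathcal{I})=\bigoplus_i \varepsilon_i\mathcal{A}(\mathcal{C}_i)$; this holds because the $\varepsilon_i$ are central orthogonal idempotents. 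Taking reciprocals then sends the component supported on a factor $g(x)^{p^s}$ to the component supported on $g^*(x)^{p^s}$, up to a unit. The factors are central, so they have $\Theta$-fixed coefficients and the reciprocal is the ordinary one. The computations are:
\[
(x-\alpha)^*=1-\alpha x=-\alpha\,(x-\alpha^{-1})=-\alpha\,(x+\alpha),
\]
using $\alpha^{-1}=-\alpha$ because $\alpha^2=-1$. Similarly $(x+\alpha)^*$ is associate to $x-\alpha$. Next,
\[
(x^2+\alpha x-1)^*=-x^2+\alpha x+1=-(x^2-\alpha x-1)=-(x^2+\alpha^{-1}x-1),
\]
and symmetrically $(x^2+\alpha^{-1}x-1)^*$ is associate to $x^2+\alpha x-1$. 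Because $(fg)^*=f^*g^*$ for central $\Theta$-fixed polynomials, the $p^s$-th powers behave the same way. This gives exactly the ambient rings listed for the $\mathcal{C}_i^\perp$: the first two components swap, and the last two swap.

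\textbf{Self-duality and the main obstacle.} $\mathcal{C}=\mathcal{C}^\perp$ holds if and only if the components agree factor by factor. The dual's component in the $(x-\alpha)^{p^s}$-slot is $\mathcal{C}_2^\perp$, and so on, which yields $\mathcal{C}_1=\mathcal{C}_2^\perp$, $\mathcal{C}_2=\mathcal{C}_1^\perp$, $\mathcal{C}_3=\mathcal{C}_4^\perp$ and $\mathcal{C}_4=\mathcal{C}_3^\perp$; uniqueness of the decomposition gives the converse. The main obstacle is making precise that $\ast$ carries the idempotent component for $g^{p^s}$ onto the one for $(g^*)^{p^s}$. The map $\ast$ is an anti-isomorphism of $R_k[x;\Theta]/\langle x^{6p^s}+1\rangle$ up to a power of $x$ and a unit, so I would check that $\varepsilon_i^*$ equals, up to a unit multiple of a power of $x$, the idempotent attached to $g_i^*$. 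I would do this by verifying that $\varepsilon_i^*$ is divisible by every other reciprocal factor and is congruent to a unit modulo $(g_i^*)^{p^s}$.
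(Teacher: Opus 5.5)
Your proposal is correct and follows the paper's route. The CRT decomposition over the four central, pairwise coprime factors gives the cardinality, and Remark~\ref{remark 1} with the reciprocal computations $(x\mp\alpha)^*\sim x\pm\alpha$, $(x^2+\alpha x-1)^*\sim x^2+\alpha^{-1}x-1$ (and conversely) gives the dual components and the self-duality criterion; your check that $*$ carries $\varepsilon_i$ to the idempotent of $g_i^*$ fills in a step the paper leaves implicit. As in the paper, $\mathcal{C}_3^\perp$ and $\mathcal{C}_4^\perp$ must be read as the transported annihilators $\mathcal{A}(\mathcal{C}_i)^*$, because the Euclidean dual of a length-$2p^s$ polycyclic code for a trinomial such as $x^{2p^s}+\alpha x^{p^s}-1$ is in general not a left ideal of the ring attached to the reciprocal factor.
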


%
  \subsubsection{\boldmath{$p \equiv 7 \mod 12$ and $m$ odd}}
 Since $12~|~p-7$, we have  $6~|~p-1$, but $12~\nmid p-1.$ Then there exist an element $\alpha$ in $\mathbb{F}_p$ such that
 $
 \alpha= \xi^{\frac{p-1}{6}} \text{and }~~\alpha^{-1}= \xi^{\frac{5(p-1)}{6}},
 $ where $\xi$ is a generator of $\mathbb{F}_p^{*}=\mathbb{F}_p \setminus \{0\}.$  Moreover, for all $\Theta \in Aut(R_k)$, we have $\Theta(\alpha)=\alpha$, $\Theta(\alpha^{-1})=\alpha^{-1}.$\\
 Therefore, the polynomial $x^{6}+1$ admits the following decomposition:
 \begin{align*}
 	(x^6+1)&=(x^2+1)(x^4-x^2+1)\\&
 	=(x^2+1) (x^2-\alpha) (x^2-\alpha^{-1})
 \end{align*}
  Since $m$ is odd, then we have $p^m \equiv 7$ mod $12$.
 Suppose $x^2+1 $ is reducible over $R_k$, then there exist an element $\gamma$ in $R_k$ such that $\gamma^2=-1.$  This further implies there exists $\mu_k(\gamma) \in \mathbb{F}_{p^m}$ such that $\mu_k(\gamma)^2=-1$. Since $p \equiv 7 $ mod $12,$ we have $p \equiv 3 $ mod $4$, then $-1$ is non-quadratic residue modulo $p$ and $\sqrt{-1}\in \mathbb{F}_{p^2} \setminus \mathbb{F}_p$. Thus $\sqrt{-1}\in \mathbb{F}_{p^m}$ if and only if $2~|~m.$ But $m$ is odd so,  $\sqrt{-1} \not \in \mathbb{F}_{p^m}$, which gives a contradiction.
Similarly if $x^2-\alpha=0$ is reducible over $R_k$ then there exist an element $\gamma \in R_k$ such that $\gamma^2-\alpha=0$, which  implies there exists some $\mu_k(\gamma) \in \mathbb{F}_{p^m}$ such that $\mu_k(\gamma)^2-\alpha=0$.   Hence $\mu_k(\gamma)^{12}=1$  $i.e.$ $12~|~ (p^m-1)$, which leads to a contradiction as $p^m \equiv 7$ mod $12.$ Therefore,  $x^2-\alpha$ is irreducible over $R_k.$ Similarly,  $x^2-\beta$ is irreducible over $R_k.$ Therefore,
 \begin{align*}
 	(x^{6}+1)^{p^s}=(x^{2}+1)^{p^s} (x^{2}-\alpha)^{p^s} (x^{2}-\alpha^{-1})^{p^s}
 \end{align*}
 Hence, by Proposition \ref{Central poly generalization}, we have 
 \begin{align*}
 	\mathcal{R}_k^{6p^s}\cong \frac{R_k[x; \Theta]}{\langle (x^{2}+1)^{p^s}\rangle} \oplus \frac{R_k[x; \Theta]}{\langle (x^{2}-\alpha)^{p^s}\rangle} \oplus\frac{R_k[x; \Theta]}{\langle (x^{2}-\alpha^{-1})^{p^s}\rangle}.
 \end{align*}
It follows that the left ideals of $ \mathcal{R}_k^{6p^s}$, say $\mathcal{C}$, is represented as a direct sum $\mathcal{C}=\mathcal{C}_1\oplus \mathcal{C}_2\oplus\mathcal{C}_3$, where \(\mathcal{C}_1\), \(\mathcal{C}_2\) and \(\mathcal{C}_3\) are skew \(\Theta\)-negacyclic, skew \((\alpha^{p^s}, \Theta)\)-constacyclic and skew \((\alpha^{-p^s}, \Theta)\)-constacyclic codes of length \(2p^s\) over \(R_k\), respectively and its dual is given by $\mathcal{C}^\perp=\mathcal{C}_1^\perp \oplus \mathcal{C}_2^\perp \oplus\mathcal{C}_3^\perp,$   where $\mathcal{C}_1^\perp,$  $\mathcal{C}_2^\perp,$ and $\mathcal{C}_3^\perp$ are the left ideals of the rings $\frac{R_k[x; \Theta]}{\langle (x^{2}+1)^{p^s} \rangle}, \frac{R_k[x; \Theta]}{\langle(x^{2}-\alpha^{-1})^{p^s} \rangle}$, and $\frac{R_k[x; \Theta]}{\langle(x^{2}-\alpha)^{p^s}\rangle}.$  Moreover, $\mathcal{C}$ is a self-dual code if and only if $\mathcal{C}_1=\mathcal{C}_1^\perp,$ $\mathcal{C}_2=\mathcal{C}_3^\perp,$ and $\mathcal{C}_3=\mathcal{C}_2^\perp.$

   \subsubsection{\boldmath{$p \equiv 11 \mod 12$ and $m$ odd}} Since $3$ is a quadratic residue modulo $p$, there exist an element $\beta$ in $\mathbb{F}_{p}$ such that $\beta^2=3.$ Therefore, the polynomial $x^{6}+1$ admits the following decomposition:
 \[
 x^6+1=(x^2+1)(x^2+\beta x+1)(x^2-\beta x+1).
 \]
  Since $m$ is odd and $p\equiv 3\mod{4}$, we have $-1$ is not a quadratic residue modulo $p.$ Therefore, by similar explanation as given in lemma \ref{ p 5 mod 12} we have $x^2\pm \beta x+1$ and  $x^2+1$ are irreducible polynomial over $R_k.$
 Hence by Proposition \ref{Central poly generalization} and CRT, we have
 \[	\mathcal{R}_k^{6p^s} \cong \frac{R_k[x; \Theta]}{\langle (x^{2}+1)^{p^s} \rangle}  \oplus\frac{R_k[x; \Theta]}{\langle(x^{2}+\beta x +1)^{p^s} \rangle} \oplus\frac{R_k[x; \Theta]}{\langle(x^2-\beta x+1)^{p^s} \rangle}.
 \]
 It follows that the left ideals of $ \mathcal{R}_k^{6p^s}$, say $\mathcal{C}$, is represented as a direct sum
 $\mathcal{C}=\mathcal{C}_1\oplus \mathcal{C}_2\oplus\mathcal{C}_3$ where \(\mathcal{C}_1\) is a skew \(\Theta\)-negacyclic code of length \(2p^s\) over \(R_k\), while \(\mathcal{C}_2\) and \(\mathcal{C}_3\) are skew \((f_1, \Theta)\)-polycyclic and skew \((f_2, \Theta)\)-polycyclic codes of length \(2p^s\) over \(R_k\), respectively, where
 \[
 f_1=x^{2p^s}+\beta^{p^s} x^{p^s}+1
 ~ \text{and} ~
 f_2=x^{2p^s}-\beta^{p^s} x^{p^s}+1.
 \] Its dual is given by
 $	\mathcal{C}^\perp=\mathcal{C}_1^\perp \oplus \mathcal{C}_2^\perp \oplus\mathcal{C}_3^\perp, $
 where $\mathcal{C}_1^\perp$, $\mathcal{C}_2^\perp,$ and
 $\mathcal{C}_3^\perp$ are the left ideals of the ambient ring $\frac{R_k[x; \Theta]}{\langle (x^2+1)^{p^s}\rangle},$    $\frac{R_k[x; \Theta]}{\langle(x^2-\beta x +1)^{p^s} \rangle},$ and  $\frac{R_k[x; \Theta]}{\langle(x^{2}+\beta x +1)^{p^s} \rangle},$ respectively. Moreover, $\mathcal{C}$ is a self-dual code if and only if $\mathcal{C}_1=\mathcal{C}_1^\perp,$ $\mathcal{C}_2=\mathcal{C}_3^\perp,$ and $\mathcal{C}_3=\mathcal{C}_2^\perp$.
 
In the next section, we present examples of MDS codes constructed using the theory developed above.
 \section{Examples of MDS Skew Constacyclic Codes Over $\mathbb{F}_{p^m}$} 
 A linear code $\mathcal{C}$ over $\mathbb{F}_{p^m}$ of length $n$, dimension $k^{'}$, and minimum distance $d$ is referred to as an $[n, k^{'}, d]$-code. 
A linear $[n,k',d]$-code is called an MDS code if it satisfies $d=n-k'+1$, and such codes are also known as optimal codes. In this section, we illustrate the above developed theory through several examples of optimal (MDS) skew constacyclic codes.\\
We first construct some examples of MDS codes corresponding to \Cref{3p^s first thm}, i.e., in the case when $p \equiv 1 \mod{3}$.
\subsection{\boldmath{$p \equiv 1 \mod{3}$}}
We consider the finite field \( \mathbb{F}_{7^{7}} \) and let
\(\theta \in \mathrm{Aut}(\mathbb{F}_{7^{7}})\) be the automorphism defined by $\theta(a)=a^{7}, ~ \forall a \in \mathbb{F}_{7^{7}}.$

The primitive cube roots of unity in \( \mathbb{F}_{7}^{*} \) are
$
\xi=2, ~ \xi^{-1}=4.
$
Let us consider the code $\mathcal{C}$ of length $21$. Observe that
$
|\theta|=7,
$
which satisfies the condition
$
|\theta| \mid 7.
$According to \Cref{3p^s first thm} every skew \( \theta \)-cyclic code
\( \mathcal{C} \) of length \(21\) over \( \mathbb{F}_{7^{7}} \)
admits the decomposition
\[
\mathcal{C}=\mathcal{C}_{1}\oplus \mathcal{C}_{2}\oplus \mathcal{C}_{3},
\]
where
\[
\mathcal{C}_{i}
\text{ is a left ideal of }
\frac{\mathbb{F}_{7^{7}}[x;\theta]}
{\langle x^{7}-2^{i-1}\rangle}, i=1, 2,3.
\]
According to \Cref{ideal over F_{p^m}}, $\mathcal{C}_i$ is generated by $f_i(x)$, where $f_i(x)$ is factor of $x^{7}-2^{i-1}.$
 Let $\omega$ be a $(7^7-1)^{th}$ primitive root of an irreducible polynomial of degree $7$ in $\mathbb{F}_7[x]$. Using SageMath, we factor the polynomial \( x^{7}-2 \) in the skew
polynomial ring \( \mathbb{F}_{7^{7}}[x;\theta] \); is given by,
	$$x^{7}-2=(x+a_1)(x+a_2)(x+a_3)(x+a_4)(x+a_5)(x+a_6)(x+a_7)$$ where,

\[
\begin{aligned}
	a_1 &= 5w^6 + w^5 + 2w^4 + w^3 + 3w^2 + 6w + 2, \\
	a_2 &= 4w^5 + 4w^3 + w^2 + 1, \\
	a_3 &= 4w^6 + 6w^4 + w^3 + 6w + 2, \\
	a_4 &= 5w^6 + 5w^5 + 3w^4 + 5w^3 + 2w^2 + 3w + 2, \\
	a_5 &= 4w^6 + 6w^5 + 6w^4 + 3w^3 + 2w^2 + 4w + 3,\\
	a_6 &=6w^5 + 6w^4 + 2w^3 + 3w^2 + 4w + 6,\\
	a_7 &=3w^6 + 2w^5 + 4w^3 + 4w + 2.
\end{aligned}
\]
By using the MAGMA\cite{BCP97} algebra system, we obtain the MDS skew $(2,\theta)$-constacyclic codes of length $7$ generated by the factors of $x^{7}-2$ over $\mathbb{F}_{7^7}$, as listed in \Cref{tab:x7minus2codes}.
\begin{table}[ht!]
	\centering
	\renewcommand{\arraystretch}{1.3}
	
	\begin{tabular}{|c|c|c|c|}
		\hline
		Generator polynomial 
		& $[n,k,d]$ 
		& MDS 
		\\
		\hline
		
		$(x+a_1)(x+a_2)$
		& $[7,5,3]$
		& Yes
		\\
		\hline
		
		$(x+a_1)(x+a_2)(x+a_3)$
		& $[7,4,4]$
		& Yes
		\\
		\hline
		
		$(x+a_1)(x+a_2)(x+a_3)(x+a_4)$
		& $[7,3,5]$
		& Yes
		\\
		\hline
		
		$(x+a_1)(x+a_2)(x+a_3)(x+a_4)(x+a_5)$
		& $[7,2,6]$
		& Yes
		\\
		\hline
		
	\end{tabular}

		\caption{Parameters of skew constacyclic codes generated by factors of $x^{7}-2$.}
			\label{tab:x7minus2codes}
\end{table}
The factors of  the polynomial \(x^{7}-4\) in the skew
 polynomial ring \(\mathbb{F}_{7^{7}}[x;\theta]\) as given by,
  \[
 x^{7}-4
 =
 (x+a_1)(x+a_2)(x+a_3)(x+a_4)(x+a_5)(x+a_6)(x+a_7),
 \]
 where
 \[
 \begin{aligned}
 	a_1 &= 5w^6 + w^4 + 3w + 2, \\
 	a_2 &= w^6 + 3w^5 + w^4 + 3w^3 + 6w^2 + 2w + 5, \\
 	a_3 &= 3w^6 + w^5 + 4w^4 + 5w^3 + 4w^2 + 3, \\
 	a_4 &= 2w^6 + 4w^5 + 5w^4 + 5w^3 + 4w^2 + 3w + 2, \\
 	a_5 &= 6w^5 + w^4 + w^3 + w^2 + 3w + 4, \\
 	a_6 &= 4w^6 + 3w^5 + 3w^4 + 6w^3 + 2w + 1, \\
 	a_7 &= 6w^6 + w^5 + 6w^4 + 4w^3 + 6.
 \end{aligned}
 \]
The MDS skew $(4,\theta)$-constacyclic codes of length $7$ generated by the factors of $x^{7}-4$ over $\mathbb{F}_{7^7}$, as listed in \Cref{tab:x7minus4}.
\begin{table}[ht!]
	\centering

	\renewcommand{\arraystretch}{1.2}
	\setlength{\tabcolsep}{4pt}
	\small
	\begin{tabular}{|c|c|c|}
		\hline
		Generator Polynomial  & $[n, k, d] $& MDS \\ 
		\hline
		
		$(x+a_{1})(x+a_{2})$
		& $7,5,3]$
		& Yes \\
		\hline
		
		$(x+a_{1})(x+a_{2})(x+a_{3})$
		& $[7,4,4]$
		& Yes \\
		\hline
		
		$(x+a_{1})(x+a_{2})(x+a_{3})(x+a_{4})$
		& $[7,3,5]$
		& Yes \\
		\hline
		
		$(x+a_{1})(x+a_{2})(x+a_{3})(x+a_{4})(x+a_{5})$
		& $[7,2,6]$
		& Yes \\
		\hline
			\end{tabular}
				\caption{Parameters of skew constacyclic codes generated by factors of $x^{7}-4$.}
	\label{tab:x7minus4}
\end{table}
The factors of  the polynomial \(x^{7}-1\) in the skew
polynomial ring \(\mathbb{F}_{7^{7}}[x;\theta]\) as given below,
\[
x^{7}-1
=
(x+a_1)(x+a_2)(x+a_3)(x+a_4)(x+a_5)(x+a_6)(x+a_7),
\]
where
\[
\begin{aligned}
	a_1 &= w^6 + 3w^5 + 5w^4 + 6w^3 + 5w^2 + 5w + 6, \\
	a_2 &= w^6 + w^4 + 4w^2 + 6w, \\
	a_3 &= 3w^6 + w^5 + 3w^4 + 6w^3 + w^2 + w, \\
	a_4 &= 3w^5 + w^3 + 3w^2 + 2w + 4, \\
	a_5 &= 2w^6 + 5w^5 + 6w^4 + 5w^2 + w + 6, \\
	a_6 &= 5w^6 + 2w^5 + 3w^4 + w^3 + 6w^2 + 3w + 3, \\
	a_7 &= 2w^6 + 2w^5 + 5w^4 + w^3 + 2w^2 + w + 2.
\end{aligned}
\]
The MDS skew $\theta$-cyclic codes of length $7$ generated by the factors of $x^{7}-1$ over $\mathbb{F}_{7^7}$, as listed in \Cref{tab:x7minus1}.
\begin{table}[h]
	\centering

	\renewcommand{\arraystretch}{1.2}
	\setlength{\tabcolsep}{4pt}
	\small
	\begin{tabular}{|c|c|c|}
		\hline
		Generator Polynomial & $[n,k,d]$ & MDS \\
		\hline
		
		$(x+a_{1})(x+a_{2})$
		& $[7,5,3]$
		& Yes \\
		\hline
		
		$(x+a_{1})(x+a_{2})(x+a_{3})$
		& $[7,4,4]$
		& Yes \\
		\hline
		
		$(x+a_{1})(x+a_{2})(x+a_{3})(x+a_{4})$
		& $[7,3,5]$
		& Yes \\
		\hline
		
		$(x+a_{1})(x+a_{2})(x+a_{3})(x+a_{4})(x+a_{5})$
		& $[7,2,6]$
		& Yes \\
		\hline
		
	\end{tabular}
		\caption{Parameters of skew cyclic codes generated by factors of $x^{7}-1$.}
	\label{tab:x7minus1}
\end{table}
Now we construct some examples of MDS codes corresponding to \Cref{3p^s 2nd theorem F} i.e., in the case when $p \equiv 2 \mod{3}$.
\subsection{\boldmath{$p \equiv 2 \mod{3}$}}
We consider the finite field \( \mathbb{F}_{5^{5}} \) and let
$\theta \in \mathrm{Aut}(\mathbb{F}_{5^{5}})$
be the  automorphism defined by
$
\theta(a)=a^{5}, ~ \forall a \in \mathbb{F}_{5^{5}}.
$

Let us consider the code of  length 
$15.$ Observe that
$|\theta|=5,$
which satisfies the condition
$
|\theta| \mid 5.
$
According to \Cref{3p^s 2nd theorem F} every skew \( \theta \)-cyclic code
\( \mathcal{C} \) of length \(15\) over \( \mathbb{F}_{5^{5}} \)
admits the decomposition
\[
\mathcal{C}=\mathcal{C}_{1}\oplus \mathcal{C}_{2},
\]
where
\[
\mathcal{C}_{1}
\text{ and }
\mathcal{C}_{2}
\text{ are left ideals of }
\frac{\mathbb{F}_{5^{5}}[x;\theta]}
{\langle x^{5}-1\rangle}
\text{ and }
\frac{\mathbb{F}_{5^{5}}[x;\theta]}
{\langle x^{10}+x^5+1\rangle},
\]
respectively.  Let $\omega$ be a $(5^5-1)^{th}$ primitive root of an irreducible polynomial of degree $5$ in $\mathbb{F}_5[x]$. 
The factorization of \(x^5-1\) in $\mathbb{F}_{5^5}[x; \theta]$ is given by
\[
\begin{aligned}
	x^5-1
	&=(x+a_1)(x+a_2)(x+a_3)(x+a_4)(x+a_5),
\end{aligned}
\]
where
\[
\begin{aligned}
	a_1&=4w^4+2w^3+2w^2+4w,\\
	a_2&=2w^4+2w^3+3w^2+w+4,\\
	a_3&=4w^4+4w^3+3w+2,\\
	a_4&=2w^4+w^3+4w^2+3w+2,\\
	a_5&=3w^4+2w^3+4w^2+2w+4.
\end{aligned}
\]
The MDS skew $\theta$-cyclic codes of length $5$ generated by the  factors of $x^5-1$ over \(\mathbb{F}_{5^5}\), as listed in \Cref{tab:skewcodes55}.
\begin{table}[ht]
	\centering
	\renewcommand{\arraystretch}{1.3}
	\begin{tabular}{|c|c|c|}
		\hline
		Generator polynomial & $[n,k,d]$ & MDS \\ 
		\hline
		
		$(x+a_1)(x+a_2)$
		& $[5,3,3]$& Yes \\
		\hline
		
		$(x+a_1)(x+a_2)(x+a_3)$
		& $[5,2,4]$ & Yes \\
	
		\hline
		
		$(x+a_3)(x+a_4)(x+a_5)$
		& $[5,2,4]$& Yes \\
		\hline
		
	\end{tabular}
	\caption{Parameters of skew cyclic codes generated by factors of $x^5-1$.}
			\label{tab:skewcodes55}
\end{table}
The factors of  the polynomial \(f(x)=x^{10}+x^5+1\) in the skew
polynomial ring \(\mathbb{F}_{5^5}[x;\theta]\) is given by,
\[
x^{10}+x^5+1
=
(x^2+ f_1)(x^2+f_2)(x^2+f_3)(x^2+f_4)(x^2+f_5),
\]
where
\[
\begin{aligned}
	f_1&=
	(4w^4+3w^3+2w^2+3w+1)x
	+3w^3+2w^2+w+2,\\[1mm]
	f_2&=
	(3w^4+w^3+4w^2+w+4)x
	+4w^4+w^3+3w^2+4w+2,\\[1mm]
	f_3&=
	(w^4+4w^3+4w)x
	+w^4+3w^3+3w^2+w,\\[1mm]
	f_4&=
	4x+2w^4+w^3+4w^2+3w,\\[1mm]
	f_5&=
	(2w^4+4w^3+3w+1)x
	+4w^4+2w^3+2w^2+2w+4.
\end{aligned}
\]
The MDS skew $(f,\theta)$-polycyclic codes of length $10$ generated by the factors of $x^{10}+x^{5}+1$ over $\mathbb{F}_{7^7}$, as listed in \Cref{tab:x10x51codes}.
\begin{table}[ht]
	\centering

	\renewcommand{\arraystretch}{1.4}
	\begin{tabular}{|c|c|c|}
		\hline
		Generator polynomial & $[n,k,d]$ & MDS \\
		\hline
		
		$x^2+f_1$
		& $[10,8,3]$
		& Yes \\
		\hline
		
		$(x^2+f_1)(x^2+f_2)$
		& $[10,6,5]$
		& Yes \\
		\hline
		
	$(x^2+f_1)(x^2+f_2)(x^2+f_3)$
		& $[10,4,7]$
		& Yes \\
		\hline
		$(x^2+f_3)(x^2+f_4)(x^2+f_5)$
		& $[10,4,7]$
		& Yes \\
		\hline
		
	\end{tabular}
		\caption{Parameters of skew polycyclic codes generated by factors of $x^{10}+x^5+1$.}
	\label{tab:x10x51codes}
\end{table}

\begin{remark}
When $\theta$ is identity and $g(x)=(x^2+f_3)(x^2+f_4)(x^2+f_5)$ where $f_3, f_4, f_5$ as defined above.	Then the  polycyclic code generated by $g(x)$ has parameters $[10,4,6]$, whereas the corresponding skew polycyclic code in \Cref*{tab:x10x51codes} has parameters $[10,4,7]$. Hence, the skew structure induced by the automorphism
	$\theta(a)=a^5$ improves the minimum distance in this case.
	
	On the other hand, consider the polynomial $	x^{10}-x^5+1$
		over \(\mathbb{F}_{5^5}\), where \(\theta(a)=a^5\) for all \(a\in \mathbb{F}_{5^5}\) and let  $\omega$ be a $(5^5-1)^{th}$ primitive root of an irreducible polynomial of degree $5$ in $\mathbb{F}_5[x]$. 
	In view of \Cref{sub 5.1}, let
	$
	f(x)=
	(x^2+a_1)(x^2+a_2)(x^2+a_3),
	$
	where
	\[	\begin{aligned}
		a_1&=
		(4w^3+4w+1)x
		+2w^4+3w^3+w^2+3w+3,\\
		a_2&=
		(2w^4+4w^3+4w^2+2w)x
		+4w^4+w^3+4w^2+2w+4,\\
		a_3&=
		(4w^4+2w^3+2w^2+3)x
		+4w^4+3w^3+2w^2+3w.
	\end{aligned}
	\]
	
	The skew polycyclic code of length \(10\) generated by \(f(x)\) over 
	\(\mathbb{F}_{5^5}[x;\theta]\) has parameters \([10,4,6]\), whereas the corresponding polycyclic code generated by the same polynomial in  \(\mathbb{F}_{5^5}[x]\) has parameters \([10,4,7]\). 
	
	Therefore, these examples show that the effect of the skew structure on the minimum distance is not uniform; depending on the defining polynomial, the skew polycyclic code may have either better or worse minimum distance than the corresponding polycyclic code.
\end{remark}

			\bibliographystyle{amsalpha}
			\bibliography{SkewConsta}

		\end{document}